\newtheorem{observation}{Observation}[section]
\colorlet{transblue}{blue!40!white}
\colorlet{transred}{red!40!white}
\colorlet{transgreen}{green!40!white}
\newmdenv[
backgroundcolor=gray!5,
linecolor=black,
linewidth=0.7pt,
roundcorner=6pt,
innertopmargin=12pt,
innerbottommargin=12pt,
innerleftmargin=10pt,
innerrightmargin=10pt,
skipabove=12pt,
skipbelow=12pt
]{fancybox}
\crefname{enumi}{Condition}{Conditions}
\definecolor{darkgreen}{rgb}{0.0, 0.39, 0.0}
\definecolor{crimsonglory}{rgb}{0,0,0}%{0.75, 0.0, 0.2}
\newtheorem{example}{Example}
\newtheorem{theorem}{Theorem}[section]
\newtheorem{corollary}[theorem]{Corollary}
\newtheorem{definition}[theorem]{Definition}
\def\GrabProofArgument[#1]{ #1: \egroup\ignorespaces}
\def\proof{\noindent\textbf\bgroup Proof%
	\@ifnextchar[{\GrabProofArgument}{. \egroup\ignorespaces}}
\tikzstyle{startstop} = [rectangle, rounded corners, minimum width=3cm, minimum height=1cm, text centered, draw=black, fill=red!30]
\tikzstyle{process} = [rectangle, minimum width=3cm, minimum height=1cm, text centered, draw=black, fill=orange!30]
\tikzstyle{decision} = [diamond, minimum width=3cm, minimum height=1cm, text centered, draw=black, fill=green!30]
\tikzstyle{arrow} = [thick,->,>=stealth]
\tikzset{
agentj/.style={circle, draw, minimum size=7mm, font=\small, inner sep=1pt, fill=agentj}, % Pink
agenti/.style={circle, draw, minimum size=7mm, font=\small, inner sep=1pt, fill=agenti}, % Blue
bundle/.style={rectangle, draw=orange!70!black, rounded corners,
minimum width=0.7cm, minimum height=0.5cm, fill=orange!30, opacity=1, font=\scriptsize, align=center},
heavy/.style={-{Stealth[length=2mm]}, thick, red, bend left=15},
heavyrev/.style={-{Stealth[length=2mm]}, thick, red, bend right=15}, 
light/.style={-{Stealth[length=2mm]}, thick, green!70!black, bend left=15},
rotation/.style={-{Stealth[length=2mm]}, thick, magenta},
		choice/.style={->, dashed},
subpool/.style={rectangle, draw=blue!70!black, thick,
	rounded corners, minimum width=1cm, minimum height=0.6cm,
	fill=white, opacity=1, font=\scriptsize, align=center},
smallset/.style={rectangle, draw=green!70!black, thick,
	rounded corners, minimum width=0.8cm, minimum height=0.4cm,
	fill=green!20, opacity=1},
poolimg/.style={inner sep=0pt, anchor=center, opacity=0.8, minimum width=2.5cm, minimum height=1.0cm},
rotind/.style={circle, draw=white, thick, minimum size=8mm},
ret/.style={    decorate,
	decoration={
		snake,               % use the ?snake? (wavy) decoration
		amplitude=1.5pt,     % controls wave height
		segment length=6pt,  % controls wave wavelength
		post length=7pt,     % no extra straight segment at end
		pre length=3pt       % no extra straight segment at start
	},
	thick,                 % make the line a bit thicker (optional)
	-{Stealth[length=2mm]},
	blue  },
	good/.style={
		rectangle,
		draw=black!80,
		fill=white,
		rounded corners=3pt,
		font=\small\bfseries,
		inner sep=2pt,
		text=black
	}
}
\definecolor{skyblue}{RGB}{135,206,235}  % Skyblue
\definecolor{coral}{RGB}{255,127,80}   % Coral
\definecolor{limegreen}{RGB}{50,205,50}    % Limegreen
		\definecolor{agent1}{RGB}{173,216,230}
\definecolor{agent2}{RGB}{255,182,193}
\definecolor{agent3}{RGB}{152,251,152}
\newcommand{\instance}{\mathcal{I}} % instance of problem
\newcommand{\voters}{\mathcal{V}} % voter set
\newcommand{\voter}{v}
\newcommand{\candidates}{\mathcal{C}} % candidate set
\newcommand{\candidate}{c}
\newcommand{\groups}{\mathcal{G}} % group set of voters
\newcommand{\group}{g}
\newcommand{\profile}{\pi} % preference profile
\newcommand{\mech}{{\Psi}} % mechanism
\newcommand{\dis}[2]{{\small\textsf{d}(#1,#2)}} % distance
\newcommand{\distortion}{\mathsf{D}} % distortion
\newcommand{\size}{{n}} % size of each group
\newcommand{\w}{\textsf{w}} % winner
\newcommand{\rep}{\textsf{w}} % winner
\newcommand{\opt}{\textsf{o}} % optimal candidate
\newcommand{\randdet}{{\textsf{rand-det}}}
\newcommand{\randrand}{{\textsf{rand-rand}}}
\newcommand{\detdet}{{\textsf{det-det}}}
\newcommand{\detrand}{\textsf{det-rand}}
\newcommand{\avgavg}{{\small\textsf{avg-avg}}}
\newcommand{\avgmax}{{\small\textsf{avg-max}}}
\newcommand{\maxavg}{{\small\textsf{max-avg}}}
\newcommand{\maxmax}{{\small\textsf{max-max}}}
\newcommand{\maxx}{{\small\textsf{max}}}
\newcommand{\avgg}{{\small\textsf{avg}}}
\newcommand{\topp}[1]{\textsf{top}(#1)} % top-ranked candidate of a voter
\newcommand{\unusedvar}{l}
\newcommand{\cost}{\textsf{cost}} 
\newcommand{\expected}[1]{\mathrm{E}\!\left[#1\right]}
\newcommand{\vstar}[2]{\voter^{*}(#1,#2)} % the most distant agent 
\newcommand{\vstarstar}[1]{\voter^{**}(#1)} % the overal most distant agent 
\newcommand{\gstar}{g^{*}} % the group with the max cost just for wg
\newcommand{\fin}{\mathsf{f}_{in}} % in-group voting rule
\newcommand{\fov}{\mathsf{f}_{ov}} % over-group voting rule
\newcommand{\falpha}{\mathsf{f}_{\alpha}}
\newcommand{\fur}{\mathsf{f}_{ur}} % uniform
\newcommand{\fun}{\mathsf{f}_{par}} % pareto
\newcommand{\fpmpar}{\mathsf{f}_{pm-par}} % plurality matching with pareto efficiency
\newcommand{\fpm}{\mathsf{f}_{pm}} % plurality matching
\newcommand{\frd}{\mathsf{f}_{rd}} % random dictatorship
\newcommand{\mad}{\mathsf{m}_{ad}} % arbitrary dictator mechanism
\newcommand{\malphabeta}{\mathsf{m}_{\alpha\beta}} % \alpha-in-\beta-over mechanism
\newcommand{\fbeta}{\mathsf{f}_{\beta}} % arbitrary dictatorship
\newcommand{\tour}[3]{\mathcal{T}(#1,#2,#3)} %bias tournament
\newcommand{\movetofirst}[2]{#1 {\small\uparrow} #2} %notation for promotion operation
\newcounter{proccnt}
\newcommand{\konote}[1]{}
\title{Tight Bounds On the Distortion of Randomized and Deterministic Distributed Voting}
\author{
	MohammadAli Abam\thanks{Sharif University of Technology, Tehran, Iran}
	\and
	Davoud Kareshki\footnotemark[1] \and
	Marzieh Nilipour\footnotemark[1] \and
	MohammadHossein Paydar\footnotemark[1] \and
	Masoud Seddighin\thanks{Tehran Institute for Advanced Studies (TeIAS), Khatam University, Tehran, Iran}
}
			\definecolor{darkgreen}{rgb}{0.0, 0.5, 0.0}
		\definecolor{agentk}{RGB}{152,251,152}  % Green
\definecolor{agenti}{RGB}{173,216,230}  % Blue
\definecolor{agentj}{RGB}{255,182,193}  % Pink
\begin{document}
	\newcommand{\ignore}[1]{}
	\renewcommand{\theenumi}{\roman{enumi}.}
	\renewcommand{\labelenumi}{\theenumi}
	\sloppy
	\date{} %\date{\today}
	\newenvironment{subproof}[1][\proofname]{%%
		\renewcommand{\Box}{ \blacksquare}%
		\begin{proof}[#1]%
		}{%
		\end{proof}%
	}
	\maketitle
	
	\thispagestyle{empty}
	\allowdisplaybreaks
	
	\vspace{0.5cm}
	\begin{abstract}
		%We investigate metric distortion in distributed voting, where $n$ voters are divided into $k$ groups that each select a local representative, and a final winner is chosen from these representatives. This model captures real-world systems such as the U.S. presidential elections, where state-level decisions determine the national outcome. Our analysis focuses on four fundamental cost-minimization objectives introduced by Anshelevich \et~\cite{anshelevich2022distortion}: the average of averages ($\avgavg$), average of maxima ($\avgmax$), maximum of averages ($\maxavg$), and maximum of maxima ($\maxmax$). We provide improved bounds for both deterministic and randomized rules which provides a near-complete picture of metric distortion in distributed voting.
%
%For deterministic rules, we tighten the distortion gaps: we establish a tight lower bound of $5$ for $\maxavg$, improving the previous bound of $2+\sqrt{5}$, reduce the upper bound for $\avgmax$ from $11$ to $9$, and show a tight upper bound of $3$ for $\maxmax$, improving the previous bound of $5$.
%
%
%For randomized rules, we analyze two cases: (i) only the second stage is randomized, and (ii) both stages may be randomized. In the first case, we prove tight bounds: $5\!-\!2/k$ for $\avgavg$, $3$ for $\avgmax$ and $\maxmax$, and $5$ for $\maxavg$. In the second, we show tight bounds of $3$ for $\maxavg$ and $\maxmax$, and nearly tight bounds for $\avgavg$ and $\avgmax$ in the ranges $[3\!-\!2/n,\ 3\!-\!2/(kn^*)]$ and $[3\!-\!2/n,\ 3]$, respectively, where $n^*$ is the size of the largest group.

We study metric distortion in distributed voting, where $n$ voters are partitioned into $k$ groups, each selecting a local representative, and a final winner is chosen from these representatives (or from the entire set of candidates). This setting models systems like U.S. presidential elections, where state-level decisions determine the national outcome. We focus on four cost objectives from \citep{anshelevich2022distortion}: $\avgavg$, $\avgmax$, $\maxavg$, and $\maxmax$. We present improved distortion bounds for both deterministic and randomized mechanisms, offering a near-complete characterization of distortion in this model.

For deterministic mechanisms, we reduce the upper bound for $\avgmax$ from $11$ to $7$, establish a tight lower bound of $5$ for $\maxavg$ (improving on $2+\sqrt{5}$), and tighten the upper bound for $\maxmax$ from $5$ to $3$.

For randomized mechanisms, we consider two settings: (i) only the second stage is randomized, and (ii) both stages may be randomized. In case (i), we prove tight bounds: $5\!-\!2/k$ for $\avgavg$, $3$ for $\avgmax$ and $\maxmax$, and $5$ for $\maxavg$. In case (ii), we show tight bounds of $3$ for $\maxavg$ and $\maxmax$, and nearly tight bounds for $\avgavg$ and $\avgmax$ within $[3\!-\!2/n,\ 3\!-\!2/(kn^*)]$ and $[3\!-\!2/n,\ 3]$, respectively, where $n^*$ denotes the largest group size. %Thus, the bound of $3\!-\!2/n$ for the \avgavg\, objective is tight when all group sizes are equal.

	\end{abstract}
	\section{Introduction}	
	%Introduction
\label{sec:intro}

In social choice theory, a voting rule is a function that takes agents' preferences over alternatives and selects one as the final outcome. Preferences are usually represented as ranked lists, and the goal is to design a voting rule that best reflects these preferences.

How can we evaluate whether a voting rule is appropriate? There are both \emph{axiomatic} and \emph{quantitative} benchmarks for assessing outcomes \citep{arrow2010handbook,brandt2016handbook,procaccia2006distortion}. In this paper, we focus on one of the most prominent quantitative measures: \emph{distortion}. The idea is simple: each agent has hidden numerical values—either \emph{costs} or \emph{utilities}—for the alternatives, and their ordinal rankings reflect these values. Suppose our ultimate goal is to optimize an objective function $\phi$, such as social cost, maximum cost, or total utility, based on these hidden values. However, since the voting rule only has access to the agents' ordinal preferences—not the numerical values—it may select a suboptimal outcome with respect to $\phi$.
The distortion of a voting rule captures how far its chosen outcome can be from the optimal one in the worst case. It is defined as the ratio between the value of $\phi$ for the selected outcome and the value of $\phi$ for the optimal alternative in the worst case.

Since its introduction by \citet{procaccia2006distortion}, distortion has been a consistent focus of investigation—not only in voting, but also in related social choice problems such as facility location \citep{feldman2016voting,chan2021mechanism,anshelevich2021ordinal,kanellopoulos2023truthful} and matching \citep{anshelevich2021ordinal,amanatidis2022few,anari2023distortion,latifian2024distortion}. Still, the core of the literature lies in voting, with particularly rich results when costs of agents form a metric space \citep{anari2023distortion,anshelevich2018approximating,munagala2019improved,gkatzelis2020resolving,anshelevich2017randomized,charikar2022metric,caragiannis2022metric,anagnostides2022metric,kizilkaya2023generalized,bagheridelouee2024metric,ghodsi2019distortion}. For a comprehensive overview of distortion in voting, we refer to the survey by \citet{anshelevich2021distortion}.

In this paper, we study the distortion in the metric setting when the voting process is distributed. Unlike centralized voting, in many large-scale scenarios, outcomes emerge via a two-stage manner: decisions are made locally within separate groups of agents, the local outcomes are then aggregated into a final outcome. A notable example is the U.S. presidential election, where each state selects a winner, and the national outcome is determined by a weighted aggregation of the state-level results. 
More formally, a distributed voting mechanism is a pair $(\fin, \fov)$, where
\begin{enumerate}
	\item $\fin$ is an \emph{in-group} voting rule that selects a local winner for each group based solely on the preferences of agents within that group.
	\item Assuming $R$ is the set of local winners, $\fov$ is an \emph{over-group} voting rule that selects the final winner based on the preferences of $R$ over all alternatives \citep{anshelevich2022distortion} or local winners \citep{filos2024revisiting}.
\end{enumerate}

The study of distortion in distributed voting was pioneered by \citet{filos2020distortion}, who extended the notion of distortion to the utility-based distributed scenario. Later, \citet{anshelevich2022distortion} investigated distributed voting in the metric cost setting.
In the context of distributed voting, since decisions occur in two stages, it is natural to define separate cost objectives for each level. Building on this, \citet{anshelevich2022distortion} introduced four objectives combining \emph{average} and \emph{maximum} costs within and across the groups: $\avgavg$, $\maxavg$, $\avgmax$, and $\maxmax$. 
In the deterministic setting, they proved constant upper and lower bounds for all objectives in both general and line metric spaces, summarized in \Cref{tab:detresults}.
Later, \citet{voudouris2023tight} focused on the line metric and proposed two simple mechanisms for the \avgmax\, and \maxavg\, objectives. These mechanisms achieve an upper bound of $2+\sqrt{5}$, closing the corresponding gap derived by \citet{anshelevich2022distortion}.

\begin{table}[t]
	\centering
	    \renewcommand{\arraystretch}{1.3}
	\begin{tabular}{|c|c|c|}
		\hline
		\rowcolor{gray!20}
		\small{Cost objective} & \small{General metric} & \small{Line metric} \\
		\hline
		\hline
		\avgavg & $[7, 11]$ \citep{anshelevich2022distortion} & $7$ \citep{anshelevich2022distortion} \\
		\hline
		\avgmax & $[2 + \sqrt{5}, 11]$ \citep{anshelevich2022distortion} & $2 + \sqrt{5}$ \citep{voudouris2023tight} \\
		\hline
		\maxavg & $[2 + \sqrt{5}, 5]$ \citep{anshelevich2022distortion} & $2 + \sqrt{5}$ \citep{voudouris2023tight} \\
		\hline
		\maxmax & $[3, 5]$ \citep{anshelevich2022distortion} & $3$ \citep{anshelevich2022distortion} \\
		\hline
	\end{tabular}
	\vspace{0.2cm}
	\caption{A summary of deterministic results for distributed mechanisms with respect to different cost objectives. Each single value indicates a tight bound.}
	\label{tab:detresults}
\end{table}

As shown in \Cref{tab:detresults}, distributed voting on the line metric is well-understood, with tight distortion bounds already achieved. We therefore turn to general metric spaces and explore whether randomization can also improve distortion in the distributed setting. This work presents the first formal investigation into randomized distributed mechanisms within the metric setting.

In this paper, we make significant progress on the distortion of distributed voting mechanisms in two main directions. First, we improve the existing distortion bounds of \emph{deterministic} mechanisms with respect to the $\avgmax$, $\maxavg$, and $\maxmax$ objectives.
Second, we explore rules that incorporate \emph{randomized} mechanisms—either in the second stage only, or in both stages—referred to as $\randdet$ and $\randrand$, respectively. The output of a randomized mechanism is a probability distribution over the alternatives, rather than a single winner. %Randomized rules are well-established in the study of distortion and have been extensively studied in the literature \cite{anshelevich2017randomized,charikar2022metric,boutilier2015optimal,ebadian2023explainable,ebadian2024optimized,ebadian2024boosting}. 
For both the $\randdet$ and $\randrand$ mechanisms, we prove tight bounds for almost all of the objectives. %A summary of our results is shown in \Cref{tab:ourresults}.% In \Cref{sec:results}, we present our results, outline our techniques, and introduce the tools developed for bounding distortion in the distributed setting.

		\subsection{Further Related Work}
		%Related work

The most relevant studies to our work \citep{anshelevich2022distortion,voudouris2023tight} are discussed in \Cref{sec:intro}. Here, we briefly review other related studies. Since Procaccia and Rosenschein’s seminal work (\citeyear{procaccia2006distortion}), research on distortion in social choice problems has expanded, covering utilitarian settings \citep{boutilier2015optimal,caragiannis2011voting,ebadian2023explainable,ebadian2024optimized,bedaywi2025distortion}, metric settings \citep{anshelevich2018approximating,gkatzelis2020resolving,jaworski2020evaluating,kizilkaya2022plurality,charikar2022metric,caragiannis2022metric,charikar2024breaking,bagheridelouee2024metric}, and combined approaches \citep{gkatzelis2023best}.

\paragraph{Deterministic mechanisms.} 
\citet{anshelevich2018approximating} pioneered the study of distortion for the metric framework. Using a simple example, they show that the distortion of any deterministic voting rule is at least $3$. \citet{gkatzelis2020resolving} proposed an elegant and intricate voting rule, \emph{Plurality Matching}, which achieves a tight distortion of $3$. Next, \citet{kizilkaya2022plurality} attained the same upper bound with a simpler voting rule, \emph{Plurality Veto}. 

\citet{filos2020distortion} pioneered distortion analysis in distributed voting under the utilitarian framework. 
Their work extended to other social choice problems, including facility location \citep{filos2024distortion}, aiming to select a single location from a set of alternatives. 
More recently, \citet{voudouris2025metric} investigated the distributed distortion in obnoxious voting, where alternatives are undesirable. 
%establishing tight bounds on the best possible distortion of cardinal and ordinal mechanisms. For cardinal mechanisms, the distortion is tightly bounded by $2\min\{m,k\}-1$ in general metric spaces, improving to $3$ on the line metric. For ordinal mechanisms, the tight bound is $4\min\{m,k\}-1$ in general metric spaces, with a refined bound of $7$ on the line metric.

\paragraph{Randomized mechanisms.}
Unlike deterministic voting rules, randomized rules can achieve distortion below~3. 
\citet{anshelevich2017randomized} prove that the metric distortion of \emph{Random Dictatorship} is at most $3 - 2/n$, whit $n$ agents, and establish a lower bound of 2 for any randomized voting rule. \citet{kempe2020communication} improves the upper bound for \emph{Random Dictatorship} to $3 - 2/m$, where $m$ is the number of candidates.
\citet{charikar2022metric} further raise the lower bound for any randomized rule to $2.112$. 
Recently,~\citet{charikar2024breaking} reduce the upper bound to $2.753$.

In the context of distributed voting, \citet{filos2024revisiting} investigated randomized mechanisms under the utilitarian framework, establishing distortion bounds in various cases.
In ordinal setting, they proved a distortion of $\Theta(km^{2})$ for \emph{randomized-of-deterministic} mechanisms, where $k$ denotes the number of groups and $m$ the number of candidates. For \emph{randomized-of-randomized} mechanisms, they showed that the distortion is bounded between $\Omega(\sqrt{m})$ and $O(\sqrt{m\log m})$. They also introduced strategyproof mechanisms that achieve low distortion.

		\subsection{Our Contributions}\label{sec:results}
		%contributions 

Our results provide improved upper and lower bounds on the distortion of distributed mechanisms across various combinations of deterministic and randomized voting rules and cost objectives. As summarized in \Cref{tab:ourresults}, most of our bounds are tight—despite the fact that our proposed mechanisms are simple. 
In addition to general metric spaces, we also analyze the Euclidean setting and derive corresponding bounds under this restriction.

\paragraph{Randomized Distributed Mechanisms.}

Previous work on metric distortion in the distributed setting has focused exclusively on deterministic voting rules \citep{anshelevich2022distortion,voudouris2023tight,amanatidis2025metric,voudouris2025metric}. In this paper, we take a significant step toward understanding  randomized mechanisms in distributed voting. 
We study two natural classes of randomized mechanisms---\randdet\ and \randrand---within general metric spaces, and analyze their performance with respect to all the four objectives. See \Cref{tab:ourresults} for an overview of our results.

\textbf{\randdet\ mechanisms}, defined as pairs $(\fin, \fov)$, where $\fin$ is a deterministic voting rule and $\fov$ is a randomized one.
We derive several tight distortion bounds with respect to the all objectives in \Cref{sec:randdet}.

\begin{itemize}
	\item \textbf{\maxmax, \avgmax:} For both objectives, we derive a tight distortion bound of $3$. The lower bound is established through a basic example within a single group on a line metric, simplifying the \maxmax\, and \avgmax\, objectives to \maxx.
	The upper bound is proven by a distributed mechanism that first selects a representative for each group with the \emph{Plurality Matching} rule and then chooses the final winner uniformly at random.
	
	\item \textbf{\maxavg:} We establish a tight distortion bound of $5$. The lower bound is proven using a line metric and a novel tool we introduce, called the \emph{Bias Tournament}, which may be of independent interest. For the upper bound, we show that applying a deterministic in-group rule with a distortion at most $\alpha \ge 3$, followed by the \emph{Random Dictatorship} rule\footnote{Refer to \Cref{sec:pre} for the formal definition.}, achieves an overall distortion of at most $\alpha+2$. Since the best achievable value of $\alpha$ is 3 (via the \emph{Plurality Matching} rule), this yields a matching upper bound of 5.
	
	\item \textbf{\avgavg:} We prove a tight distortion bound of $5 - \nicefrac{2}{k}$. Obtaining this bound for the \avgavg\, objective is the most challenging aspect of the \randdet\, mechanisms. The lower bound construction, though similar to that of the \maxavg\, objective, requires a more delicate analysis to extract the $\nicefrac{2}{k}$ improvement. Once again, we employ the Bias Tournament and model the metric space via shortest-path distances in a graph.
\end{itemize}

In \Cref{sec:randrand}, we analyze \textbf{\randrand\ mechanisms}, defined as pairs $(\fin, \fov)$ comprising of two randomized voting rules, and derive tight or near-tight distortion bounds.
All the upper bounds are obtained via a distributed mechanism that initially applies the \emph{Random Dictatorship} rule within each group and then randomly selects the final winner from the chosen representatives with uniform probability.

\begin{itemize}
	\item \textbf{\maxmax, \maxavg:} For both objectives, we establish a tight distortion bound of $3$. We construct a shared example consisting of $k$ single-voter groups to establish the lower bound, even when the metric space is a line. In this scenario, the \maxmax\, and \maxavg\, objectives both simplify to the \maxx\ objective.
	 
	\item \textbf{\avgmax:} We establish a lower bound of $3-\tfrac{2}{n}$, which nearly matches our upper bound of $3$. The lower bound is proven with an instance where the number of candidates and voters are equal ($n=m$) and there is only a single group ($k=1$).
	Additionally, we conclude a lower bound for the $\maxx$ objective in the centralized setting: We show that any randomized voting rule must have a distortion of at least $3 - \varepsilon$ for any constant $\varepsilon > 0$. This is particularly interesting since even deterministic rules are known to have an upper bound of 3 for the \maxx\ objective \citep{gkatzelis2020resolving}.
	
	\item \textbf{\avgavg:} We establish a nearly tight distortion bound slightly below $3$. For an instance with $k$ single-voter groups on a tree graph, we prove lower bound of $3-\tfrac{2}{n}$. We further derive an upper bound of $3-\tfrac{2}{kn^*}$, where $n^*$ denotes the largest group size. When all groups are of equal size, it yields matching upper and lower bounds. Notably, deriving these bounds is the most challenging aspect of analyzing \randrand, mechanisms.
\end{itemize}

\begin{table}[t]
	\centering
	\begin{tabular}{|l | c | c c |}
		\hline
		\rowcolor{gray!20}
		& Objective & \multicolumn{2}{c|}{Distortion} \\
		\rowcolor{gray!20}
		&  & \scriptsize lower bound & \scriptsize upper bound \\
		\hline
		\hline
		\multirow{4}{*}{\rotatebox{90}{\detdet*}}
		& \avgavg & \textcolor{gray}{$7$ \citep{anshelevich2022distortion}} & \textcolor{gray}{$11$} \citep{anshelevich2022distortion}  \\
		& \avgmax & \textcolor{gray}{$2 + \sqrt{5}$ \citep{anshelevich2022distortion}} & $7$ \scriptsize(\Cref{cor:detdet_avgmax})  \\
		& \maxavg & $5$~\scriptsize(\Cref{th:detdet-maxavg-lower}) & $\textcolor{gray}{5}$ \citep{anshelevich2022distortion} \\
		& \maxmax & $\textcolor{gray}{3}$ \citep{anshelevich2022distortion} & $3$~\scriptsize(\Cref{th:detdet_maxmax}) \\
		\hline
		\multirow{4}{*}{\rotatebox{90}{\randdet}}
		& \avgavg & $5-\frac{2}{k}$~\scriptsize(\Cref{th:randdet-avgavg-lower}) & $5-\frac{2}{k}$~\scriptsize(\Cref{cor:randdet-avgavg-upper})\\
		& \avgmax & $3$~\scriptsize(\Cref{th:randdet-Xmax-lower}) & $3$~\scriptsize(\Cref{th:Xmaxuni}) \\
		& \maxavg & $5$~\scriptsize(\Cref{th:randdet-maxavg-lower}) & $5$~\scriptsize(\Cref{cor:randdet-maxavg-upper}) \\
		& \maxmax & $3$~\scriptsize(\Cref{th:randdet-Xmax-lower}) & $3$~\scriptsize(\Cref{th:Xmaxuni}) \\
		\hline
		\multirow{4}{*}{\rotatebox{90}{\randrand}} 
		& \avgavg & $3-\frac{2}{n}$~\scriptsize(\Cref{th:randrand-avgavg-lower}) & $3-\frac{2}{kn^*}$~\scriptsize(\Cref{th:randrand_avgavg})\\
		& \avgmax & $3-\frac{2}{n}$~\scriptsize(\Cref{th:randrand-avgmax-lower}) & $3$~\scriptsize(\Cref{th:randrand_avgmax}) \\
		& \maxavg & $3$~\scriptsize(\Cref{th:randrand-maxX-lower}) & $3$~\scriptsize(\Cref{th:randrand_maxavg}) \\
		& \maxmax & $3$~\scriptsize(\Cref{th:randrand-maxX-lower}) & $3$~\scriptsize(\Cref{th:randrand_maxmax}) \\
		\hline
	\end{tabular}
	\vspace{0.2cm}
	\caption{An overview of our results for various cost objectives in general metric spaces, with gray-text results indicating those derived from prior work. $n^*$ denotes the size of the largest group. Thus, the bound of $3\!-\!2/n$ for the \avgavg\, objective in \randrand\ is tight when all group sizes are equal. *Note: For the \detdet\ mechanisms, we follow the setting of \citep{anshelevich2022distortion}, where the over-group rule is applied to all candidates.}
	\label{tab:ourresults}
\end{table}

\paragraph{Deterministic Distributed Mechanisms.}
We consider \textbf{\detdet\ mechanisms}, defined as pairs $(\fin, \fov)$ comprising of two independently deterministic voting rules, in \Cref{sec:detdet}. 
We resolve the previously known gaps for the \maxavg\ and \maxmax\ objectives and provide an enhanced upper bound for the \avgmax\ objective.
In this section, we adopt a setting akin to \citep{anshelevich2022distortion}, where $\fov$ selects a winner from the set of \emph{all candidates}, not solely those chosen in the first stage.

\begin{itemize}
	\item \textbf{\avgmax.} 
	We improve the upper bound from \textbf{11} to \textbf{7}. \citet{anshelevich2022distortion} show that combining the in-group and over-group voting rules with distortions $\alpha$ and $\beta$, respectively, yields an overall distortion of $\alpha + \beta + \alpha\beta$. With their best known values ($\alpha = 3$, $\beta = 2$), this gives $11$. We prove that if the in-group rule, $\fin$, merely satisfies the property of pareto efficiency, then the overall distortion is at most $2\beta + 3$, which is \emph{independent} of $\alpha$. This results in a tighter upper bound of $7$, and shows the dominant role of the over-group rule in this setting.
	
	\item \textbf{\maxavg.} 
	We improve the lower bound from $2 + \sqrt{5}$ to \textbf{5}. Interestingly, our lower-bound instance is based on a metric constructed via shortest-path in a graph, rather than a line or Euclidean. This confirms that the upper bound from \citep{anshelevich2022distortion} is indeed tight.	
	
	\item \textbf{\maxmax.} 
	We improve the upper bound on distortion from $\textbf{5}$ to $\textbf{3}$ for general metric spaces. Although a bound of $3$ was previously known for the \emph{line metric}, the general case remained open. We show that a distributed mechanism same as the \emph{Arbitrary Dictator}, proposed by \citet{anshelevich2022distortion}, actually achieves distortion $3$ for any metric space.
\end{itemize}

\paragraph{Bias Tournament.}
The Bias Tournament is a directed graph with one node per candidate. For any pair of candidates \( c_1 \) and \( c_2 \), we add a directed edge from \( c_1 \) to \( c_2 \) if, in a group containing only two voters with preferences \( (c_1, c_2, \ldots) \) and \( (c_2, c_1, \ldots) \)—where the remaining candidates are ordered identically according to a fixed permutation \( \sigma \) over all candidates—the in-group rule \( \fin \) deterministically selects \( c_1 \) as the winner. This construction captures the bias in \( \fin \)'s tie-breaking behavior across candidate pairs.

To analyze the implications of these biases, we use a well-known property of tournaments: any tournament on \( m \) nodes contains at least one node with in-degree at least \( \left\lceil \nicefrac{(m-1)}{2} \right\rceil \). This fact allows us to identify a \emph{losing} candidate—one frequently defeated in pairwise comparisons—and use it as the basis for constructing an instance with high distortion, which helps us to establish the lower bounds for the \randdet\ and \detdet\ mechanisms.

\paragraph{Centralized  Setting.}
Interestingly, one of our lower bound constructions, originally developed for distributed mechanisms also applies to randomized centralized setting. In particular, we analyze the $\max$ cost objective in this setting. Previously, \citet{gkatzelis2020resolving} established an upper bound of 3 for $\max$ using \emph{Plurality Matching}, as defined in \Cref{sec:pre}. In our work, we revisit the $\avgmax$\ objective for the $\randrand$\ mechanisms and prove a distortion lower bound. We then reuse the same construction to show that the distortion of $\max$\ under centralized voting is at least \( 3 - \varepsilon \) for any constant \( \varepsilon > 0 \), thus matching the known upper bound up to an arbitrarily small gap.

\paragraph{Euclidean Space.} 
%\paragraph{An Extension of Lower Bounds.}
Our lower-bound constructions for the $\avgavg$ objective rely on general metric spaces that do not reduce to simpler structures such as the line or Euclidean space. This raises the natural question of whether similarly strong bounds can be achieved in more structured settings. In \Cref{sec:euclidean}, we address this by constructing novel instances within a Euclidean hyper-simplex, obtaining lower bounds of \( \sqrt{5} - \varepsilon \) for the $\randrand$ mechanisms and \( 2 + \sqrt{5} - \varepsilon \) for the $\randdet$ ones, where \( \varepsilon \) is arbitrarily small. 
Specifically, for $\randrand$, we fix an integer $\unusedvar$ and construct an instance with $\unusedvar+2$ candidates and $k=\unusedvar+1$ single-voter groups embedded in $\mathbb{R}^{\unusedvar+1}$. Similarly, for $\randdet$, we construct an instance with $2m$ candidates and $m$ groups of two voters each, embedded in $\mathbb{R}^{m+1}$ (where $m$ is fixed).

	\section{Basic Notations}	
	%Preliminaries
\label{sec:pre}

An instance of the \emph{distributed voting problem} is a tuple $(\voters, \candidates, \groups, \profile, \textsf{d})$. Here, $\voters$ is a set of $n$ voters, and $\candidates$ is a set of $m$ candidates. We use $v_i$ to denote the $i$-th voter and symbols $a$, $b$, $c$, $\opt$, and $\rep$ to refer to candidates. In particular, $\w$ typically denotes the winner and $\opt$ refers to the optimal candidate.

$\groups$ is a partition of voters into $k$ groups, such that each voter belongs to exactly one group. For each group $g \in \groups$, $\size_g$ is the number of voters in $g$. We denote the optimal candidate for group \( g \) by \( \opt_g \). We say an instance is \emph{symmetric}, if all groups have equal sizes. 

Each voter $v_i$ has a strict ranking $\pi_i$ over the set of candidates, representing their ordinal preferences. A preference profile is the collection of preferences from all voters, denoted by $\profile = (\pi_1, \ldots, \pi_n)$. These preferences arise from a cost function based on the underlying metric space~$\textsf{d}$. 
For each voter $\voter$ and candidate $\candidate$, $\dis{\voter}{\candidate}$ denotes the cost of candidate $\candidate$ for voter $\voter$. The voters rank all candidates in increasing order of cost, which means they prefer those who are closer.
The distance function $\small\textsf{d}$ satisfies the standard metric properties of {\emph{non-negativity}}, {\emph{symmetry}}, and the \emph{triangle inequality}. 

We denote the top-ranked candidate of each voter \( \voter \) as \( \topp{\voter} \).
For a candidate \( c \) and a group \( g \in \groups \), \( \vstar{c}{g} = \arg\max_{\voter \in g} \dis{\voter}{c} \) denotes the farthest voter from \( c \) within group \( g \). Similarly, \( \vstarstar{c} = \arg\max_{\voter \in \voters} \dis{\voter}{c} \) represents the farthest voter from \( c \) across all voters. 
We also use \( \cost_g(c) \) to denote the cost of candidate \( c \) restricted to group \( g \), and define \( \gstar(c) = \arg\max_{g \in \groups} (\nicefrac{1}{\size_g}) \sum_{\voter \in g} \dis{\voter}{c} \) that is, the group in which candidate \( c \) incurs the highest average cost.

Given {an instance $\instance$}, various cost objectives can be considered to evaluate the final winner. 
In the distributed voting, we can be even more flexible by applying different objectives at each stage. Following \citep{anshelevich2022distortion}, we consider four cost objectives:

\begin{itemize}
	\item \textbf{Average of averages} (\avgavg):  
	Average the costs within each group, then average across all groups;
	$$
	\avgavg(c \mid \instance) = \frac{1}{k} \sum_{g \in \groups} \frac{1}{\size_g} \sum_{\voter \in g} \dis{\voter}{c} = \frac{1}{k} \sum_{g \in \groups} \cost_g(c).
	$$
	
	\item \textbf{Average of maxima} (\avgmax):  
	Find the most dissatisfied voter in each group, then average their costs;
	$$
	\avgmax(c \mid \instance) = \frac{1}{k} \sum_{g \in \groups} \max_{\voter \in g} \dis{\voter}{c} = \frac{1}{k} \sum_{g \in \groups} \dis{\vstar{c}{g}}{c}.
	$$
	
	\item \textbf{Maximum of averages} (\maxavg):  
	Compute the average cost in each group; return the worst among them;
	$$
	\maxavg(c \mid \instance) = \max_{g \in \groups} ( \frac{1}{\size_g} \sum_{\voter \in g} \dis{\voter}{c}) = \cost_{\gstar(c)}(c).
	$$
	
	\item \textbf{Maximum of maxima} (\maxmax):  
	Return the cost of the most dissatisfied voter overall;
	$$
	\maxmax(c \mid \instance) = \max_{g \in \groups} \max_{\voter \in g} \dis{\voter}{c} = \dis{\vstarstar{c}}{c}.
	$$
\end{itemize}
For simplicity, when the objective is clear from context, we simply write $\cost(c \mid \instance)$. When the instance is also clear, we omit it entirely.

A voting rule $\mathsf{f}$ maps a preference profile $\profile$ to a winning candidate $\w$. This rule may be \emph{deterministic} or involve \emph{randomization}. Next, we define a distributed voting mechanism $\mech=(\fin, \fov)$, which consists of two stages:
\begin{itemize}
	\item \textbf{Stage 1:} Each group $\group$ independently selects a \emph{representative} candidate $\rep_g$ by applying the in-group rule $\fin$ to the preferences of its members. Let $R = \{\rep_g \mid \group \in \groups\}$.
	
	\item \textbf{Stage 2:} The final outcome is chosen by applying the over-group rule $\fov$ to the (centralized) instance $(R, R, \profile^R, \textsf{d})$, where $R$ acts as \emph{both the set of candidates and the set of voters}, and $\pi^R$ denotes the preferences of representatives over one another.
\end{itemize}
For each group $g$, the rule $\fin$ has access only to local information: the group size ($\size_g$) and the preference profile of its voters ($\profile^g$) over all candidates. In contrast, $\fov$ receives the preferences of the selected representatives together with the sizes of all groups.
%\added{In addition to deterministic distributed mechanisms,} we consider two classes of randomized distributed mechanisms. The first, denoted by $\randdet$, involves randomization only in the second stage. In the second, denoted by $\randrand$, both stages may be randomized.

\paragraph{Distortion.} Given a distributed mechanism \( \mech = (\fin, \fov) \), which takes an instance \( \instance = (\voters, \candidates, \groups, \profile, \textsf{d}) \) as input and outputs a winner \( \w = \mech(\instance) \), the expected cost of \( \w \) is defined as
\(\expected{\cost(\w \mid \instance)} = \sum_{g \in \groups} \Pr(\w = \rep_g) \cdot \expected{\cost(\rep_g \mid \instance)},\)
and the expected cost of \( \rep_g \) is defined as
\(\expected{\cost(\rep_g \mid \instance)} = \sum_{c \in \candidates} \Pr(\rep_g = c) \cdot \cost(c \mid \instance),\)
where \( \cost(\cdot) \) denotes one of the four objectives defined earlier. Now, we define the distortion of \( \mech \) as
\[
\distortion(\mech) = \sup_{\instance} \frac{\expected{\cost(\mech(\instance) \mid \instance)}}{\min_{c \in \candidates} \cost(c \mid \instance)}.
\]

\begin{definition}[Pareto Efficiency]\label{def:unan}
	A voting rule is \emph{pareto efficient} if, for any pair of candidates \( x \) and \( y \), if all voters prefer \( x \) to \( y \), then the rule does not select \( y \) as the winner.
\end{definition}
We now present several voting rules (for centralized settings) and distributed mechanisms (for distributed settings) frequently used in this work:

\begin{itemize}
	\item \textbf{Plurality Matching rule} (\( \fpm \)): Introduced by \citet{gkatzelis2020resolving}, this deterministic voting rule achieves a distortion of 3—the best possible among all ordinal rules in the metric setting. 
	
	\item \textbf{Plurality Matching rule with the property of Pareto Efficiency} (\( \fpmpar \)): 
	We introduce a variant of the plurality matching rule that guarantees a pareto efficient winner. Suppose there exists a candidate $c_1$ whose corresponding bipartite graph admits a perfect matching. In this bipartite graph, let one partition represent the "left" side and the other the "right" side. For each voter $\voter$ on the left, let $match_{\voter}$ denote the voter on the right matched to $\voter$. By construction, each $\voter$ prefers $c_1$ to $\topp{match_{\voter}}$. 
	
	If $c_1$ is pareto efficient, we return $c_1$ as the winner. Otherwise, there exists a candidate $c_2$ that all voters prefer to $c_1$. In this case, every voter $\voter$ prefers $c_2$ to $c_1$, and since $\voter$ also prefers $c_1$ to $\topp{match_{\voter}}$, it follows that $\voter$ prefers $c_2$ to $\topp{match_{\voter}}$. Hence, the bipartite graph corresponding to $c_2$ admits the same perfect matching as that of $c_1$.
	
	We may therefore replace $c_1$ with $c_2$. If $c_2$ is pareto efficient, it becomes the winner; otherwise, we repeat the process with $c_2$. This procedure eventually terminates with a candidate whose corresponding bipartite graph admits a perfect matching and is pareto efficient. We denote this algorithm by $\fpmpar$. The winner selected by $\fpmpar$ is both pareto efficient and associated with a perfect matching.
		
	\item \textbf{Random Dictatorship rule} (\(\frd\)): 
	A randomized voting rule in which a voter is selected uniformly at random, and the outcome is that voter's top-ranked candidate \citep{anshelevich2017randomized,kempe2020communication}. In the first stage of the \randrand\ mechanism used to establish upper bounds, we apply this rule.
	
	\item \textbf{Uniform selection rule} (\(\fur\)): 
	A randomized voting rule that selects the winner uniformly at random from the set of candidates. This rule is used in the second stage of both the \randdet\ and \randrand\ mechanisms that establish upper bounds.
	
	\item \textbf{Arbitrary Dictator mechanism} (\(\mad\)):  
	Introduced by \citet{anshelevich2022distortion}, this mechanism operates in two stages.  
	First, each group selects a representative by arbitrarily choosing a voter and taking her top-ranked candidate.  
	Second, the final winner is determined by arbitrarily selecting one of the representatives. This mechanism is employed to analyze the upper bound of the \maxmax\, objective in \detdet.
		
	\item \bm{$\alpha$}\textbf{-in-}\bm{$\beta$}\textbf{-over mechanism} ($\malphabeta$): Proposed by \citet{anshelevich2022distortion}, the $\alpha$-in-$\beta$-over mechanism operates in two deterministic stages, first applying an in-group voting rule with distortion at most $\alpha$, followed by selecting a final winner using an over-group voting rule with distortion at most $\beta$. 
\end{itemize}

\begin{definition}[Promotion]\label{promot}
	Given an order $\sigma$ over a set of candidates and a candidate $c \in \sigma$, the operation $\movetofirst{\sigma}{c}$ returns a new preference $\sigma'$ in which $c$ is moved to the top, and the relative order of all other candidates remains unchanged.
	When multiple $\movetofirst{}$ operations are applied in sequence, they are evaluated from left to right. $
	\movetofirst{\movetofirst{\sigma}{b}}{a}
	$
	first moves $b$ to the top of $\sigma$, then moves $a$ to the top of $\movetofirst{\sigma}{b}$.
\end{definition}

Using the promote operation, we define the Bias Tournament—a special complete directed graph (tournament) over the candidates—which is crucial for establishing lower bounds of the \randdet\ and \detdet\ mechanisms.

\begin{definition}[Bias Tournament]
	Let $f$ be a deterministic voting rule, $\candidates$ a set of candidates, and $\sigma$ an ordering of $\candidates$.  
	The Bias Tournament $\tour{f}{\candidates}{\sigma}$ is a complete directed graph where each vertex corresponds to a candidate in $\candidates$. For every pair of distinct candidates $u$ and $w$, there is a directed edge from $u$ to $w$ if and only if $f$ selects $u$ as the winner in a two-voter election with preferences
	$
	\profile_1 = \movetofirst{\movetofirst{\sigma}{w}}{u}  \text{ and }  \profile_2 = \movetofirst{\movetofirst{\sigma}{u}}{w}.
	$
%	Thus, each edge encodes the outcome of a one-to-one competition between two candidates under $f$.
\end{definition}

\begin{example}\label{ex:tour}
	Let \( \candidates = \{c_1, c_2, c_3\} \) and a deterministic rule \( f \) that selects the candidate with the smallest index among those ranked first by at least one voter. Suppose:  
	(i) between \( c_1 \) and \( c_2 \), the winner is \( c_1 \);  
	(ii) between \( c_1 \) and \( c_3 \), the winner is \( c_1 \);  
	(iii) between \( c_2 \) and \( c_3 \), the winner is \( c_2 \).  
	Then \( \tour{f}{\candidates}{\sigma} \) contains edges \( c_1 \to c_2 \), \( c_1 \to c_3 \), and \( c_2 \to c_3 \). Refer to \Cref{fig:tour-example} for a visual illustration.
	\begin{figure}[t]
		\centering
		\scalebox{1}{\begin{tikzpicture}[->, >=stealth, node distance=2cm, scale=0.8, transform shape, every node/.style={font=\small}]
	
	% Nodes (Candidates)
	
	\node[circle, draw, minimum size=0.1cm] (c1) {$c_1$};
	\node[circle, draw, minimum size=0.1cm, below left of=c1] (c2) {$c_2$};
	\node[circle, draw, minimum size=0.1cm, below right of=c1] (c3) {$c_3$};
	
	% Edges (Wins)
	\draw[->] (c1) -- (c2);
	\draw[->] (c1) -- (c3);
	\draw[->] (c2) -- (c3);
	
\end{tikzpicture}}
		\caption{The Bias Tournament of \Cref{ex:tour}.}
		\label{fig:tour-example}
	\end{figure}
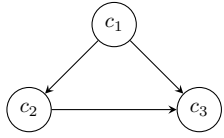
\end{example}

%Bias Tournament is a structural tool used to characterize the behavior of a deterministic voting rule in especial two-voter elections. It is particularly useful in analyzing mechanisms such as \randdet\, and \detdet, where a deterministic inner voting rule is applied within each group to select the representatives. This abstraction is crucial in constructing lower bound instances, as it allows for the identification of particular candidates—such as those with high in-degree in the corresponding bias tournament—who consistently lose across various in-district selections.

\subsection*{Basic Observations}
Now, we present several preliminary observations that lay the groundwork for proving our main theorems. For clarity and consistency, we fix an instance $\instance = (\voters, \candidates, \groups, \profile, \textsf{d})$ throughout this section.

\begin{observation}\label{obs:singlevoter}
	In distributed voting with single-voter groups, let $\mech = (\fin,\fov)$ be a distributed mechanism with finite distortion. Within each group, $\fin$
	must select the top-ranked candidate of each voter as the group representative.
\end{observation}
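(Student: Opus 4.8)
The plan is to argue by contradiction. Suppose that for some group $g$ consisting of a single voter $v$ with preference $\pi$, the in-group rule $\fin$ assigns positive probability to a candidate $c \neq \topp{v}$ (in the deterministic case this just means $\fin$ outputs such a $c$). I would then exhibit a family of single-group instances on which $\mech$ has arbitrarily large distortion, contradicting the finite-distortion hypothesis. The key structural simplification to exploit first is that $\fin$ only sees local information, namely the ranking of the single voter $v$; hence its output on $g$ is pinned down by $\pi$ alone, independent of the rest of the instance, so it suffices to build a single bad instance containing exactly the group $g$.

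Concretely, I would take $k=1$, so that $\groups = \{g\}$ and $R = \{\rep_g\}$ is a (possibly random) singleton. Since $\fov$ is applied to the centralized instance in which $R$ serves simultaneously as the voter set and the candidate set, and $|R|=1$, the over-group rule is forced to output $\rep_g$; thus the winner of $\mech$ equals the representative produced by $\fin$. Moreover, for a single-voter group every one of the four objectives $\avgavg$, $\avgmax$, $\maxavg$, $\maxmax$ collapses to $\dis{v}{\cdot}$, so all four cases are handled at once.

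Next I would realize the ranking $\pi$ by a line metric tuned to blow up the distortion. Place $v$ at the origin and the candidates, in the order dictated by $\pi$, at strictly increasing positive positions on the real line, with $\dis{v}{a} = \delta$ for the top candidate $a = \topp{v}$ and some arbitrarily small $\delta > 0$, and $\dis{v}{c} = 1$, interpolating the intermediate candidates monotonically so that the induced order is exactly $\pi$. This is automatically a valid metric consistent with $\pi$. The optimal objective value is $\min_{c'}\dis{v}{c'} = \delta$, while the expected cost of the winner is at least $\Pr(\rep_g = c)\cdot \dis{v}{c} = \Pr(\rep_g = c)$. Hence the distortion is at least $\Pr(\rep_g = c)/\delta$, which tends to infinity as $\delta \to 0$ because $\Pr(\rep_g = c) > 0$, contradicting finite distortion. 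Therefore $\fin$ must place probability $1$ on $\topp{v}$.

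I do not anticipate a genuine obstacle here; the only point requiring care is keeping the argument uniform across both deterministic and randomized $\fin$ and across all four objectives. The construction above accomplishes this simultaneously, since a single voter trivializes the within-group aggregation and any positive probability mass on a suboptimal $c$ already forces the ratio to diverge as $\delta \to 0$.
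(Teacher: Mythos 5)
Your proof is correct and complete; the paper itself states this observation without any proof, and your argument---using locality of $\fin$ to pin its output down by the ranking alone, reducing to a single-voter $k=1$ instance where all four objectives collapse to $\dis{\voter}{\cdot}$ and $\fov$ is forced to return the lone representative, then driving the optimal cost $\delta$ to zero while the misplaced probability mass keeps the winner's expected cost bounded below by a positive constant---is exactly the reasoning the observation tacitly relies on. No gaps.
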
 

\begin{observation}
	For any deterministic voting rule $f$ and an ordering $\sigma$ over $\candidates$, there exists a candidate with in-degree at least $\left\lceil \frac{m-1}{2} \right\rceil$ in $\tour{f}{\candidates}{\sigma}$.
	\label{pr:bias-tournamnet-indegree}
\end{observation}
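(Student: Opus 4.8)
The plan is to treat $\tour{f}{\candidates}{\sigma}$ as a tournament on $m = |\candidates|$ vertices and invoke a one-line averaging (pigeonhole) argument on the in-degrees. The whole statement is a classical fact about tournaments once the underlying graph is confirmed to be complete, so the proof splits naturally into a well-posedness check followed by double counting.

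First I would verify that the graph is genuinely a tournament, i.e.\ that between every pair of distinct candidates $u,w$ exactly one directed edge is present. In each of the two defining profiles $\profile_1 = \movetofirst{\movetofirst{\sigma}{w}}{u}$ and $\profile_2 = \movetofirst{\movetofirst{\sigma}{u}}{w}$, both voters place $u$ and $w$ in their two top positions (ranked in opposite orders) and every other candidate strictly below both; hence the two-voter election singles out exactly one of $u,w$ as its winner, and the definition then orients the edge from that winner to the loser. Consequently the $\binom{m}{2}$ unordered pairs contribute exactly $\binom{m}{2}$ directed edges, one per pair, so $\tour{f}{\candidates}{\sigma}$ is the complete directed graph asserted in its definition.

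Next I would count in-degrees. Each directed edge contributes exactly $1$ to the in-degree of its head, so the sum of in-degrees over all $m$ vertices equals the number of edges, namely $\binom{m}{2} = \tfrac{m(m-1)}{2}$. The average in-degree is therefore $\tfrac{m-1}{2}$, and some vertex must attain at least the average. Finally I would sharpen the bound to the ceiling: in-degrees are nonnegative integers, so any value that is $\ge \tfrac{m-1}{2}$ is automatically $\ge \left\lceil \tfrac{m-1}{2} \right\rceil$. This produces the desired candidate of in-degree at least $\left\lceil \tfrac{m-1}{2} \right\rceil$.

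I do not expect a genuine obstacle here, since the counting step is routine. The only point requiring care is the completeness check in the first step, that each pairwise election returns one of the two \emph{contested} candidates rather than a third one; this is exactly what guarantees one edge per pair and hence the full edge count $\binom{m}{2}$, without which the averaging bound would degrade. I would therefore emphasize that both contested candidates sit atop both voters' rankings, making the election effectively a head-to-head comparison between $u$ and $w$.
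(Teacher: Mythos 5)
Your proof is correct and follows essentially the same route as the paper's: both count the $\binom{m}{2}$ edges of the tournament, observe that the average in-degree is $\frac{m-1}{2}$, and apply pigeonhole (your integrality step sharpening the average to the ceiling is a detail the paper leaves implicit). Your preliminary completeness check is extra diligence the paper skips because its definition of the Bias Tournament already stipulates a complete directed graph; just be aware that your justification of that step (``both voters rank $u,w$ on top, hence the winner is one of them'') does not actually follow for a fully arbitrary deterministic rule, which could elect a third candidate --- completeness is really a property baked into the definition and valid for the finite-distortion rules the paper applies it to, not a consequence of the profile structure alone.
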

\begin{proof}
	In a Bias Tournament with $m$ candidates, the sum of the in-degrees is equal to the total number of edges, which is $\binom{m}{2} = \frac{m(m-1)}{2}$. Therefore, the average in-degree is $\frac{m-1}{2}$.
	Therefore, at least one candidate must have an in-degree greater than or equal to this average. 
	%$ \frac{m-1}{2} $.
	%Since the in-degree must be an integer, there is at least one candidate with an in-degree of at least $ \left\lceil \frac{m-1}{2} \right\rceil $.
\end{proof}

\begin{observation} 
	Since $\opt_g$ is the optimal candidate in group $g$, we have $\cost_g(\opt_g) \le \cost_g(c)$ for any candidate $c$, including $\opt$. This holds for all objectives ($\avgmax$, $\avgavg$, $\maxmax$\, and $\maxavg$).	
	\label{prop:localcost}
\end{observation}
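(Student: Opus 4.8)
The plan is to treat this as an immediate consequence of the definition of the group optimum $\opt_g$, so that the entire burden is notational rather than mathematical: I only need to pin down what the group-restricted cost $\cost_g(\cdot)$ means for each of the four objectives and then check that $\opt_g$ is, by definition, its minimizer.

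First I would separate each objective into an \emph{outer} aggregation across groups and an \emph{inner} per-group cost. Reading off the definitions in \Cref{sec:pre}, the inner per-group cost is the within-group \emph{average} $\tfrac{1}{\size_g}\sum_{\voter \in g}\dis{\voter}{c}$ for the objectives \avgavg\ and \maxavg, and the within-group \emph{maximum} $\max_{\voter \in g}\dis{\voter}{c}$ for the objectives \avgmax\ and \maxmax. In every case this inner quantity is exactly what is denoted $\cost_g(c)$ (interpreted according to the objective at hand), and the four objectives are recovered by applying either $\tfrac{1}{k}\sum_{g}$ or $\max_g$ to these per-group values.

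With this identification the claim reduces to the fact that $\opt_g$ is, by definition, the candidate minimizing $\cost_g(\cdot)$ over all of $\candidates$. Hence $\cost_g(\opt_g) = \min_{c \in \candidates}\cost_g(c) \le \cost_g(c)$ for every candidate $c$; in particular the inequality holds when $c = \opt$, the global optimum for the overall objective, which is the downstream use we care about.

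I do not expect any genuine obstacle here: there is no triangle-inequality or combinatorial step involved. The only point requiring care is the overloading of $\cost_g$ across objectives—namely, making explicit that ``optimal candidate in group $g$'' means optimal with respect to whichever inner aggregation (average or maximum) the chosen objective uses within a group. Once that is stated, the single inequality is valid uniformly for all four objectives.
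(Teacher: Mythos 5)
Your proposal is correct and matches the paper's treatment: the paper states this as an observation with no written proof, precisely because it follows definitionally—$\opt_g$ is by definition the minimizer of the group-restricted cost $\cost_g(\cdot)$ (average or maximum within $g$, depending on the objective), so $\cost_g(\opt_g) \le \cost_g(c)$ for every $c$, in particular $c = \opt$. Your explicit handling of the overloading of $\cost_g$ across the four objectives is exactly the one point worth making, and nothing further is needed.
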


\begin{observation}
	Since $\cost$(.) is defined as the maximum over $\cost_g(.)$ under the $\maxavg$\, and $\maxmax$\, objectives,	 
	it follows that $\cost_g(\opt) \le \cost(\opt)$, for each group $g$.
	\label{obs:optimalcost}
\end{observation}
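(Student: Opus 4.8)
The plan is to reduce the claim to the elementary fact that a maximum over a finite set dominates each of its members, after unpacking the definitions of the two objectives. First I would recall that for the $\maxavg$ objective we have $\cost(\opt) = \maxavg(\opt \mid \instance) = \max_{g' \in \groups} \cost_{g'}(\opt)$, where $\cost_{g'}(\opt)$ denotes the within-group average cost, and that for the $\maxmax$ objective we have $\cost(\opt) = \maxmax(\opt \mid \instance) = \max_{g' \in \groups} \max_{\voter \in g'} \dis{\voter}{\opt}$, which again has the shape $\max_{g' \in \groups} \cost_{g'}(\opt)$ once $\cost_{g'}(\cdot)$ is read as the within-group maximum cost. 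The key observation is that under both of these objectives the overall cost is literally a maximum of the per-group quantities $\cost_{g'}(\cdot)$ taken over all $g' \in \groups$.

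The single step that then closes the argument is to note that, for any fixed group $g \in \groups$, the term $\cost_g(\opt)$ is one of the terms ranged over in $\max_{g' \in \groups} \cost_{g'}(\opt)$, so $\cost_g(\opt) \le \max_{g' \in \groups} \cost_{g'}(\opt) = \cost(\opt)$. This inequality applies verbatim to both $\maxavg$ and $\maxmax$, yielding the claim in each case.

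I do not expect any genuine obstacle here, since the statement is essentially the defining monotonicity of the maximum. The only subtlety worth flagging is notational: the symbol $\cost_g(\cdot)$ carries a different internal meaning for the two objectives (the within-group average for $\maxavg$ versus the within-group maximum for $\maxmax$), so I would make the reader aware that the same one-line bound applies regardless of which reading is in force, and emphasize that the observation is stated only for the two objectives whose overall cost is a maximum over groups—the analogous inequality would not hold in the same trivial form for the averaging objectives $\avgavg$ and $\avgmax$.
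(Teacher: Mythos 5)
Your proposal is correct and matches the paper's reasoning exactly: the paper gives no separate proof, treating the inequality as immediate from the fact that under $\maxavg$ and $\maxmax$ the overall cost $\cost(\opt)$ is by definition the maximum of the per-group quantities $\cost_g(\opt)$, so each term is dominated by the maximum. Your additional remark about the two readings of $\cost_g(\cdot)$ (within-group average versus within-group maximum) is a fair notational clarification but does not change the argument.
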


%\begin{observation}
%	\label{obs:doublevoterwinner} 
%	Consider a $\randdet$ mechanism $\mech = (\fin,\fov)$ where $\fin$ has a finite distortion. If there exists a two-member group with preference profiles $\profile_1 = \movetofirst{\movetofirst{\sigma}{c_1}}{c_2}  \text{ and }  \profile_2 = \movetofirst{\movetofirst{\sigma}{c_2}}{c_1}$, then $\fin$ must select either $c_1$ or $c_2$ as the representative of that group.
%\end{observation}

\begin{observation}\label{obs:expcost_randdet}
	%Consider an instance $\instance=(\voters, \candidates, \groups, \profile, \textsf{d})$. 
	For $\randdet$\, mechanism $\mech = (\fin, \fur)$ with output $\w $, the expected cost of the mechanism is given by \[\expected{\cost(\w)} = \frac{1}{k} \sum_{g \in \groups} \cost(\w_g).\]
\end{observation}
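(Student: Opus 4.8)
The plan is to simply unwind the two definitions of expected cost stated in the Distortion paragraph and then feed in the two defining features of a \randdet\ mechanism $(\fin,\fur)$. I would begin from
\[
\expected{\cost(\w \mid \instance)} = \sum_{g \in \groups} \Pr(\w = \rep_g)\cdot \expected{\cost(\rep_g \mid \instance)},
\]
so that the whole task reduces to evaluating the two factors $\Pr(\w=\rep_g)$ and $\expected{\cost(\rep_g\mid\instance)}$ inside the summation.

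For the inner factor I would use that the in-group rule $\fin$ is \emph{deterministic}: each group $g$ elects a single fixed representative $\w_g=\rep_g$ rather than a distribution over candidates. Consequently the inner expectation $\expected{\cost(\rep_g \mid \instance)} = \sum_{c \in \candidates} \Pr(\rep_g = c)\,\cost(c \mid \instance)$ collapses to the single term $\cost(\w_g \mid \instance)$, since $\Pr(\rep_g = \w_g)=1$. For the outer factor I would invoke the definition of the uniform selection rule $\fur$ used in the second stage: it selects one of the representatives uniformly at random, so $\Pr(\w = \rep_g) = \tfrac1k$ for every group $g$. Substituting both facts into the displayed identity yields
\[
\expected{\cost(\w \mid \instance)} = \sum_{g\in\groups}\frac1k\,\cost(\w_g \mid \instance) = \frac1k\sum_{g\in\groups}\cost(\w_g \mid \instance),
\]
which is exactly the claimed identity.

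The only point requiring genuine care — and hence the main (mild) obstacle — is the bookkeeping when two distinct groups happen to elect the same candidate, since then the raw set $R$ of representatives has fewer than $k$ distinct elements and the phrase ``uniformly at random from the set of candidates'' becomes ambiguous. I would resolve this by reading $\fur$ in the second stage as a uniform draw over the $k$ \emph{groups} (equivalently, over the multiset of representatives indexed by group), so that each group contributes weight exactly $\tfrac1k$ regardless of coincidences. This is precisely the convention under which the group-indexed summation in the distortion definition is well posed, and it is consistent with counting a repeated winning candidate once for each group that elected it; with that convention fixed, the substitution above is immediate and no further estimation is needed.
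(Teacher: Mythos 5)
Your proposal is correct and follows exactly the route the paper intends: the paper states this observation without an explicit proof, but its proof of the analogous \Cref{obs:expcost_randrand} uses precisely your decomposition — start from the group-indexed definition of expected cost, use $\Pr(\w=\rep_g)=\tfrac{1}{k}$ from $\fur$, and collapse the inner expectation (here trivially, since $\fin$ is deterministic and $\rep_g=\w_g$ with probability $1$). Your remark about coincident representatives is also the right reading — the paper's definition of expected cost already sums over groups rather than over distinct candidates, so the group-indexed (multiset) convention you adopt is the one implicitly in force.
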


\begin{observation}\label{obs:expcost_randrand}
	%Consider an instance $\instance=(\voters, \candidates, \groups, \profile, \textsf{d})$. 
	For $\randrand$\, mechanism $\mech = (\frd, \fur)$ with output $\w $, the expected cost of the mechanism is given by \[\expected{\cost(\w)} = \frac{1}{k} \sum_{g \in \groups} \frac{1}{\size_g} \sum_{\voter \in g} \cost(\topp{\voter}).\]
\end{observation}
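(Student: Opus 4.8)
The plan is to unfold the two-stage expected-cost definition given in the Distortion paragraph (immediately preceding \Cref{def:unan}) and substitute the selection probabilities induced by $\frd$ in the first stage and $\fur$ in the second. Since both stages are explicit, simple distributions, no nontrivial estimate is required; the only care needed is in translating the first-stage Random Dictatorship probabilities, which are naturally indexed by voters, into the candidate-indexed sum that appears in the general definition.

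First I would record the two stage distributions. In Stage~2 the winner $\w$ is drawn by $\fur$ uniformly from the $k$ representatives, so $\Pr(\w = \rep_g) = 1/k$ for every $g \in \groups$. In Stage~1 the representative of group $g$ is produced by $\frd$, i.e.\ a voter $\voter \in g$ is chosen uniformly at random and $\rep_g = \topp{\voter}$; hence for a candidate $c$ we have $\Pr(\rep_g = c) = |\{\voter \in g : \topp{\voter} = c\}|/\size_g$.

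Next I would compute the per-group expected cost. Substituting the Stage~1 probabilities into $\expected{\cost(\rep_g)} = \sum_{c \in \candidates} \Pr(\rep_g = c)\,\cost(c)$ and regrouping the candidate sum as a sum over the voters that top-rank each candidate collapses the expression to $\expected{\cost(\rep_g)} = \frac{1}{\size_g}\sum_{\voter \in g}\cost(\topp{\voter})$; this is the step where voters sharing a common top candidate must be accounted for, which the indicator count above handles automatically. Finally, plugging the Stage~2 probabilities and this per-group value into $\expected{\cost(\w)} = \sum_{g \in \groups}\Pr(\w = \rep_g)\,\expected{\cost(\rep_g)}$ yields $\frac{1}{k}\sum_{g \in \groups}\frac{1}{\size_g}\sum_{\voter \in g}\cost(\topp{\voter})$, which is exactly the claimed identity.

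Since every step is a direct substitution, there is no genuine obstacle here. The statement holds uniformly for all four cost objectives because the argument never inspects the internal structure of $\cost(\cdot)$, relying only on linearity of expectation over the two independent randomization stages; in this sense it is a companion to \Cref{obs:expcost_randdet}, differing only in that the first stage now contributes the extra $\tfrac{1}{\size_g}\sum_{\voter\in g}$ average in place of the deterministic $\cost(\w_g)$.
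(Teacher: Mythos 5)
Your proposal is correct and takes essentially the same route as the paper: both unfold the two-stage expected-cost definition, substitute $\Pr(\w=\rep_g)=\tfrac{1}{k}$ for $\fur$ and the Random Dictatorship probabilities for the first stage, and collapse the sums by linearity of expectation. If anything, your candidate-indexed bookkeeping of $\Pr(\rep_g=c)$ with multiplicities is slightly more careful than the paper's intermediate step, which writes $\sum_{\voter\in g}\Pr\left(\rep_g=\topp{\voter}\right)\cdot\cost(\topp{\voter})$ and would double-count if two voters in a group shared a top candidate; both derivations arrive at the same correct identity.
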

\begin{proof}
	By the definitions of the \emph{Random Dictatorship} rule ($\frd$) and the uniform selection rule ($\fur$), we have
	\begin{align*}
		\expected{\cost(\w)}  & = \sum_{g \in \groups} \Pr(\w = \rep_g) \cdot \expected{\cost(\rep_g)}\\
		& =\frac{1}{k} \sum_{g \in \groups} \expected{\cost(\rep_g)} \\
		& = \frac{1}{k} \sum_{g \in \groups} \sum_{\voter \in g} \Pr\left(\rep_g=\topp{\voter}\right) \cdot \cost(\topp{\voter}) \\
		& = \frac{1}{k} \sum_{g \in \groups} \frac{1}{\size_g} \sum_{\voter \in g} \cost(\topp{\voter}).
	\end{align*}
\end{proof}

\begin{observation}
	For the $\maxavg$\, objective and any group $g$, we have $\cost_{g}(\opt) \le \cost(\opt)$, as implied directly by the definition of $\maxavg$.
	\label{obs:maxavg}
\end{observation}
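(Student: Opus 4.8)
The plan is to simply unwind the definition of the $\maxavg$ objective evaluated at the optimal candidate and then invoke the elementary fact that a maximum over a finite collection dominates each individual term. Concretely, recall from the preliminaries that for any candidate $c$ the $\maxavg$ cost is
\[
\cost(c \mid \instance) = \max_{g' \in \groups} \Bigl( \tfrac{1}{\size_{g'}} \sum_{\voter \in g'} \dis{\voter}{c} \Bigr) = \max_{g' \in \groups} \cost_{g'}(c).
\]
First I would specialize this identity to $c = \opt$, obtaining $\cost(\opt) = \max_{g' \in \groups} \cost_{g'}(\opt)$, where it is important to note that $\opt$ and the cost on the left-hand side are both taken with respect to the same ($\maxavg$) objective.

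Next I would fix an arbitrary group $g \in \groups$. The only remaining step is the observation that each entry of a finite set of reals is at most the maximum of that set, i.e. $\cost_g(\opt) \le \max_{g' \in \groups} \cost_{g'}(\opt)$. Combining this inequality with the displayed identity yields $\cost_g(\opt) \le \cost(\opt)$, and since $g$ was chosen arbitrarily, the bound holds for every group, which is exactly the claim.

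I do not expect any genuine obstacle here: the statement is, as the excerpt itself notes, an immediate consequence of the definition of $\maxavg$ as a maximum of per-group averages, and it is precisely the specialization of \Cref{obs:optimalcost} to the $\maxavg$ objective. The only subtlety worth flagging explicitly is the consistency of notation, namely that ``$\cost$'' on both sides of the inequality refers to $\maxavg$ and that $\opt$ is the minimizer of that same objective; once this is pinned down, no further argument is needed.
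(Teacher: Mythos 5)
Your proposal is correct and matches the paper's own justification: the paper gives no separate proof, treating the claim as immediate from the definition of $\maxavg$ as a maximum of per-group averages (indeed it is the specialization of \Cref{obs:optimalcost} to the $\maxavg$ objective, exactly as you note). Unwinding the definition at $c = \opt$ and using that a maximum dominates each term is precisely the intended argument.
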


\begin{observation}
	Since $\topp{\voter}$ denotes the candidate closest to voter $\voter$, it follows that $\dis{\voter}{\topp{\voter}} \le \dis{\voter}{c}$ for any candidate c.
	%This holds regardless of the cost objectives.
	%(\avgmax, \avgavg, \maxavg, \maxmax).
	\label{prop:topi}
\end{observation}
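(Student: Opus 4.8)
The plan is to unwind the two definitions that connect the ordinal preferences to the underlying metric. Recall from \Cref{sec:pre} that each voter $\voter$ ranks all candidates in increasing order of cost, so the preference list $\pi_i$ is induced by the distances $\dis{\voter}{\cdot}$. The first thing I would record is that $\topp{\voter}$, being the top-ranked candidate of $\voter$, is exactly the candidate appearing first in this cost-sorted list; hence $\topp{\voter}$ minimizes $\dis{\voter}{\cdot}$ over $\candidates$, i.e.\ $\topp{\voter} \in \arg\min_{c \in \candidates}\dis{\voter}{c}$.

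The remaining step is purely the defining property of a minimizer: since $\dis{\voter}{\topp{\voter}}$ equals the minimum cost $\min_{c' \in \candidates}\dis{\voter}{c'}$, it is at most $\dis{\voter}{c}$ for every candidate $c$, which is exactly the claimed inequality. Notably, no metric axiom beyond the consistency of the ordinal ranking with the cost function is invoked—neither the triangle inequality nor symmetry plays any role.

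There is essentially no obstacle here; the statement is immediate once $\topp{\voter}$ is identified with the cost-minimizing (closest) candidate. The only point worth flagging is the tie-breaking convention: if several candidates attain the minimum distance, $\topp{\voter}$ is whichever one the strict ranking $\pi_i$ lists first, but this is harmless, since all such minimizers share the same minimal cost and the inequality therefore holds regardless of how the tie is resolved.
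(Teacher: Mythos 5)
Your proof is correct and matches the paper's treatment: the paper states this as an immediate consequence of the definition of $\topp{\voter}$ as the cost-minimizing candidate and gives no separate proof, which is exactly the definitional unwinding you carry out. Your remark on tie-breaking is a fine extra precaution but changes nothing, since all minimizers share the same minimal distance.
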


\begin{observation} 
	For every voter $\voter$ and every candidate $c$, we have $\dis{\voter}{c} \le \dis{\vstarstar{c}}{c}$. 
	\label{prop:maxmax}
\end{observation}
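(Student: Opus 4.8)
The plan is to read the inequality off directly from the definition of $\vstarstar{c}$. The excerpt defines $\vstarstar{c} = \arg\max_{\voter \in \voters} \dis{\voter}{c}$; that is, $\vstarstar{c}$ is precisely the voter attaining the largest cost for candidate $c$ over the entire voter set $\voters$. Hence, for an arbitrary voter $\voter \in \voters$, its cost $\dis{\voter}{c}$ is one of the quantities over which this maximum is taken, and so it cannot exceed the maximum value itself. This is the whole content of the statement, so the proof reduces to a single application of the fact that a maximum dominates each of its arguments.

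Concretely, I would record the one-line chain
\[
\dis{\voter}{c} \;\le\; \max_{\voter' \in \voters} \dis{\voter'}{c} \;=\; \dis{\vstarstar{c}}{c},
\]
where the inequality holds because $\dis{\voter}{c}$ is among the terms in the maximization, and the equality is exactly the definition of $\vstarstar{c}$. Note that none of the metric axioms (non-negativity, symmetry, the triangle inequality) are invoked: the claim is purely combinatorial, asserting only that the global farthest voter realizes the supremum of the per-voter costs. There is thus no genuine obstacle here; the single point requiring care is that $\voter$ and $\vstarstar{c}$ both range over the \emph{same} set $\voters$ (the global, ``max-max'' farthest voter rather than a group-restricted one), so that $\dis{\voter}{c}$ really does appear inside the maximization defining $\dis{\vstarstar{c}}{c}$.
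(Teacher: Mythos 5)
Your proof is correct and matches the paper's reasoning exactly: the paper states this observation without proof, treating it as immediate from the definition \( \vstarstar{c} = \arg\max_{\voter \in \voters} \dis{\voter}{c} \), and your one-line argument (a maximum dominates each of its arguments, with \(\voter\) ranging over the same set \(\voters\)) is precisely that implicit justification.
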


\begin{observation}
	For every group \( g \), every voter \( \voter \in g \), and every candidate \( c \), we have \( \dis{\voter}{c} \le \dis{\vstar{c}{g}}{c} \).
	\label{prop:Xmax}
\end{observation}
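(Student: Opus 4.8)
The plan is to derive the inequality directly from the definition of $\vstar{c}{g}$, with no metric machinery required. Recall from the notation section that $\vstar{c}{g} = \arg\max_{\voter' \in g} \dis{\voter'}{c}$, that is, $\vstar{c}{g}$ is by construction the voter within group $g$ whose distance to candidate $c$ is largest. Consequently $\dis{\vstar{c}{g}}{c} = \max_{\voter' \in g} \dis{\voter'}{c}$, and the observation is just the statement that each element of a set is bounded by the set's maximum.

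Concretely, I would fix an arbitrary group $g$, a voter $\voter \in g$, and a candidate $c$. Since $\voter$ itself belongs to $g$, the quantity $\dis{\voter}{c}$ appears among the terms $\{\dis{\voter'}{c} : \voter' \in g\}$ over which the maximum defining $\vstar{c}{g}$ is taken. Because a maximum of a finite collection is at least each of its members, we obtain $\dis{\voter}{c} \le \max_{\voter' \in g} \dis{\voter'}{c} = \dis{\vstar{c}{g}}{c}$, which is exactly the claimed bound. This is the per-group analogue of \Cref{prop:maxmax}, where the identical argument is carried out with the maximum ranging over all of $\voters$ rather than over the single group $g$; the only change is the index set of the maximum.

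There is no substantive obstacle here: the statement is an immediate unfolding of the $\arg\max$ definition, and it uses none of the metric axioms (non-negativity, symmetry, or the triangle inequality) beyond the mere well-definedness of the per-group maximum. The one point worth recording is that the maximum is attained, which holds because each group $g$ contains finitely many voters (namely $\size_g$ of them), so $\vstar{c}{g}$ is a genuine voter in $g$ rather than a supremum that fails to be achieved.
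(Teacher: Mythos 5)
Your proof is correct and matches the paper's treatment: the paper states this observation without proof precisely because it follows immediately from the definition \( \vstar{c}{g} = \arg\max_{\voter' \in g} \dis{\voter'}{c} \), which is exactly the unfolding you carry out. Your added remark about attainment of the maximum (finiteness of each group) is a fine point of rigor but not a substantive addition.
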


\begin{observation}	\label{prop:falpha}
	Consider a distributed mechanism $\mech = (\fin, \fov)$, where $\fin$ is a deterministic rule with distortion at most $\alpha$.
	%If $\falpha$ is deterministic, then 
	By the definition of centralized distortion, we know that: 
\[
		\cost_g(\w_g) \le \alpha \cdot \cost_g(\opt_g), \quad \forall g \in \groups.
\]
\end{observation}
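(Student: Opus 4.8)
The plan is to observe that this is essentially an immediate consequence of unpacking the definition of centralized distortion, applied to the instance that $\fin$ actually sees within a single group. First I would fix an arbitrary group $g \in \groups$ and isolate the centralized sub-instance on which the in-group rule operates: the voter set is the members of $g$, the candidate set is all of $\candidates$, and the relevant cost objective restricted to this group is $\cost_g(\cdot)$ (either the within-group average or within-group maximum, depending on which objective is under consideration). The representative $\w_g = \fin(\profile^g)$ is exactly the winner that $\fin$ returns on this sub-instance, since by the problem setup $\fin$ has access only to the local preference profile $\profile^g$ and the group size $\size_g$.

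Next I would invoke the hypothesis that $\fin$ is a deterministic rule with (centralized) distortion at most $\alpha$. By the definition of distortion, this means that for \emph{every} centralized instance the cost of the candidate selected by $\fin$ is at most $\alpha$ times the cost of the cost-minimizing candidate on that instance. Applying this guarantee to the group-$g$ sub-instance, the winner $\w_g$ satisfies $\cost_g(\w_g) \le \alpha \cdot \min_{c \in \candidates} \cost_g(c)$.

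Finally I would identify the minimizer. By definition, $\opt_g$ is the optimal candidate for group $g$, so $\min_{c \in \candidates} \cost_g(c) = \cost_g(\opt_g)$ — a fact already recorded in \Cref{prop:localcost}. Substituting this in yields the claimed inequality $\cost_g(\w_g) \le \alpha \cdot \cost_g(\opt_g)$, and since $g$ was arbitrary it holds for all $g \in \groups$. There is no genuine obstacle here; the only point requiring a moment of care is checking that the cost objective used to measure the in-group rule's distortion is consistently the within-group restriction $\cost_g$ (so that the centralized distortion guarantee transfers verbatim), rather than the global objective $\cost$. Once this alignment is noted, the statement follows directly from the definitions.
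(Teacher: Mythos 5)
Your proposal is correct and matches the paper's treatment: the paper states this observation without proof, regarding it as an immediate unpacking of the definition of centralized distortion, which is exactly what you spell out (restrict to the group-$g$ sub-instance, apply the distortion-$\alpha$ guarantee of $\fin$, and identify the minimizer with $\opt_g$ as in \Cref{prop:localcost}). Your closing remark about keeping the objective aligned with the within-group restriction $\cost_g$ is the right point of care, and nothing further is needed.
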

\begin{observation}	\label{prop:fbeta}
	Consider a $\detdet$ mechanism $\mech = (\fin, \fov)$, where $\fov$ has distortion at most $\beta$ with respect to $\avgg$ objective.
	By the definition of centralized distortion, we know that: 
	\begin{equation*}
		\frac{1}{k}\sum_{\group \in \groups}\dis{\w}{\w_\group} \le \beta \cdot \frac{1}{k}\sum_{\group \in \groups}\dis{\opt}{\w_\group}, \quad \forall{\group \in \groups}.
	\end{equation*}
\end{observation}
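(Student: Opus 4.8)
The plan is to unwind the definition of centralized distortion for the over-group rule $\fov$ as it acts in Stage 2 of the $\detdet$ mechanism. First I would pin down the Stage-2 instance precisely: its voters are the representatives $R = \{\w_g : g \in \groups\}$, one per group, and—following the convention adopted throughout this section—its candidate set is all of $\candidates$. For any candidate $c$ evaluated in this instance, the $\avgg$ objective equals $\frac{1}{k}\sum_{g \in \groups} \dis{\w_g}{c}$, i.e. the average distance from $c$ to the representatives.

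Next I would invoke the hypothesis that $\fov$ has distortion at most $\beta$ with respect to the $\avgg$ objective. By the definition of distortion, this means the $\avgg$ cost of the winner $\w = \fov(\cdot)$ is within a factor $\beta$ of the best achievable $\avgg$ cost over the Stage-2 candidate set, namely $\frac{1}{k}\sum_{g}\dis{\w_g}{\w} \le \beta \cdot \min_{c \in \candidates} \frac{1}{k}\sum_{g}\dis{\w_g}{c}$.

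The remaining step is to replace the Stage-2 optimum by the particular candidate $\opt$. Because this section adopts the setting in which $\fov$ chooses from all of $\candidates$ (not merely from $R$), the global optimum $\opt$ is itself a feasible candidate in the Stage-2 instance; hence $\min_{c \in \candidates} \frac{1}{k}\sum_{g}\dis{\w_g}{c} \le \frac{1}{k}\sum_{g}\dis{\w_g}{\opt}$. Substituting this bound and using the symmetry of $\textsf{d}$ to rewrite $\dis{\w_g}{\w}$ as $\dis{\w}{\w_g}$ (and likewise for $\opt$) yields exactly the claimed inequality. No genuine obstacle arises: the statement is an immediate consequence of the distortion definition, and the only point worth flagging is that feasibility of $\opt$ in Stage 2 relies on the all-candidates convention of the $\detdet$ section. (The trailing ``$\forall g \in \groups$'' is vacuous, since both sides already aggregate over all groups.)
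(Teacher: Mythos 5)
Your proposal is correct and matches the paper's treatment: the paper states this as an observation following immediately from the definition of centralized distortion, and your write-up is precisely the natural unwinding of that definition—identifying the Stage-2 instance (representatives as voters, all of $\candidates$ as candidates under the \detdet\ convention), applying the distortion-$\beta$ guarantee of $\fov$ for the $\avgg$ objective, and using feasibility of $\opt$ to replace the Stage-2 minimum. Your side remarks (that feasibility of $\opt$ hinges on the all-candidates convention, and that the trailing ``$\forall \group \in \groups$'' is vacuous) are both accurate.
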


	\section{Distortion Bounds for \randdet}\label{sec:randdet}
	%randdet

This section examines \randdet\ mechanisms, defined as pairs $(\fin, \fov)$, where $\fin$ is a deterministic voting rule and $\fov$ is a randomized one. We establish lower and upper bounds on the distortion of these mechanisms for all cost objectives in general metric spaces.

		\subsection{Upper Bounds}

Let $\falpha$ be a deterministic voting rule with distortion at most $\alpha$, and let $\fun$ be any deterministic voting rule that satisfies Pareto efficiency.  
We analyze the mechanisms $(\falpha, \fur)$ and $(\fun, \fur)$.  

For the $\maxavg$ and $\avgavg$ objectives, we show that the mechanism $(\falpha, \fur)$ achieves distortion at most $\alpha + 2$ and $\alpha + 2 - \nicefrac{2}{k}$, respectively.  
Since $\fpmpar$ achieves the best-known distortion of $3$ and also satisfies Pareto efficiency, we instantiate our theorems with the mechanism $(\fpmpar, \fur)$ to obtain the tightest bounds.  

Finally, for the $\avgmax$ and $\maxmax$ objectives, we prove that the mechanism $(\fun, \fur)$ achieves distortion at most $3$, a consequence of Pareto efficiency.

%For brevity, the proofs of \cref{th:randdet_maxavg,th:Xmaxuni} are deferred to the appendix.

\begin{theorem}
	 For the $\maxavg$ objective in general metric spaces, we have $\distortion((\falpha, \fur)) \leq \alpha + 2$.
	\label{th:randdet_maxavg}
\end{theorem}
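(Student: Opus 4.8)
The plan is to reduce the bound to a per-representative estimate. By \Cref{obs:expcost_randdet}, the expected cost of $(\falpha,\fur)$ equals $\tfrac{1}{k}\sum_{g\in\groups}\maxavg(\w_g)$, i.e.\ the uniform average over the $k$ group representatives of their $\maxavg$-costs. Writing $\opt$ for the globally $\maxavg$-optimal candidate, it therefore suffices to prove the representative-by-representative bound $\maxavg(\w_g)\le(\alpha+2)\,\maxavg(\opt)$ for every group $g$; averaging then yields $\distortion((\falpha,\fur))\le\alpha+2$ at once. Notably the second-stage randomization is inessential to the argument — the bound will hold for each $\w_g$ individually, so any convex combination (in particular the uniform one) inherits it.

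First I would control $\w_g$ inside its own group. Since $\falpha$ has in-group distortion at most $\alpha$, \Cref{prop:falpha} gives $\cost_g(\w_g)\le\alpha\,\cost_g(\opt_g)$; combining this with local optimality of $\opt_g$ (\Cref{prop:localcost}, so $\cost_g(\opt_g)\le\cost_g(\opt)$) and $\cost_g(\opt)\le\maxavg(\opt)$ (\Cref{obs:maxavg}) yields $\cost_g(\w_g)\le\alpha\,\maxavg(\opt)$.

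The crux is transferring this within-group guarantee to an arbitrary group $g'$, because the $\maxavg$-cost of $\w_g$ may be realized in a group other than $g$. I would estimate the distance $\dis{\opt}{\w_g}$ by an averaged triangle inequality over the voters of $g$: for each $\voter\in g$, $\dis{\opt}{\w_g}\le\dis{\opt}{\voter}+\dis{\voter}{\w_g}$, and averaging over $\voter\in g$ (the left side being constant) gives $\dis{\opt}{\w_g}\le\cost_g(\opt)+\cost_g(\w_g)\le(\alpha+1)\,\maxavg(\opt)$. A second triangle inequality $\dis{\voter'}{\w_g}\le\dis{\voter'}{\opt}+\dis{\opt}{\w_g}$, averaged over $\voter'\in g'$, then gives $\cost_{g'}(\w_g)\le\cost_{g'}(\opt)+\dis{\opt}{\w_g}\le\maxavg(\opt)+(\alpha+1)\,\maxavg(\opt)=(\alpha+2)\,\maxavg(\opt)$, where $\cost_{g'}(\opt)\le\maxavg(\opt)$ again by \Cref{obs:maxavg}. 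Maximizing over $g'$ gives $\maxavg(\w_g)\le(\alpha+2)\,\maxavg(\opt)$.

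The main obstacle — and the only nontrivial step — is precisely this cross-group transfer: the distortion hypothesis on $\falpha$ bounds only $\cost_g(\w_g)$, while $\maxavg$ evaluates $\w_g$ in its worst group, so the representative-to-optimum distance $\dis{\opt}{\w_g}$ must be extracted and then routed into an arbitrary group. The averaging trick works cleanly because each $\cost_g(\cdot)$ is itself an average of distances, so the constant $\dis{\opt}{\w_g}$ is dominated by $\cost_g(\opt)+\cost_g(\w_g)$; in the write-up I would take care to apply the averaged triangle inequality in group $g$ for the first bound and in group $g'$ for the second, and to invoke $\cost_{g'}(\opt)\le\maxavg(\opt)$ for the correct group each time.
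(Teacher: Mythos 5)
Your proposal is correct and follows essentially the same route as the paper's proof: both arguments split the $\maxavg$-cost of each representative $\w_g$ via the triangle inequality through $\opt$, bound the cross-group term by $\cost_{g'}(\opt)\le\cost(\opt)$, bound $\dis{\opt}{\w_g}$ by the averaged triangle inequality over $\w_g$'s own group, and finish with $\cost_g(\w_g)\le\alpha\,\cost_g(\opt_g)\le\alpha\,\cost_g(\opt)\le\alpha\,\cost(\opt)$. The only difference is organizational — you isolate a per-representative bound $\maxavg(\w_g)\le(\alpha+2)\maxavg(\opt)$ and average at the end, whereas the paper carries the sum over groups through one long chain of inequalities — and your observation that the second-stage randomization is inessential is implicit in the paper's argument as well.
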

\begin{proof}
	Consider an instance $\instance=(\voters,\candidates,\groups,\profile,\textsf{d})$ and $\randdet$ mechanism $\mech = (\falpha, \fur)$. We have
	\begin{align*}
		\expected{\cost(\w)}
		&= \sum_{g \in \groups} \frac{1}{k} \cdot \cost(\rep_g) & \text{(\Cref{obs:expcost_randdet})} \\
		&= \sum_{g \in \groups} \frac{1}{k} \cdot \frac{1}{\size_{\gstar\left(\rep_g\right)}} \sum_{\voter\in \gstar\left(\rep_g\right)} \dis{\voter}{\rep_g} & \text{(Definition of $\maxavg$)} \\
		&\le \sum_{g\in \groups} \frac{1}{k} \cdot \frac{1}{\size_{\gstar\left(\rep_g\right)}} \sum_{\voter\in \gstar\left(\rep_g\right)} \bigg(\dis{\voter}{\opt} + \dis{\opt}{\rep_g}\bigg) & \text{(Triangle Inequality)}\\ 
		& = \sum_{g\in \groups} \frac{1}{k} \cdot \frac{1}{\size_{\gstar\left(\rep_g\right)}} \sum_{\voter\in \gstar\left(\rep_g\right)} \dis{\voter}{\opt} \\
		& + \sum_{g\in \groups} \frac{1}{k} \cdot \frac{1}{\size_{\gstar\left(\rep_g\right)}} \sum_{\voter\in \gstar\left(\rep_g\right)} \dis{\opt}{\rep_g} \\
		& \le \sum_{g\in \groups} \frac{1}{k} \cdot \cost(\opt) + \sum_{g\in \groups} \frac{1}{k} \cdot \frac{1}{\size_{\gstar\left(\rep_g\right)}} \sum_{\voter\in \gstar\left(\rep_g\right)} \dis{\opt}{\rep_g} & \text{(\Cref{obs:optimalcost})}
		\\ 
		& = \sum_{g\in \groups} \frac{1}{k} \cdot \cost(\opt) + \sum_{g\in \groups} \frac{1}{k} \cdot \dis{\opt}{\rep_g} 
		\\ 
		& \le \cost(\opt) + \sum_{g\in \groups} \frac{1}{k} \cdot \frac{1}{\size_{g}} \sum_{\voter\in g} \bigg(\dis{\voter}{\opt} + \dis{\voter}{\rep_g}\bigg) & \text{(Triangle Inequality)} \\
		& = \cost(\opt) + \sum_{g\in \groups} \frac{1}{k} \cdot \frac{1}{\size_{g}} \sum_{\voter\in g} \dis{\voter}{\opt} \\ 
		& + \sum_{g\in \groups} \frac{1}{k} \cdot \frac{1}{\size_{g}} \sum_{\voter\in g} \dis{\voter}{\rep_g} \\
		& \le \cost(\opt) + \sum_{g\in \groups} \frac{1}{k} \cdot \cost(\opt) +  \sum_{g\in \groups} \frac{1}{k} \cdot \cost_g(\rep_g)\\
		& = 2\cost(\opt) + \sum_{g\in \groups} \frac{1}{k} \cdot \cost_g(\rep_g) \\
		& \le 2\cost(\opt) + \sum_{g\in \groups} \frac{1}{k} \cdot \alpha \cdot \cost_g(\opt_g) &
		(\text{\Cref{prop:falpha}}) \\
		& \le 2\cost(\opt) + \alpha \sum_{g\in \groups} \frac{1}{k} \cdot \cost_g(\opt) & (\text{\Cref{prop:localcost}}) \\
		& \le 2\cost(\opt) + \alpha \sum_{g\in \groups} \frac{1}{k} \cdot \cost(\opt) & (\text{\Cref{obs:optimalcost}}) \\
		& = (\alpha + 2) \cost(\opt). 
	\end{align*}
\end{proof}
%\begin{proofsketch}
%We consider an arbitrary instance and mechanism \((\falpha, \fur)\). 
%By the triangle inequality, we split the cost of each representative $\rep_g$ into: (i) the distance from voters in $\gstar(\rep_g)$ to the optimal candidate $\opt$, bounded by $\cost(\opt)$; and (ii) the distance from $\opt$ to $\rep_g$, which we distribute over group $g$ and bound via $\cost(\opt)$ and $\cost_g(\rep_g)$.
%We invoke the $\alpha$-distortion guarantee of $\falpha$, giving $\cost_g(\rep_g) \leq \alpha \cdot \cost_g(\opt_g)$, and use $\cost_g(\opt_g) \leq \cost_g(\opt) \leq \cost(\opt)$ to complete the bound. We conclude that distortion is bounded $2 + \alpha$.
%\end{proofsketch}

By using $\fpm$ instead of $\falpha$ as the in-group voting rule (with $\alpha = 3$) and applying \Cref{th:randdet_maxavg}, we conclude that $\mech = (\fpm, \fur)$ is a \randdet\ mechanism that satisfies the bound stated in \Cref{cor:randdet-maxavg-upper}.

\begin{corollary}[of \cref{th:randdet_maxavg}]
	For the \maxavg\, objective in general metric spaces, there exists a \randdet\, mechanism with distortion at most $5$.
	\label{cor:randdet-maxavg-upper}
\end{corollary}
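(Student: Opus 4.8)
The plan is to derive \Cref{cor:randdet-maxavg-upper} as an immediate specialization of \Cref{th:randdet_maxavg}. The theorem establishes that for the $\maxavg$ objective, any $\randdet$ mechanism of the form $(\falpha, \fur)$---pairing a deterministic in-group rule of distortion at most $\alpha$ with the uniform selection rule $\fur$ in the second stage---has distortion at most $\alpha + 2$. The corollary simply asks for the existence of a $\randdet$ mechanism achieving distortion at most $5$, so it suffices to exhibit one concrete instantiation meeting this value.

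First I would invoke the \emph{Plurality Matching} rule $\fpm$ of \citet{gkatzelis2020resolving} as the in-group rule. As recorded in \Cref{sec:pre}, $\fpm$ is a deterministic centralized voting rule achieving metric distortion exactly $3$, which is optimal among ordinal rules. Thus $\fpm$ qualifies as a valid choice of $\falpha$ with $\alpha = 3$. Setting the in-group rule to $\fin = \fpm$ and the over-group rule to $\fov = \fur$ yields a concrete $\randdet$ mechanism $\mech = (\fpm, \fur)$.

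Next I would apply \Cref{th:randdet_maxavg} directly to this mechanism. Substituting $\alpha = 3$ into the bound $\distortion((\falpha, \fur)) \leq \alpha + 2$ gives $\distortion((\fpm, \fur)) \leq 3 + 2 = 5$. Since $\mech = (\fpm, \fur)$ is itself a valid $\randdet$ mechanism, this witnesses the existence claim of the corollary, completing the argument.

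There is essentially no obstacle here: the corollary is a one-line consequence of the theorem, and the only content is verifying that $\fpm$ is an admissible in-group rule with $\alpha = 3$, which is guaranteed by the cited distortion result of \citet{gkatzelis2020resolving}. The entire proof reduces to plugging the correct value of $\alpha$ into the already-established upper bound, so the work has all been done in \Cref{th:randdet_maxavg} and the only care needed is to confirm the numerical substitution.
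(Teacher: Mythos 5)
Your proposal is correct and matches the paper's own argument exactly: the paper likewise instantiates \Cref{th:randdet_maxavg} with the \emph{Plurality Matching} rule $\fpm$ (so $\alpha = 3$) as the in-group rule and $\fur$ as the over-group rule, obtaining the mechanism $(\fpm, \fur)$ with distortion at most $3 + 2 = 5$. Nothing further is needed.
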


\begin{theorem}
	For the \avgavg\, objective in general metric spaces, we have $\distortion((\falpha, \fur)) \leq \alpha + 2 - \tfrac{2}{k}$.
	\label{th:randdet_avgavg}
\end{theorem}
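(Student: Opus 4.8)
The plan is to mirror the structure of the proof of \Cref{th:randdet_maxavg}, but to exploit the special role played by each group's own representative in order to shave off the $\nicefrac{2}{k}$ term. By \Cref{obs:expcost_randdet}, the expected cost of $\mech=(\falpha,\fur)$ equals $\frac{1}{k}\sum_{g\in\groups}\avgavg(\rep_g)$, and expanding the $\avgavg$ cost of each representative as an average over groups gives the double sum
\[
\expected{\cost(\w)} = \frac{1}{k}\sum_{g\in\groups}\avgavg(\rep_g) = \frac{1}{k^2}\sum_{g\in\groups}\sum_{h\in\groups}\cost_h(\rep_g).
\]
The central idea is to split this double sum into the \emph{diagonal} terms $h=g$ and the \emph{off-diagonal} terms $h\neq g$, and to bound them by different means.

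For the diagonal terms I would invoke the distortion guarantee directly: by \Cref{prop:falpha} and \Cref{prop:localcost}, $\cost_g(\rep_g)\le\alpha\,\cost_g(\opt_g)\le\alpha\,\cost_g(\opt)$, so the diagonal contributes at most $\frac{\alpha}{k}\avgavg(\opt)$ without ever invoking the triangle inequality. For each off-diagonal term I would instead route through the global optimum, writing $\cost_h(\rep_g)\le\cost_h(\opt)+\dis{\opt}{\rep_g}$. Summing the $\cost_h(\opt)$ parts over the $k-1$ choices of $h\neq g$ yields $\frac{k-1}{k}\avgavg(\opt)$, while the $\dis{\opt}{\rep_g}$ parts now collect a coefficient of exactly $\frac{k-1}{k}$ rather than $1$. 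This reduced coefficient---a direct consequence of having removed the diagonal from the triangle-inequality step---is precisely the source of the improvement over \Cref{th:randdet_maxavg}.

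It then remains to bound the residual $\frac{k-1}{k^2}\sum_{g\in\groups}\dis{\opt}{\rep_g}$. Here I would apply the triangle inequality a second time, now through the voters of group $g$: averaging $\dis{\opt}{\rep_g}\le\dis{\voter}{\opt}+\dis{\voter}{\rep_g}$ over $\voter\in g$ gives $\dis{\opt}{\rep_g}\le\cost_g(\opt)+\cost_g(\rep_g)\le(1+\alpha)\,\cost_g(\opt)$, again using \Cref{prop:falpha} and \Cref{prop:localcost}. Hence $\sum_{g\in\groups}\dis{\opt}{\rep_g}\le(1+\alpha)\,k\,\avgavg(\opt)$, so this term contributes at most $\frac{(k-1)(1+\alpha)}{k}\avgavg(\opt)$.

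Collecting the three contributions, the total coefficient is $\frac{1}{k}\bigl[\alpha+(k-1)+(k-1)(1+\alpha)\bigr]=\frac{\alpha k+2k-2}{k}=\alpha+2-\tfrac{2}{k}$, which is the claimed bound. I expect the only genuine subtlety to be spotting the diagonal/off-diagonal decomposition: once one realizes that the $h=g$ terms should be charged to $\alpha\,\cost_g(\opt_g)$ directly---so that the detour through $\opt$ is applied only to the $k-1$ off-diagonal terms in each row---the remainder is routine bookkeeping of coefficients, identical in spirit to the $\avgmax$/$\maxavg$ estimates already carried out.
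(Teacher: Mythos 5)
Your proposal is correct and follows essentially the same route as the paper's proof: the same diagonal/off-diagonal split of the double sum $\frac{1}{k^2}\sum_{g}\sum_{h}\cost_h(\rep_g)$, the same charging of the diagonal to $\alpha\,\cost_g(\opt_g)\le\alpha\,\cost_g(\opt)$ via \Cref{prop:falpha} and \Cref{prop:localcost}, the same triangle-inequality detour through $\opt$ for the off-diagonal terms, and the same second application $\dis{\opt}{\rep_g}\le\cost_g(\opt)+\cost_g(\rep_g)\le(1+\alpha)\cost_g(\opt)$ to finish. The coefficient bookkeeping $\frac{1}{k}\bigl[\alpha+(k-1)+(k-1)(1+\alpha)\bigr]=\alpha+2-\tfrac{2}{k}$ matches the paper's calculation exactly.
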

\begin{proof}
	Consider an instance $\instance=(\voters,\candidates,\groups,\profile,\textsf{d})$ and \randdet\, mechanism $\mech=(\falpha,\fur)$. We have
	\begin{align*}
		\expected{\cost(\w)}
		&= \sum_{g \in \groups} \frac{1}{k} \cdot \cost(\rep_g) & (\text{\Cref{obs:expcost_randdet}})\\
		&= \frac{1}{k} \sum_{g \in \groups} \frac{1}{k} \sum_{g' \in \groups} \cost_{g'}(\rep_g) & \text{} \\
		&= \frac{1}{k} \sum_{g \in \groups} \frac{1}{k} \sum_{g' \in \groups} \frac{1}{\size_{g'}} \sum_{\voter\in g'} \dis{\voter}{\rep_g} & (\text{Definition of }\avgavg) \\
		&\le \frac{1}{k} \sum_{g \in \groups} \frac{1}{k} \sum_{g' \in \groups, g \neq g'} \frac{1}{\size_{g'}} \sum_{\voter\in g'} \bigg(\dis{\voter}{\opt} + \dis{\opt}{\rep_g}\bigg) \\
		&+ \frac{1}{k} \sum_{g \in \groups} \frac{1}{k} \sum_{g' \in \groups, g = g'} \frac{1}{\size_{g'}} \sum_{\voter\in g'} \dis{\voter}{\rep_g} & (\text{Triangle Inequality}) \\
		&= \frac{1}{k} \sum_{g \in \groups} \frac{1}{k} \sum_{g' \in \groups, g \neq g'} \frac{1}{\size_{g'}} \sum_{\voter\in g'} \bigg(\dis{\voter}{\opt} + \dis{\opt}{\rep_g}\bigg) \\
		&+ \frac{1}{k} \sum_{g \in \groups} \frac{1}{k} \cdot \cost_g(\rep_g) \\
		&\le \frac{1}{k} \sum_{g \in \groups} \frac{1}{k} \sum_{g' \in \groups, g \neq g'} \frac{1}{\size_{g'}} \sum_{\voter\in g'} \bigg(\dis{\voter}{\opt} + \dis{\opt}{\rep_g}\bigg) \\
		&+ \frac{1}{k} \sum_{g \in \groups} \frac{1}{k} \cdot \alpha \cdot \cost_g(\opt_g) & (\text{\Cref{prop:falpha}}) \\
		&\le \frac{1}{k} \sum_{g \in \groups} \frac{1}{k} \sum_{g' \in \groups, g \neq g'} \frac{1}{\size_{g'}} \sum_{\voter\in g'} \bigg(\dis{\voter}{\opt} + \dis{\opt}{\rep_g}\bigg) \\
		&+ \frac{1}{k} \sum_{g \in \groups} \frac{1}{k} \cdot \alpha \cdot \cost_g(\opt) & (\text{\Cref{prop:localcost}}) \\
		&= \frac{1}{k} \sum_{g \in \groups} \frac{1}{k} \sum_{g' \in \groups, g \neq g'} \frac{1}{\size_{g'}} \sum_{\voter\in g'} \bigg(\dis{\voter}{\opt} + \dis{\opt}{\rep_g}\bigg) \\
		&+ \frac{\alpha}{k} \cdot \cost(\opt) & (\cost(\opt) = \sum_{g\in \groups} \frac{1}{k} \cost_g(\opt)) \\
		&= \frac{1}{k} \sum_{g \in \groups} \frac{1}{k} \sum_{g' \in \groups, g \neq g'} \frac{1}{\size_{g'}} \sum_{\voter\in g'} \dis{\voter}{\opt}  \\
		&+ \frac{1}{k} \sum_{g \in \groups} \frac{1}{k} \sum_{g' \in \groups, g \neq g'} \frac{1}{\size_{g'}} \sum_{\voter\in g'} \dis{\opt}{\rep_g} + \frac{\alpha}{k} \cdot \cost(\opt) \\
		&= \frac{1}{k} \sum_{g \in \groups} \frac{1}{k} \sum_{g' \in \groups, g \neq g'} \frac{1}{\size_{g'}} \sum_{\voter\in g'} \dis{\voter}{\opt} \\
		&+ \frac{1}{k} \sum_{g \in \groups} \frac{k-1}{k} \cdot \dis{\opt}{\rep_g} + \frac{\alpha}{k} \cdot \cost(\opt) \\
%	\end{align*}
%	\begin{align*}
		&\le \frac{1}{k} \sum_{g \in \groups} \frac{1}{k} \sum_{g' \in \groups, g \neq g'} \frac{1}{\size_{g'}} \sum_{\voter\in g'} \dis{\voter}{\opt} \\
		&+ \frac{1}{k} \sum_{g \in \groups} \frac{k-1}{k} \cdot \frac{1}{\size_g} \sum_{\voter \in g} (\dis{\voter}{\opt}+\dis{\voter}{\rep_g}) + \frac{\alpha}{k} \cdot \cost(\opt) & (\text{Triangle Inequality}) \\
		&= \frac{1}{k} \sum_{g \in \groups} \frac{1}{k} \sum_{g' \in \groups, g \neq g'} \frac{1}{\size_{g'}} \sum_{\voter\in g'} \dis{\voter}{\opt} \\
		&+ \frac{1}{k} \sum_{g \in \groups} \frac{k-1}{k} \cdot (\cost_g(\opt) + \cost_g(\rep_g)) + \frac{\alpha}{k} \cdot \cost(\opt) & (\text{Definition of }\cost_g(.))\\
		&\le \frac{1}{k} \sum_{g \in \groups} \frac{1}{k} \sum_{g' \in \groups, g \neq g'} \frac{1}{\size_{g'}} \sum_{\voter\in g'} \dis{\voter}{\opt} \\
		&+ \frac{1}{k} \sum_{g \in \groups} \frac{k-1}{k} \cdot (\cost_g(\opt) + \alpha \cdot \cost_g(\opt_g)) + \frac{\alpha}{k} \cdot \cost(\opt) & (\text{\Cref{prop:falpha}}) \\
		&\le \frac{1}{k} \sum_{g \in \groups} \frac{1}{k} \sum_{g' \in \groups, g \neq g'} \frac{1}{\size_{g'}} \sum_{\voter\in g'} \dis{\voter}{\opt} \\
		&+ \frac{1}{k} \sum_{g \in \groups} \frac{k-1}{k} \cdot (\cost_g(\opt) + \alpha \cdot \cost_g(\opt)) + \frac{\alpha}{k} \cdot \cost(\opt) & (\text{\Cref{prop:localcost}})\\
		&= \frac{1}{k} \sum_{g \in \groups} \frac{1}{k} \sum_{g' \in \groups, g \neq g'} \frac{1}{\size_{g'}} \sum_{\voter\in g'} \dis{\voter}{\opt} \\
		&+ \frac{(k-1)(\alpha+1)}{k}  \cdot \cost(\opt) + \frac{\alpha}{k} \cdot \cost(\opt) & (\text{Definition of }\avgavg) \\
		&= \frac{1}{k} \sum_{g \in \groups} \frac{1}{k} \sum_{g' \in \groups, g \neq g'} \cost_{g'}(\opt)  + \frac{\alpha k + k -1}{k} \cdot \cost(\opt) \\
		&= \frac{1}{k} \sum_{g \in \groups} \frac{1}{k} \sum_{g' \in \groups} \cost_{g'}(\opt) - \frac{1}{k} \sum_{g \in \groups} \frac{1}{k} \sum_{g' \in \groups, g=g'} \cost_{g'}(\opt) \\ 
		&+ \frac{\alpha k + k -1}{k} \cdot \cost(\opt) \\
		&= \cost(\opt) - \frac{1}{k} \sum_{g \in \groups} \frac{1}{k} \cdot \cost_{g}(\opt) + \frac{\alpha k + k -1}{k} \cdot \cost(\opt) \\
		&= \cost(\opt) - \frac{1}{k} \cdot \cost(\opt) + \frac{\alpha k + k -1}{k} \cdot \cost(\opt) \\
		&= (\alpha + 2 - \frac{2}{k}) \cost(\opt). 
	\end{align*}
\end{proof}

Once again, by using $\fpm$ instead of $\falpha$ as the in-group voting rule (with $\alpha = 3$) and applying \Cref{th:randdet_avgavg}, we conclude that $\mech = (\fpm, \fur)$ is a \randdet\ mechanism that satisfies the bound stated in \Cref{cor:randdet-avgavg-upper}.

\begin{corollary}[of \cref{th:randdet_avgavg}]
	For the \avgavg\, objective in general metric spaces, there exists a \randdet\, mechanism with distortion at most $5- \nicefrac{2}{k}$.	
	\label{cor:randdet-avgavg-upper}
\end{corollary}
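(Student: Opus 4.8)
The plan is to obtain this bound by a direct instantiation of the general result just proved in \Cref{th:randdet_avgavg}, rather than by any new analysis. That theorem already establishes that for \emph{any} deterministic in-group rule $\falpha$ with centralized distortion at most $\alpha$, the \randdet\ mechanism $(\falpha, \fur)$ achieves distortion at most $\alpha + 2 - \nicefrac{2}{k}$ for the \avgavg\ objective. All the real work—the repeated triangle-inequality expansions and the careful split between the diagonal ($g = g'$) and off-diagonal ($g \neq g'$) terms that extracts the $\nicefrac{2}{k}$ saving—is carried out there. Consequently, proving the corollary reduces entirely to choosing the in-group rule so as to make $\alpha$ as small as possible.

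First I would invoke the \emph{Plurality Matching} rule $\fpm$ of \citet{gkatzelis2020resolving}, which is a deterministic ordinal voting rule achieving metric distortion exactly $3$ in the centralized setting—the best possible value among deterministic ordinal rules. Applying $\fpm$ independently within each group therefore satisfies the hypothesis of \Cref{th:randdet_avgavg} with $\alpha = 3$: the guarantee needed is precisely the per-group centralized bound of \Cref{prop:falpha}, namely $\cost_g(\w_g) \le \alpha \cdot \cost_g(\opt_g)$, which $\fpm$ supplies since in the distributed model $\fin$ receives exactly the local profile $\profile^g$, matching the input of a centralized instance restricted to group $g$. Substituting $\alpha = 3$ into $\alpha + 2 - \nicefrac{2}{k}$ yields $5 - \nicefrac{2}{k}$, witnessed by the mechanism $\mech = (\fpm, \fur)$.

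There is essentially no obstacle at the level of the corollary itself; the entire difficulty was absorbed into \Cref{th:randdet_avgavg}. The only point deserving a moment of care is the legitimacy of plugging $\alpha = 3$ into a distributed bound using a centralized distortion guarantee for $\fpm$. This is sound precisely because the in-group rule operates on purely local information, so the centralized distortion of $\fpm$ transfers verbatim to each group's contribution, and \Cref{th:randdet_avgavg} is written to consume exactly this per-group guarantee. Hence the substitution is valid and the stated bound follows immediately.
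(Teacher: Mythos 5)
Your proposal is correct and matches the paper's own argument exactly: the paper likewise instantiates \Cref{th:randdet_avgavg} with the Plurality Matching rule $\fpm$ as the in-group rule (so $\alpha = 3$), concluding that $\mech = (\fpm, \fur)$ achieves distortion at most $5 - \nicefrac{2}{k}$. Your additional remark on why the centralized distortion guarantee of $\fpm$ transfers per group is sound and consistent with how \Cref{prop:falpha} is used in the theorem.
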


For the \avgmax\, and \maxmax\, objectives, we derive an upper bound of 5 following an argument analogous to the proof of \Cref{th:randdet_maxavg}. Now, we improve the upper bound of 5 to 3 by applying the property of pareto efficiency. 

\begin{theorem}
	\label{th:Xmaxuni}
	For the \avgmax\, and \maxmax\, objectives in general metric spaces, we have $\distortion((\fun, \fur)) \leq 3$.
\end{theorem}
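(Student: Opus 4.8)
The plan is to reduce the statement to a per-group estimate via \Cref{obs:expcost_randdet}, which for the mechanism $(\fun, \fur)$ gives $\expected{\cost(\w)} = \frac{1}{k}\sum_{g\in\groups}\cost(\rep_g)$ under either objective. Hence it suffices to control the cost of each representative $\rep_g$ against $\cost(\opt)$. The single piece of leverage I would extract from Pareto efficiency (\Cref{def:unan}) is an \emph{anchor voter}: since $\rep_g$ is Pareto efficient within group $g$, it is not the case that all voters of $g$ strictly prefer $\opt$ to $\rep_g$, so there exists $\hat{\voter}_g \in g$ with $\dis{\hat{\voter}_g}{\rep_g}\le \dis{\hat{\voter}_g}{\opt}$ (trivially so if $\rep_g=\opt$). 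This is the only place Pareto efficiency enters, and it is what replaces the quantitative distortion guarantee $\alpha$ used in \Cref{th:randdet_maxavg}.

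Next, for an arbitrary voter $u$ I would apply the triangle inequality twice, $\dis{u}{\rep_g}\le \dis{u}{\hat{\voter}_g}+\dis{\hat{\voter}_g}{\rep_g}$ and $\dis{u}{\hat{\voter}_g}\le\dis{u}{\opt}+\dis{\opt}{\hat{\voter}_g}$, and combine them with the anchor inequality $\dis{\hat{\voter}_g}{\rep_g}\le\dis{\hat{\voter}_g}{\opt}$ to obtain the clean estimate $\dis{u}{\rep_g}\le\dis{u}{\opt}+2\dis{\hat{\voter}_g}{\opt}$, valid for \emph{every} voter $u$. This is the workhorse inequality for both objectives, and it already exposes where the factor $3$ comes from ($1$ from $\dis{u}{\opt}$ and $2$ from the anchor detour).

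For \maxmax I would instantiate $u=\vstarstar{\rep_g}$, so that $\cost(\rep_g)=\dis{u}{\rep_g}$; then $\dis{u}{\opt}\le\maxmax(\opt)$ and $\dis{\hat{\voter}_g}{\opt}\le\maxmax(\opt)$ (both are single-voter costs of $\opt$, bounded by \Cref{prop:maxmax}), giving $\cost(\rep_g)\le 3\,\cost(\opt)$ for each $g$ and hence $\expected{\cost(\w)}\le 3\,\cost(\opt)$ after averaging. For \avgmax the per-group reasoning is identical but the averaging requires one extra idea. Fixing $g$ and letting $u_{g'}$ maximize $\dis{\cdot}{\rep_g}$ within each group $g'$ (so $\cost(\rep_g)=\frac1k\sum_{g'}\dis{u_{g'}}{\rep_g}$), the workhorse inequality yields $\cost(\rep_g)\le\frac1k\sum_{g'}\dis{u_{g'}}{\opt}+2\dis{\hat{\voter}_g}{\opt}\le\cost(\opt)+2\dis{\hat{\voter}_g}{\opt}$, where the first sum is at most $\avgmax(\opt)=\cost(\opt)$ since $u_{g'}\in g'$ (\Cref{prop:Xmax}).

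Averaging over $g$ then leaves the residual term $\frac{2}{k}\sum_g\dis{\hat{\voter}_g}{\opt}$, and the crux is that each anchor $\hat{\voter}_g$ lives in its own group $g$, so $\dis{\hat{\voter}_g}{\opt}\le\max_{\voter\in g}\dis{\voter}{\opt}$ and therefore $\frac1k\sum_g\dis{\hat{\voter}_g}{\opt}\le\frac1k\sum_g\max_{\voter\in g}\dis{\voter}{\opt}=\avgmax(\opt)=\cost(\opt)$, recovering the factor $3$. I expect this final step---matching the summed anchor costs back to $\avgmax(\opt)$ by exploiting that the anchors are distributed one per group---to be the only genuine subtlety; everything else is two triangle inequalities together with the Pareto anchor. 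It is precisely this structural feature (rather than any bound on the in-group distortion $\alpha$) that lets the $\alpha+2$ guarantee of \Cref{th:randdet_maxavg} improve to $3$ here.
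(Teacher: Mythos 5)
Your proposal is correct and follows essentially the same route as the paper's proof: the same reduction via \Cref{obs:expcost_randdet}, the same Pareto-efficiency anchor voter $\voter_g \in g$ with $\dis{\voter_g}{\rep_g} \le \dis{\voter_g}{\opt}$, and the same pairing of \Cref{prop:Xmax} with two triangle inequalities, yielding the identical workhorse bound $\dis{u}{\rep_g} \le \dis{u}{\opt} + 2\dis{\voter_g}{\opt}$ (the paper merely applies the triangle inequalities in the opposite order, bounding $\dis{\opt}{\rep_g}$ through the anchor rather than routing $u$ through it). Your explicit write-up of the \maxmax\ case just fills in what the paper dismisses as "a similar argument."
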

\begin{proof}
	We present a proof for the \avgmax\, objective. A similar argument can be used to prove the result for the \maxmax\, objective as well. 
	
	Consider an instance $\instance = (\voters, \candidates, \groups, \profile, \textsf{d})$ and $\randdet$\ mechanism $\mech = (\fun, \fur)$. By the property of pareto efficiency, for each group $g$, there exists a voter $\voter_g \in g$ who prefers $\rep_g$ to $\opt$. Therefore, we have
	\begin{align*}
		\expected{\cost(\w)}
		& = \sum_{g \in \groups} \frac{1}{k} \cdot \cost(\rep_g) & \text{(\Cref{obs:expcost_randdet})} \\
		& = \sum_{g \in \groups} \frac{1}{k} \cdot \frac{1}{k} \sum_{g' \in \groups} \dis{\vstar{\rep_g}{g'}}{\rep_g} & \text{(Definition of $\avgmax$)} \\
		& \le \sum_{g\in \groups} \frac{1}{k} \cdot \frac{1}{k} \sum_{g' \in \groups} \bigg(\dis{\vstar{\rep_g}{g'}}{\opt} + \dis{\opt}{\rep_g}\bigg) & \text{(Triangle Inequality)} \\
		& \le \sum_{g\in \groups} \frac{1}{k} \cdot \frac{1}{k} \sum_{g' \in \groups} \bigg(\dis{\vstar{\opt}{g'}}{\opt} + \dis{\opt}{\rep_g}\bigg) & \text{(\Cref{prop:Xmax})}\\
		& = \cost(\opt) + \sum_{g \in \groups} \frac{1}{k} \cdot \dis{\opt}{\rep_g} \\
		& \le \cost(\opt) + \frac{1}{k} \sum_{g \in \groups} \bigg(\dis{\opt}{\voter_g} + \dis{\voter_g}{\rep_g}\bigg) & \text{(Triangle Inequality)} \\
		& \le \cost(\opt) + \frac{2}{k} \sum_{g \in \groups} \dis{\voter_g}{\opt} & (\dis{\voter_g}{\rep_g} \le \dis{\voter_g}{\opt}) \\
		& \le \cost(\opt) + \frac{2}{k} \sum_{g \in \groups} \dis{\vstar{\opt}{g}}{\opt} &(\text{\Cref{prop:Xmax})}\\
		& = 3\cost(\opt).
	\end{align*}
\end{proof}

		\subsection{Lower Bounds}
		%randdet-lower

Now, we establish lower bounds on the distortion of $\randdet$ mechanisms. Specifically, \Cref{th:randdet-Xmax-lower} provides lower bounds for the $\maxmax$ and $\avgmax$ objectives, \Cref{th:randdet-maxavg-lower} covers the $\maxavg$ objective, and \Cref{th:randdet-avgavg-lower} addresses the $\avgavg$ objective. It is worth noting that all the lower bounds in this section are derived from symmetric instances and apply in that setting as well. Moreover, the bounds in \Cref{th:randdet-Xmax-lower} and \Cref{th:randdet-maxavg-lower} are obtained from instances on the line metric and thus also hold in that setting.

\begin{theorem}
	For the $\avgmax$ and $\maxmax$ objectives, the distortion of any $\randdet$ mechanism is at least $3$, even when the metric space is a line.
	\label{th:randdet-Xmax-lower}
\end{theorem}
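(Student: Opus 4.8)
The plan is to collapse the problem to a single group, where the over-group randomization disappears and both objectives reduce to the $\maxx$ cost, and then to run the classical commit-then-reveal argument on a line. Take $k=1$, so the unique group $g$ holds all voters and the representative set is the singleton $R=\{\rep_g\}$. By the Stage-2 definition, $\fov$ is applied to the sub-instance $(R,R,\profile^R,\textsf{d})$ whose candidate set is $R$ itself; since $|R|=1$, the rule $\fov$ is forced to output $\rep_g$ no matter how it randomizes. Hence the mechanism's winner is exactly $\fin(\profile^g)$ and is \emph{deterministic}. Moreover, with a single group the two objectives coincide with the max cost inside $g$, namely $\maxmax(c)=\avgmax(c)=\max_{\voter\in g}\dis{\voter}{c}$. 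So it suffices to exhibit a single-group line instance on which every deterministic in-group rule has max-cost distortion arbitrarily close to $3$.

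The crux is that $\fin$ observes only the ordinal profile, not the distances. I fix two candidates $a,b$ and two voters forming the profile $P$ in which $v_1$ ranks $a\succ b$ and $v_2$ ranks $b\succ a$, and I let $\w\in\{a,b\}$ denote the (deterministic) winner the mechanism returns on $P$. The rule must commit to $\w$ before any distances are revealed, so the adversary is free to supply \emph{any} line metric consistent with $P$; I provide one tailored to each possible value of $\w$. If $\w=a$, place on the real line $b=0$, $a=2$, $v_2=-1$, and $v_1=1+\epsilon$ for a small $\epsilon\in(0,1)$. A direct check confirms this realizes $P$, since $\dis{v_1}{a}=1-\epsilon<1+\epsilon=\dis{v_1}{b}$ and $\dis{v_2}{b}=1<3=\dis{v_2}{a}$. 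Then the optimal candidate is $b$, with $\max_{\voter}\dis{\voter}{b}=\max\{1+\epsilon,1\}=1+\epsilon$, while $\max_{\voter}\dis{\voter}{a}=\max\{1-\epsilon,3\}=3$, giving distortion $3/(1+\epsilon)$. If instead $\w=b$, use the mirror-image metric $a=0$, $b=2$, $v_1=-1$, $v_2=1+\epsilon$, which likewise realizes $P$ and symmetrically yields $\max_{\voter}\dis{\voter}{a}=1+\epsilon$ and $\max_{\voter}\dis{\voter}{b}=3$, again producing distortion $3/(1+\epsilon)$.

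In either case the mechanism selects a candidate whose max cost is a factor $3/(1+\epsilon)$ times that of the optimum. Letting $\epsilon\to 0$ (equivalently, taking the supremum over this one-parameter family in the definition of $\distortion$) gives $\distortion(\mech)\ge 3$ for both the $\maxmax$ and $\avgmax$ objectives; the instance lives on a line and, consisting of a single group, is trivially symmetric, matching the remarks preceding the theorem. I expect the only genuine subtlety to be the reduction of the first paragraph: one must argue that a singleton representative set strips away all over-group randomness—so the lower bound truly reduces to the deterministic case—together with the commit-then-reveal step that lets the adversary fix the metric after $\fin$ has named its winner. The computation itself is routine, and the strictness of the rankings, which forces any voter preferring $a$ to lie strictly beyond the midpoint of $a$ and $b$ and hence to contribute strictly more than $1$ to the cost of $b$, is exactly what keeps the ratio below $3$ for each fixed $\epsilon$ and makes the bound a supremum.
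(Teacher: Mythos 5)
Your proof is correct and takes essentially the same approach as the paper's: reduce to a single group on a line (so $\fov$'s randomness vanishes because $|R|=1$ and both objectives collapse to $\maxx$), then let the adversary fix the metric after the deterministic in-group rule has committed to a winner on the two-voter, two-candidate profile. The only cosmetic difference is that the paper places one voter equidistant from the two candidates (breaking the tie adversarially in the ordinal profile, a convention it states explicitly) to realize the ratio of exactly $3$ in a single instance, whereas you keep all preferences strict and obtain $3$ as the supremum of an $\epsilon$-family—both equally valid.
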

\begin{proof}
	%We provide a proof for the \avgmax\, objective. A similar argument can be used to prove the result for the \maxmax\, objective as well. 
	Consider a \randdet\ mechanism $\mech=(\fin, \fov)$. We construct an instance with candidates $\candidates = \{c_1, c_2\}$ and voters $\voters = \{v_1, v_2\}$ in a single group. $c_1$ and $c_2$ are located at positions $0$ and $1$, respectively. $v_1$ and $v_2$ with preference profiles $\profile_1 = (c_1, c_2)$ and $\profile_2 = (c_2, c_1)$, are also positioned at $-0.5$ and $0.5$, respectively. Refer to \Cref{fig:randdet-Xmax} for a visual illustration. Without loss of generality, assume that $\mech$ selects $c_2$ as the representative of the group, and thus the final winner is $c_2$. Since there is only one group, the \avgmax\, and \maxmax\, objectives both simplify to \maxx.
%	By the definition of the \avgmax\, objective, and since there is only one group, 
	Thus, we have $\cost(c_1) = \frac{1}{2}$ and $\cost(c_2) = \frac{3}{2}$. Clearly, $c_1$ is the optimal candidate.
	The distortion of $\mech$ is $$\distortion(\mech) \ge \frac{\cost(c_2)}{\cost(c_1)} = 3.$$
	
	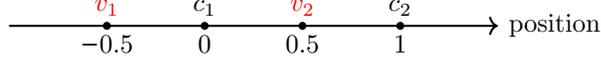
\begin{figure}[t]
		\centering
		\scalebox{1}{\begin{tikzpicture}[x=2cm, y=1cm, font=\small, scale=1.3]
	
	% Axis line
	\draw[->, thick] (-1, 0) -- (1.5, 0) node[right] {position};

	\coordinate (V1) at (-0.5,0);
	\coordinate (C1) at (0,0);
	\coordinate (OTV) at (0.5,0);
	\coordinate (OTC) at (1,0);
	
	\filldraw[black] (V1) circle (1pt) node[anchor=south] {\textcolor{red}{$v_1$}};
	\filldraw[black] (C1) circle (1pt) node[anchor=south] {\textcolor{black}{$c_1$}};
	\filldraw[black] (OTV) circle (1pt) node[anchor=south] {\textcolor{red}{$v_2$}};
	\filldraw[black] (OTC) circle (1pt) node[anchor=south] {\textcolor{black}{$c_2$}};
	
	% Position markers
	\foreach \x/\lab in {-0.5/{$-0.5$}, 0/0, 0.5/{$0.5$}, 1/{$1$}} {
		\node[below] at (\x, 0) {\lab};
	}
	
\end{tikzpicture}}
		\caption{An example used in the proof of \Cref{th:randdet-Xmax-lower}.}
		\label{fig:randdet-Xmax}
	\end{figure}
\end{proof}
%\begin{proofsketch}
%	We prove the lower bound for the $\avgmax$ objective (the same idea applies to $\maxmax$). Consider the instance shown in \Cref{fig:randdet-Xmax}. In this setting, any $\randdet$ mechanism must choose one of the two candidates as the final winner. Suppose w.l.o.g. it selects $c_2$.
%	Then, the average distance to $c_2$ is $\nicefrac{3}{2}$, while the optimal candidate $c_1$ yields a cost of $\nicefrac{1}{2}$. This gives a distortion of at least $\frac{3/2}{1/2} = 3$.
%\end{proofsketch}

For the \maxavg, and \avgavg, objectives, we use the Bias Tournament to establish the lower bounds stated in \Cref{th:randdet-maxavg-lower,th:randdet-avgavg-lower}.
%In the interest of space, we defer the proof of  \Cref{th:randdet-maxavg-lower,th:randdet-avgavg-lower} to the appendix.
\begin{theorem}
	For the $\maxavg$ objective, the distortion of any $\randdet$ mechanism is at least $5$, even when the metric space is a line.
	\label{th:randdet-maxavg-lower}
\end{theorem}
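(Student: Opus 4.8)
The plan is to fix an arbitrary $\randdet$ mechanism $\mech = (\fin, \fov)$ and design a line instance (in fact a symmetric family of them) whose distortion approaches $5$, matching the upper bound of \Cref{cor:randdet-maxavg-upper}. The whole difficulty is that $\fin$ is an \emph{arbitrary} deterministic rule while $\fov$ is an \emph{arbitrary} randomized rule, so I must simultaneously (i)~force the first stage to discard a good candidate and (ii)~prevent the randomized second stage from recovering it. I would exploit determinism of $\fin$ through the Bias Tournament: fixing an ordering $\sigma$ of $\candidates$, I form $\tour{\fin}{\candidates}{\sigma}$ and apply \Cref{pr:bias-tournamnet-indegree} to obtain a ``loser'' candidate $\opt$ with in-degree at least $\lceil (m-1)/2 \rceil$. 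Write $D$ for its set of in-neighbours, i.e.\ the candidates that $\fin$ selects over $\opt$ in the corresponding two-voter elections. The candidate $\opt$ will play the role of the (near-)optimal alternative that is nonetheless never chosen as a representative.

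Next I build the forcing groups. For each $d \in D$ I create a two-voter group with profiles exactly $\movetofirst{\movetofirst{\sigma}{\opt}}{d}$ and $\movetofirst{\movetofirst{\sigma}{d}}{\opt}$, so that by the definition of the Bias Tournament $\fin$ is compelled to output $d$ as that group's representative. Crucially, $\opt$ is the top choice of one voter in each such group, yet $\fin$'s tie-breaking bias discards it; hence the representative set is contained in $D$ and $\opt$ never survives to the second stage even though it is ordinally attractive. I would then place the candidates on the line so that $\opt$ has small $\maxavg$ cost while the representatives are collectively far more expensive. The two-voter forcing block is precisely the line gadget that already realizes in-group distortion $3$ for the average objective (one voter just on $\opt$'s side of the midpoint, the other just past the midpoint toward $d$), and the remaining factor must come from the over-group stage exactly as in the tight chain of \Cref{th:randdet_maxavg}, where $\maxavg(\rep_g)\le\maxavg(\opt)+\dis{\opt}{\rep_g}$ and $\dis{\opt}{\rep_g}\le\cost_g(\opt)+\cost_g(\rep_g)$ are each pushed to equality by collinearity on the line.

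The randomized second stage is the main obstacle, and I would defeat it by a symmetry/averaging argument rather than by making \emph{every} representative expensive (which a line cannot support, since on a line there is always a representative close to $\opt$). The idea is to use a family of line instances that share the same ordinal representative-profile $\profile^R$ and equal group sizes, so that $\fov$---being a function of $\profile^R$ and the group sizes alone---returns the \emph{same} distribution over representatives on every member, while the metric realizations (same orderings, rescaled distances) differ so that the costly representatives are permuted across the family. Since $\fov$'s distribution is then fixed across the family, the maximum expected cost over the family is at least its average over the family, and by tuning the realizations so that this average equals $5\cdot\maxavg(\opt)$ the bound $\distortion(\mech)\ge \expected{\maxavg(\w)}/\maxavg(\opt)\ge 5$ follows for the given $\fov$; the instances are symmetric, so the bound also holds in the equal-size setting.

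The hard part will be the geometric bookkeeping on the line: realizing the exact tournament profiles (including the common tail order $\sigma$) for every member of the family, keeping $\maxavg(\opt)$ controlled in each member while the rotating ``expensive'' representatives drive the averaged expected cost up to $5$, and checking that the conflicting tightness conditions extracted from the upper-bound chain of \Cref{th:randdet_maxavg}---namely that each triangle inequality is essentially an equality and that the worst group for a selected representative aligns with the worst group for $\opt$---can be met jointly across the family. Once the family and its costs are pinned down, the conclusion is immediate.
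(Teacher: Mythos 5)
Your opening moves match the paper exactly: fix $\sigma$, build $\tour{\fin}{\candidates}{\sigma}$, invoke \Cref{pr:bias-tournamnet-indegree} to get a loser candidate, and force each group's representative via the two-voter tournament gadget. But then you abandon the direct construction on the strength of a claim that is simply false: that a line ``cannot support'' making \emph{every} representative expensive because ``on a line there is always a representative close to $\opt$.'' Under the $\maxavg$ objective this intuition fails, and exploiting that failure is precisely the paper's proof. With four candidates ($m=4$, so the loser $c_1$ has two in-neighbours $c_2,c_3$) and two groups $g_1=\{v_1,v_2\}$, $g_2=\{v_3,v_4\}$ placed at $-0.5,0$ and $0,0.5$, with $c_2,c_1,c_3$ at $-1,0,1$ (and $c_4$ placed in one of three positions depending on $\sigma$ to keep the profiles consistent), the tournament forces $R=\{c_2,c_3\}$. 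Each representative is geometrically close to \emph{its own} group, but $\maxavg$ takes the maximum over groups, so the cost of $c_2$ is charged by the far group $g_2$ and equals $\tfrac14(1+1.5)\cdot 2/2=\tfrac54$, symmetrically for $c_3$, while $\maxavg(c_1)=\tfrac14$. Both representatives cost exactly $5\cdot\maxavg(\opt)$, so the randomized second stage is irrelevant: whatever distribution $\fov$ outputs over $\{c_2,c_3\}$, the expected cost is $\tfrac54$ and the distortion is $5$ on a single instance. The obstacle you invented (needing to defeat $\fov$ by symmetrization) does not exist for this objective.

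Because of that wrong turn, your actual argument is the averaging plan over a family of instances, and that plan is never executed: you do not exhibit the family, do not show the members share the representative profile $\profile^R$ while permuting the expensive representatives, do not control how $\maxavg(\opt)$ varies across members (the averaging inequality you invoke needs the ratio, not the cost, to average to $5$, which is a different and harder bookkeeping task), and you explicitly defer all of this as ``the hard part.'' As it stands the proposal is a program, not a proof, and the program is built to route around a difficulty that the correct construction shows is illusory. The fix is short: keep your tournament setup, specialize to $m=4$ and two forcing groups, and place them symmetrically about the loser as above; the max-over-groups structure of $\maxavg$ then does all the work that your symmetrization was meant to do.
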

\begin{proof}
	Consider a $\randdet$ mechanism $\mech=(\fin, \fov)$. We construct an instance with candidates $\candidates = \{c_1, c_2, c_3, c_4\}$, and voters $\voters = \{v_1, v_2, v_3, v_4\}$, all located along a line metric. The voters are partitioned into two groups, $g_1 = \{v_1, v_2\}$ and $g_2 = \{v_3, v_4\}$.
	Let $\sigma$ be an arbitrary ordering of the candidates. Without loss of generality, assume that $c_1$ is a candidate with in-degree at least $\left\lceil \frac{m-1}{2} \right\rceil = 2$ in tournament $\tour{\fin}{\candidates}{\sigma}$, such a candidate is guaranteed to exist by
	\Cref{pr:bias-tournamnet-indegree}. Suppose $c_2$ and $c_3$ are two candidates with directed edges toward $c_1$, meaning that $c_1$ is the "losing" candidate while both $c_2$ and $c_3$ "defeat" it in the tournament.
	We may further assume that $c_2 \succ_{\sigma} c_3$. Now, consider the following construction on the line metric:
	
	\begin{itemize}
		\item Voters $v_2$ and $v_3$ are located at positions $0$, while voters $v_1$ and $v_4$ are located at $-0.5$ and $0.5$, respectively.
		
		\item Candidates $c_2$, $c_1$, and $c_3$ are located at positions $-1$, $0$, and $1$, respectively. The position of candidate $c_4$ depends on the ordering $\sigma$, ensuring the input to  $\tour{\fin}{\candidates}{\sigma}$ remains valid. We analyze three cases based on the relative ordering of $c_2, c_3, c_4$:
		\begin{itemize}
			\item \textbf{Case 1:} If $c_2 \succ_{\sigma} c_3 \succ_{\sigma} c_4$, candidate $c_4$ is located at position $10$. Refer to \Cref{fig:randdet-maxavg-lower-jkl} for an illustration.
			\begin{figure}[t]
				\centering
				\begin{tikzpicture}[x=2cm, y=1cm, font=\small]
	
	% Axis line
	\draw[->, thick] (-1.5, 0) -- (5, 0) node[right] {position};
	
	% Coordinates
	\coordinate (V1) at (-0.5,0);
	\coordinate (V2) at (0,0.4);
	\coordinate (V3) at (0,0.8); 
	\coordinate (V4) at (0.5,0);
	\coordinate (C1) at (0,0);
	\coordinate (C2) at (-1,0);
	\coordinate (C3) at (1,0);
	\coordinate (C4) at (4,0);
	
	% Points
	\filldraw[black] (V1) circle (1pt) node[anchor=south] {\textcolor{red}{$v_1$}};
	\filldraw[black] (V2) node[anchor=south] {\textcolor{red}{$v_2$}};
	\filldraw[black] (V3) node[anchor=south] {\textcolor{blue}{$v_3$}};
	\filldraw[black] (V4) circle (1pt) node[anchor=south] {\textcolor{blue}{$v_4$}};
	
	\filldraw[black] (C1) circle (1pt) node[anchor=south] {\textcolor{black}{$c_1$}};
	\filldraw[black] (C2) circle (1pt) node[anchor=south] {\textcolor{black}{$c_2$}};
	\filldraw[black] (C3) circle (1pt) node[anchor=south] {\textcolor{black}{$c_3$}};
	\filldraw[black] (C4) circle (1pt) node[anchor=south] {\textcolor{black}{$c_4$}};
	
	% Position markers
	\foreach \x/\lab in {-1/$-1$, -0.5/{$-0.5$}, 0/0, 0.5/{$0.5$}, 1/{$1$}} {
		\node[below] at (\x, 0) {\lab};
	}
	\foreach \x/\lab in {10/{$10$}} {
		\node[below] at (4, 0) {\lab};
	}
\end{tikzpicture}
				\caption{An example used in case 1 of \Cref{th:randdet-maxavg-lower}. Different voter groups are distinguished by distinct colors.}
				\label{fig:randdet-maxavg-lower-jkl}
			\end{figure}
			
			\item \textbf{Case 2:} If $c_2 \succ_{\sigma} c_4 \succ_{\sigma} c_3$, candidate $c_4$ is located at position $-1$. This case is illustrated in \Cref{fig:randdet-maxavg-lower-jlk}
			\begin{figure}[t]
				\centering
				\begin{tikzpicture}[x=2cm, y=1cm, font=\small]
	
	% Axis line
	\draw[->, thick] (-1.5, 0) -- (1.5, 0) node[right] {position};
	
	% Coordinates
	\coordinate (V1) at (-0.5,0);
	\coordinate (V2) at (0,0.4);
	\coordinate (V3) at (0,0.8); 
	\coordinate (V4) at (0.5,0);
	\coordinate (C1) at (0,0);
	\coordinate (C2) at (-1,0);
	\coordinate (C3) at (1,0);
	\coordinate (C4) at (-1,0.4);
	
	% Points
	\filldraw[black] (V1) circle (1pt) node[anchor=south] {\textcolor{red}{$v_1$}};
	\filldraw[black] (V2) node[anchor=south] {\textcolor{red}{$v_2$}};
	\filldraw[black] (V3) node[anchor=south] {\textcolor{blue}{$v_3$}};
	\filldraw[black] (V4) circle (1pt) node[anchor=south] {\textcolor{blue}{$v_4$}};
	
	\filldraw[black] (C1) circle (1pt) node[anchor=south] {\textcolor{black}{$c_1$}};
	\filldraw[black] (C2) circle (1pt) node[anchor=south] {\textcolor{black}{$c_2$}};
	\filldraw[black] (C3) circle (1pt) node[anchor=south] {\textcolor{black}{$c_3$}};
	\filldraw[black] (C4) node[anchor=south] {\textcolor{black}{$c_4$}};
	% Position markers
	\foreach \x/\lab in {-1/$-1$, -0.5/{$-0.5$}, 0/0, 0.5/{$0.5$}, 1/{$1$}} {
		\node[below] at (\x, 0) {\lab};
	}
\end{tikzpicture}
				\caption{An example used in case 2 of \Cref{th:randdet-maxavg-lower}. Different voter groups are distinguished by distinct colors.}
				\label{fig:randdet-maxavg-lower-jlk}
			\end{figure}
			
			\item \textbf{Case 3:} If $c_4 \succ_{\sigma} c_2 \succ_{\sigma} c_3$, candidate $c_4$ is located at $1$. A visual representation of this case can be found in \Cref{fig:randdet-maxavg-lower-ljk}
			\begin{figure}[t]
				\centering
				\begin{tikzpicture}[x=2cm, y=1cm, font=\small]
	
	% Axis line
	\draw[->, thick] (-1.5, 0) -- (1.5, 0) node[right] {position};
	
	% Coordinates
	\coordinate (V1) at (-0.5,0);
	\coordinate (V2) at (0,0.4);
	\coordinate (V3) at (0,0.8); 
	\coordinate (V4) at (0.5,0);
	\coordinate (C1) at (0,0);
	\coordinate (C2) at (-1,0);
	\coordinate (C3) at (1,0);
	\coordinate (C4) at (1,0.4);
	
	% Points
	\filldraw[black] (V1) circle (1pt) node[anchor=south] {\textcolor{red}{$v_1$}};
	\filldraw[black] (V2) node[anchor=south] {\textcolor{red}{$v_2$}};
	\filldraw[black] (V3) node[anchor=south] {\textcolor{blue}{$v_3$}};
	\filldraw[black] (V4) circle (1pt) node[anchor=south] {\textcolor{blue}{$v_4$}};
	
	\filldraw[black] (C1) circle (1pt) node[anchor=south] {\textcolor{black}{$c_1$}};
	\filldraw[black] (C2) circle (1pt) node[anchor=south] {\textcolor{black}{$c_2$}};
	\filldraw[black] (C3) circle (1pt) node[anchor=south] {\textcolor{black}{$c_3$}};
	\filldraw[black] (C4) node[anchor=south] {\textcolor{black}{$c_4$}};
	% Position markers
	\foreach \x/\lab in {-1/$-1$, -0.5/{$-0.5$}, 0/0, 0.5/{$0.5$}, 1/{$1$}} {
		\node[below] at (\x, 0) {\lab};
	}
\end{tikzpicture}
				\caption{An example used in case 3 of \Cref{th:randdet-maxavg-lower}. 
				Different voter groups are distinguished by distinct colors.}
				\label{fig:randdet-maxavg-lower-ljk}
			\end{figure}
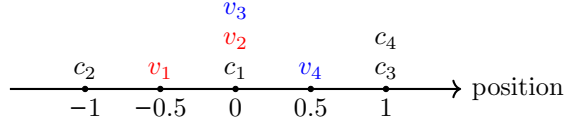
		\end{itemize}
	\end{itemize}
	Note that when a voter is equidistant from two candidates, multiple preference profiles may be consistent with the underlying metric space. According to $\tour{\fin}{\candidates}{\sigma}$, we can determine the group representatives:
	\begin{itemize}
		\item A directed edge from $c_2$ to $c_1$, implies that $c_2$ is selected as the representative for group $g_1$.
		
		\item Similarly, a directed edge from $c_3$ to $c_1$, means $c_3$ is the representative for group $g_2$.
	\end{itemize}
	By the definition of the $\maxavg$ objective, we have $\cost(c_2) = \cost(c_3) = \frac{5}{4}$ and $\cost(c_1) = \frac{1}{4}$. Thus, $c_1$ is the optimal candidate in all cases.
 	The mechanism must select the final winner from the group representatives, $c_2$ or $c_3$. Finally, we derive the distortion of mechanism $\mech$ as:
	\begin{align*}
		\distortion(\mech)
		& \ge \min \left({\frac{\cost(c_2)}{\cost(\opt)},\frac{\cost(c_3)}{\cost(\opt)}} \right) \\
		& = \frac{\frac{5}{4}}{\cost(c_1)} \\
		& = 5. 
	\end{align*}
\end{proof}

We now establish the lower bound for the \avgavg\ objective in \Cref{th:randdet-avgavg-lower}.

\begin{theorem}
	For general metric spaces and the \avgavg\, objective, the distortion of any $\randdet$ mechanism is at least $5-\frac{2}{k}$.
	\label{th:randdet-avgavg-lower}
\end{theorem}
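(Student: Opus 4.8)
The plan is to lift the line-metric construction behind \Cref{th:randdet-maxavg-lower} to $k$ groups, replacing the line by a shortest-path (graph) metric and again using the Bias Tournament to pin down each group's representative. The point is that under $\avgavg$ (unlike $\maxavg$, where only the worst group is charged) the winning representative $c_i$ is charged a \emph{discounted} cost in its own group, and this discount, once averaged over all $k$ groups, subtracts exactly $\tfrac{2}{k}$ of the optimum from the naive bound of $5$. So the whole argument is about building an instance in which every representative has $\avgavg$ equal to $\tfrac{5k-2}{4k}$ while the optimum is $\tfrac14$.

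Concretely, I would fix an arbitrary ordering $\sigma$ over a candidate set $\candidates$ of size $m=2k$, form $\tour{\fin}{\candidates}{\sigma}$, and invoke \Cref{pr:bias-tournamnet-indegree} to get a candidate $c_0$ of in-degree at least $\lceil (m-1)/2\rceil = k$. Let $c_1,\dots,c_k$ be $k$ of its in-neighbors and let the remaining $k-1$ candidates be inert ``extras.'' I then build a star-like graph centered at $c_0$: for each group $g_i=\{v_i,v_i'\}$ attach a length-$1$ path $c_0 - v_i - c_i$ (so $d(v_i,c_0)=d(v_i,c_i)=\tfrac12$ and $d(c_0,c_i)=1$) and place $v_i'$ at $c_0$; join every extra candidate to $c_0$ by an edge of length $1$. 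Taking shortest-path distances, one verifies there are no shortcuts, so that $d(v_i',c_i)=1$, and for $j\neq i$ one gets $d(v_j,c_i)=\tfrac32$ and $d(v_j',c_i)=1$.

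The crucial step is to check that the induced profile of each $g_i$ is exactly the pair of Bias profiles $\movetofirst{\movetofirst{\sigma}{c_0}}{c_i}$ and $\movetofirst{\movetofirst{\sigma}{c_i}}{c_0}$, so that the edge $c_i\to c_0$ forces $\fin$ to return $c_i$ as the representative of $g_i$. This is where the delicacy lies: inside $g_i$, voter $v_i$ sees $c_i,c_0$ tied at distance $\tfrac12$ and all other candidates tied at $\tfrac32$, while $v_i'$ sees $c_0$ alone at distance $0$ and \emph{all} remaining candidates (including $c_i$) tied at distance $1$; since these are genuine ties, breaking them in $\sigma$-order — and promoting $c_i$ to the second position in each voter's ranking — is metrically legitimate and reproduces the two Bias profiles verbatim. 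Once the representative set is pinned to $\{c_1,\dots,c_k\}$, a direct computation gives $\cost_{g_i}(c_0)=\tfrac14$ for every group, hence $\avgavg(c_0)=\tfrac14$ and $c_0$ is optimal, whereas $\cost_{g_i}(c_i)=\tfrac34$ and $\cost_{g_j}(c_i)=\tfrac54$ for $j\neq i$, so $\avgavg(c_i)=\tfrac1k\!\left(\tfrac34+(k-1)\tfrac54\right)=\tfrac{5k-2}{4k}$. Because $\fov$ must output some $c_i$, the expected cost equals $\tfrac{5k-2}{4k}$ regardless of how $\fov$ randomizes, and the distortion is at least $(5k-2)/k = 5-\tfrac2k$.

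The main obstacle is the \emph{simultaneous} realization of all $k$ Bias profiles inside one consistent metric: I must ensure that parking the non-representative candidates at distance $1$ from $c_0$ (i) introduces no shortcut that perturbs any claimed distance, (ii) keeps every group's local profile exactly equal to its Bias profile so that each $c_i$ is genuinely forced by $\fin$, and (iii) leaves $c_0$ strictly optimal. The careful bookkeeping of the own-group term $\tfrac34$ against the cross-group term $\tfrac54$, averaged over the $k$ groups, is precisely what separates the $5-\tfrac2k$ obtained here from the plain $5$ of the $\maxavg$ case, and is the part that requires the ``more delicate analysis.''
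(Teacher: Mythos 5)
Your proposal is correct and is essentially the paper's own proof: the same use of \Cref{pr:bias-tournamnet-indegree} to extract a candidate of in-degree at least $k$ in the Bias Tournament on $m=2k$ candidates, the same star-shaped tree metric with one voter at the center and one at each branch midpoint so that each group's profile reproduces the two Bias profiles and forces $\fin$ to pick $c_i$, and the same $\avgavg$ accounting of the own-group versus cross-group costs yielding $\bigl(5-\tfrac{2}{k}\bigr)$. The only deviations are cosmetic — your distances are the paper's scaled by $\tfrac12$, and you attach the $k-1$ inert candidates directly to the center instead of at the end of an extra branch — neither of which changes the induced profiles or the cost ratio.
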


%\begin{proofsketch}
%	Consider any $\randdet$ mechanism $\mech = (\fin, \fov)$, we construct an instance with set of $2k$ candidates $\candidates$, voters $\voters = \{v_1, v_2, \ldots, v_{2k}\}$, and $k$ groups $g_i = \{v_{2i-1}, v_{2i}\}$ for $1 \leq i \leq k$. Let $\sigma$ be an arbitrary ordering of $\candidates$. We construct the Bias Tournament and identify a candidate with in-degree at least $k$. Without loss of generality, assume this candidate is $c_{2k}$, and $c_1, \ldots, c_{k}$ are $k$ candidates with out-degree towards $c_{2k}$. Use the shortest-path distances of the graph $G$ illustrated in \Cref{fig:randdet-avgavg-graph} as the metric space. It can be verified that for each group $g_i$, the selected representative is $c_1$, while the optimal candidate is $c_{2k}$, yielding the desired lower bound.
%\end{proofsketch}

\begin{proof}
	Consider a \randdet\ mechanism $\mech=(\fin, \fov)$. We construct an instance with a set of $m=2k$ candidates, $\candidates = \{c_1, c_2, \ldots, c_{m = 2k}\}$, a set of $n=2k$ voters, $\voters = \{v_1, v_2, \ldots, v_{n = 2k}\}$, and $k$ groups $g_i = \{v_{2i-1}, v_{2i}\}$ for $1 \le i \le k$.
	Let $\sigma$ be an arbitrary ordering of the candidates. Without loss of generality, assume that $c_{2k}$ is a candidate with in-degree at least $\left\lceil \frac{m - 1}{2} \right\rceil = k$ in the tournament $\tour{\fin}{\candidates}{\sigma}$, such a candidate is guaranteed to exist by \Cref{pr:bias-tournamnet-indegree}. Further, suppose $c_1, c_2, \ldots, c_k$ are $k$ candidates that have directed edges toward $c_{2k}$ in the tournament, meaning that $c_{2k}$ is the "losing" candidate.
		
	We construct a connected graph $G$ with $2k+3$ vertices, denoted $u_1, u_2, \ldots, u_{2k+3}$, where the shortest-path distances in $G$ define the underlying metric space $\textsf{d}$. Each voter and candidate is placed on one of the vertices (a single vertex may host multiple entities). The graph $G$ is constructed as follows (see \Cref{fig:randdet-avgavg-graph} for an illustration):
	\begin{itemize}
		\item Place candidate \( c_{2k} \) at vertex \( u_1 \).
		
		\item For each \( 1 \le i \le k+1 \), add an edge between \( u_1 \) and \( u_{2i} \), and another edge between \( u_{2i} \) and \( u_{2i+1} \). This forms \( k+1 \) branches extending from the central vertex \( u_1 \).
		
		\item For each \( 1 \le i \le k \), place voter \( v_{2i-1} \) at vertex \( u_1 \), voter \( v_{2i} \) at vertex \( u_{2i} \), and candidate \( c_i \) at vertex \( u_{2i+1} \).
		
		\item For each \( k+1 \le i \le 2k-1 \), place candidate \( c_i \) at vertex \( u_{2k+3} \).
	\end{itemize}	
%Refer to \Cref{fig:randdet-avgavg-graph} for an illustration of the construction.

\begin{figure}[t]
	\centering
	{\begin{tikzpicture}[every node/.style={}, % Removed the global minimum size
	label distance=0pt
	]
	
	% Central node u_1
	\node[minimum size=1.3cm, circle, draw] (u1) at (0,0) {$u_1$};
	\node[above=0pt of u1, draw=none] {$c_{2k}$};
	\node[below=1pt of u1, draw=none] {$\textcolor{red}{v_1}, \textcolor{blue}{v_3}, \dots, \textcolor{purple}{v_{2k-1}}$};
	
	% First branch (i = 1)
	\node[minimum size=1.3cm, circle, draw] (u2) at (2,2.4) {$u_2$};
	\node[above left=0pt of u2, draw=none] {\textcolor{red}{$v_2$}};
	
	\node[minimum size=1.3cm, circle, draw] (u3) at (2.8,4.0) {$u_3$};
	\node[above=0pt of u3, draw=none] {$c_1$};
	\draw (u1) -- (u2) -- (u3);
	
	% Second branch (i = 2)
	\node[minimum size=1.3cm, circle, draw] (u4) at (4.5,2.4) {$u_4$};
	\node[above left=0pt of u4, draw=none] {\textcolor{blue}{$v_4$}};
	
	\node[minimum size=1.3cm, circle, draw] (u5) at (5.3,4.0) {$u_5$};
	\node[above=0pt of u5, draw=none] {$c_2$};
	\draw (u1) -- (u4) -- (u5);
	
	% Ellipsis for remaining branches
	\node[draw=none] at (6.3, 2.6) {$\dots$};
	
	% Last branch (i = k)
	\node[minimum size=1.3cm, circle, draw] (uN1) at (8,2.4) {$u_{2k}$};
	\node[above left=0pt of uN1, draw=none] {\textcolor{purple}{$v_{2k}$}};
	
	\node[minimum size=1.3cm, circle, draw] (uN2) at (8.8,4.0) {$u_{2k+1}$};
	\node[above=0pt of uN2, draw=none] {$c_k$};
	\draw (u1) -- (uN1) -- (uN2);
	
	% Node for remaining y_i candidates
	\node[minimum size=1.3cm, circle, draw] (uy) at (6.8,0) {$u_{2k+3}$};
	\node[below=0pt of uy, draw=none] {$c_{k+1},\dots,c_{2k-1}$};
	%label=below:{}
	% New node u_{2k+2}
	\node[minimum size=1.3cm, circle, draw] (uM2) at (3.3,0) {$u_{2k+2}$};
	\draw (u1) -- (uM2) -- (uy);
	
\end{tikzpicture}}
	\caption{Tree graph used in the proof of \Cref{th:randdet-avgavg-lower}. Different voter groups are distinguished by distinct colors.}
	\label{fig:randdet-avgavg-graph}
\end{figure}
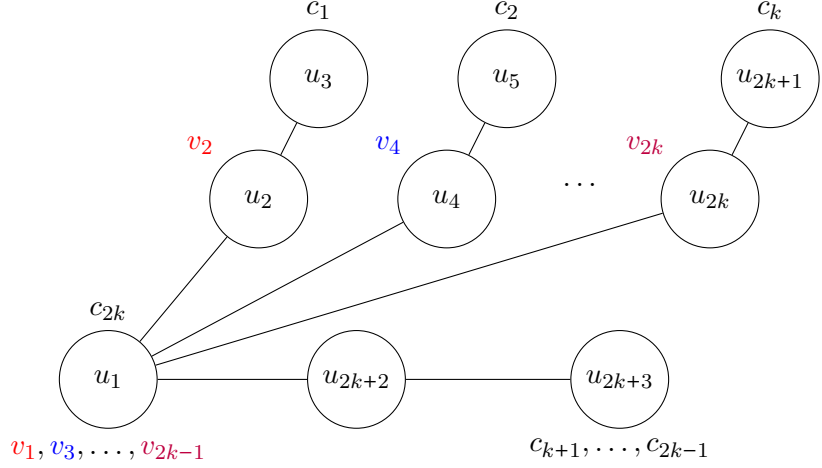

Pairwise distances between the candidates and voters are presented in \Cref{tab:randdet-avgavg-table-x,tab:randdet-avgavg-table-y}. Moreover, the preference profiles in \Cref{tab:randdet-avgavg-profiles} induced by the shortest-path distances in graph $G$, are consistent with the metric space $\textsf{d}$. Note that multiple preference profiles may be consistent with $\textsf{d}$.

According to $\tour{\fin}{\candidates}{\sigma}$, the representative of group $g_i$ is candidate $c_i$ for any $1 \leq i \leq k$. Thus, the mechanism must select one of these representatives as the final winner. By the definition of the $\avgavg$ objective, we have $\cost(c_{2k}) = \frac{1}{2}$, and $\cost(c_i) = \frac{5 - \frac{2}{k}}{2}$, for all $1 \leq i \leq k$. Thus, $c_{2k}$ is the optimal candidate. It follows that the distortion of mechanism $\mech$:

\begin{align*}
	\distortion(\mech) & \ge \frac{\min_{1 \le i \le k} \bigg(\cost(c_{i}) \bigg)}{\cost(\opt)} \\
	& = \frac{\frac{5 - \frac{2}{k}}{2}}{\cost(c_{2k})} \\
	& = 5 - \frac{2}{k}.
\end{align*}
%	The Bias Tournament $\tour{\fin}{\candidates}{\sigma}$ tells us that the representative of group $g_i$ is candidate $x_i$. Crucially, neither $s$ nor any of the candidates $y_i$ is a group representative. Therefore, none of them can be the overall winner.
%	
%	Similar to \Cref{th:randdet-maxavg-lower}, from the Bias Tournament $\tour{\fin}{\candidates}{\sigma}$ we know that the representative of $g_i$ is $x_i$. Crucially, neither $s$ nor any of the $y_i$ candidates is the representative of each group. Hence, they cannot be the overall winner, because mechanism $\mech$ must select a winner from the group representatives.

\begin{table}[t]
	\centering
	\begin{tabular}{|>{\columncolor{gray!20}}c|*{1}{>{\centering\arraybackslash}p{1cm}}|}
		\hline
		\rowcolor{gray!20}
		$\dis{\cdot}{\cdot}$ & $\bm{c_i}$ \\
		\hline
		$\bm{v_{2i-1}}$ & 2 \\
		$\bm{v_{2i}}$ & 1 \\
		$\bm{v_{2j-1}}$ & 2 \\
		$\bm{v_{2j}}$ & 3 \\
		\hline
	\end{tabular}
	\vspace{0.3cm}
	\caption{For any \(1 \le i, j \le k\) with \(i \ne j\), the shortest-path distances in graph $G$ between candidates \(c_1, c_2, \ldots, c_k\) and the voters used in the proof of \Cref{th:randdet-avgavg-lower}.}
	\label{tab:randdet-avgavg-table-x}
\end{table}
\begin{table}[t]
	\centering
	\begin{tabular}{|>{\columncolor{gray!20}}c|*{2}{>{\centering\arraybackslash}p{1cm}}|}
		\hline
		\rowcolor{gray!20}
		$\dis{\cdot}{\cdot}$ & $\bm{c_i}$ & $\bm{c_{2k}}$\\
		\hline
		$\bm{v_{2j-1}}$ & 2 & 0\\
		$\bm{v_{2j}}$ & 3 & 1 \\
		\hline
	\end{tabular}
	\vspace{0.3cm}
	\caption{For any \(k+1 \le i \le 2k-1\) and \(1 \le j \le k\), the shortest-path distances in graph $G$ between candidates \(c_{k+1}, c_{k+2}, \ldots, c_{2k-1}, c_{2k}\) and the voters used in the proof of  \cref{th:randdet-avgavg-lower}.}
	\label{tab:randdet-avgavg-table-y}
\end{table}
\begin{table}[t]
	\centering
	\begin{tabular}{|>{\columncolor{gray!20}}c|*{1}{>{\centering\arraybackslash}p{3.5cm}}|}
		\hline
		\rowcolor{gray!20}
		Voter & Preference Profile \\
		\hline
		$\bm{v_{2i-1}}$ & $\movetofirst{\movetofirst{\sigma}{c_i}}{c_{2k}}$ \\
		$\bm{v_{2i}}$ & $\movetofirst{\movetofirst{\sigma}{c_{2k}}}{c_i}$ \\
		\hline
	\end{tabular}
	\vspace{0.3cm}
	\caption{The preference profiles of the voters within each group $g_i$ %\( v_{(2i-1} \) and \( v_{2i} \) 
	\( (1 \le i \le k) \), used in the proof of \cref{th:randdet-avgavg-lower}.}
	\label{tab:randdet-avgavg-profiles}
\end{table}

\end{proof}
	\section{Distortion Bounds for \randrand}\label{sec:randrand}
	% randrand
This section examines \randrand\ mechanisms, which are pairs $(\fin, \fov)$ composed of two independently randomized voting rules $\fin$ and $\fov$. We establish lower and upper bounds on the distortion of these mechanisms for all cost objectives in general metric spaces.
%In this section, we analyze the metric distortion of $\randrand$ mechanisms. 

		\subsection{Upper Bounds}
		%Upper bounds for rand-rand

Throughout this section, we analyze the mechanism $(\frd,\fur)$ and show that despite its simplicity, it achieves tight (or nearly tight) distortion bounds for various cost objectives.
In particular, for the \maxmax\, objective (indeed the $\maxx$ objective), we establish that choosing the top candidate of any voter yields a distortion of at most 3. %We start by \Cref{th:randrand_maxmax}.
We begin with the simplest case, \maxmax, and move towards the most intricate \avgavg.
\begin{theorem}
	For the \maxmax\, objective in general metric spaces, we have $\distortion((\frd,\fur)) \leq 3$.
	\label{th:randrand_maxmax}
\end{theorem}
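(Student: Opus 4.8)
The plan is to reduce the claim to a per-voter estimate using the expected-cost identity already established in \Cref{obs:expcost_randrand}. For the mechanism $(\frd,\fur)$ that identity gives $\expected{\cost(\w)} = \frac{1}{k}\sum_{g\in\groups}\frac{1}{\size_g}\sum_{\voter\in g}\cost(\topp{\voter})$, where $\cost$ is the \maxmax\ objective. Since the coefficients $\frac{1}{k}\cdot\frac{1}{\size_g}$ form a convex combination over all voters, it suffices to prove that $\maxmax(\topp{\voter}) \le 3\,\cost(\opt)$ holds for every single voter $\voter$; averaging then yields the theorem immediately.

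To obtain this per-voter bound, I would fix a voter $\voter$ and let $u = \vstarstar{\topp{\voter}}$ be the voter realizing $\maxmax(\topp{\voter}) = \dis{u}{\topp{\voter}}$. The key idea is to route the path from $u$ to $\topp{\voter}$ through two anchor points, the optimal candidate $\opt$ and the voter $\voter$ itself, via the triangle inequality:
\[
\dis{u}{\topp{\voter}} \le \dis{u}{\opt} + \dis{\opt}{\voter} + \dis{\voter}{\topp{\voter}}.
\]
I would then argue that each of the three terms is at most $\cost(\opt) = \maxmax(\opt)$. The first two are immediate from \Cref{prop:maxmax}, since every voter lies within $\maxmax(\opt)$ of $\opt$. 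For the third, the crucial observation is that $\topp{\voter}$ is $\voter$'s nearest candidate, so \Cref{prop:topi} gives $\dis{\voter}{\topp{\voter}} \le \dis{\voter}{\opt}$, which is again at most $\cost(\opt)$ by \Cref{prop:maxmax}. Summing the three bounds yields $\maxmax(\topp{\voter}) \le 3\,\cost(\opt)$, and plugging this back into the averaged identity closes the argument.

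There is no genuine obstacle here; the proof is a short triangle-inequality argument. The only point requiring care is the choice of routing: one must pass through the voter $\voter$ (not merely through $\opt$), so that the detour from $\topp{\voter}$ back to the rest of the metric is controlled by $\dis{\voter}{\topp{\voter}}$, which the top-candidate property pins to $\dis{\voter}{\opt}$. This is exactly what keeps all three contributions anchored to $\cost(\opt)$ and prevents the bound from degrading. The \avgmax\ case should be handled by the same routing, replacing the single global farthest voter $u$ by the per-group farthest voters and averaging over groups, which is presumably why the paper introduces \maxmax\ first as the simplest instance of this template.
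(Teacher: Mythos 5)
Your proposal is correct and follows essentially the same argument as the paper: the same reduction via \Cref{obs:expcost_randrand} to a per-voter bound $\cost(\topp{\voter}) \le 3\cost(\opt)$, the same triangle-inequality routing $\vstarstar{\topp{\voter}} \to \opt \to \voter \to \topp{\voter}$, and the same use of \Cref{prop:maxmax} and \Cref{prop:topi} to bound each term by $\cost(\opt)$. No gaps to report.
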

\begin{proof}
	Consider an instance $\instance=(\voters,\candidates,\groups,\profile,\textsf{d})$ and \randrand\, mechanism $\mech=(\frd,\fur)$. For any voter $\voter \in \voters$, we have
	\begin{align*}
		\cost(\topp{\voter}) & = \dis{\vstarstar{\topp{\voter}}}{\topp{\voter}} & (\text{Definition of \maxmax}) \\
		& \le \dis{\vstarstar{\topp{\voter}}}{\opt} + \dis{\opt}{\topp{\voter}} & (\text{Triangle Inequality}) \\
		& \le \dis{\vstarstar{\topp{\voter}}}{\opt} + \dis{\voter}{\opt} + \dis{\voter}{\topp{\voter}} & (\text{Triangle Inequality}) \\
		& \le \dis{\vstarstar{\opt}}{\opt} + \dis{\voter}{\opt} + \dis{\voter}{\topp{\voter}} & (\text{\Cref{prop:maxmax}}) \\
		& \le \dis{\vstarstar{\opt}}{\opt} + \dis{\voter}{\opt} + \dis{\voter}{\opt} & (\text{\Cref{prop:topi}}) \\
		& = \dis{\vstarstar{\opt}}{\opt} + 2\dis{\voter}{\opt} \\
		& \le 3\dis{\vstarstar{\opt}}{\opt} & (\text{\Cref{prop:maxmax}}) \\
		& =3\cost(\opt) & (\cost(\opt) = \dis{\vstarstar{\opt}}{\opt}).
	\end{align*}
	Combining this with \Cref{obs:expcost_randrand}, we have
	\begin{align*}
		\expected{\cost(\w)} & = \frac{1}{k} \sum_{g \in \groups} \frac{1}{\size_g} \sum_{\voter \in g} \cost(\topp{\voter}) \\
		& \le 3\cost(\opt).
	\end{align*}	
\end{proof}
%\begin{proofsketch}%maxmax
%	We show that for any voter $v$, the cost of her top candidate is at most $3 \cdot \cost(\opt)$ using triangle inequalities and properties of $\maxx$. Therefore, the mechanism $(\frd, \fur)$, which picks a random voter's top choice, also has the distortion of at most 3.
%\end{proofsketch}

%For the \avgavg\ objective, consider an instance $\instance = (\voters, \candidates, \groups, \profile, \textsf{d})$. We first show that for any voter $\voter \in \voters$, $\cost_{g}(\topp{\voter}) \leq 3 \cdot \cost_{g}(\opt_{g})$. Next, we prove that $\cost_{g}(\topp{\voter}) \leq 3 \cdot \cost(\opt)$. Using these bounds, we then establish the desired upper bound on distortion.
\begin{theorem}
	For the \avgmax\, objective in general metric spaces, we have $\distortion((\frd,\fur)) \leq 3$.
	\label{th:randrand_avgmax}
\end{theorem}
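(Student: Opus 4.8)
The plan is to mirror the structure of the proof of \Cref{th:randrand_maxmax}: first reduce the expected cost of the mechanism to a per-voter quantity via \Cref{obs:expcost_randrand}, which gives $\expected{\cost(\w)} = \tfrac{1}{k}\sum_{g \in \groups}\tfrac{1}{\size_g}\sum_{\voter \in g}\cost(\topp{\voter})$, and then bound each term $\cost(\topp{\voter})$ against $\cost(\opt)$. The difference from the \maxmax\ case is that for \avgmax\ the objective is an average of group maxima, so a purely pointwise bound will not close the argument; the averaging over voters must be retained until the very end.

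First I would fix a voter $\voter$ and write $t = \topp{\voter}$, then expand $\avgmax(t) = \tfrac{1}{k}\sum_{g \in \groups}\dis{\vstar{t}{g}}{t}$. For each group $g$, the triangle inequality gives $\dis{\vstar{t}{g}}{t} \le \dis{\vstar{t}{g}}{\opt} + \dis{\opt}{t}$, and \Cref{prop:Xmax}, applied to the candidate $\opt$ and the voter $\vstar{t}{g} \in g$, replaces $\dis{\vstar{t}{g}}{\opt}$ by the larger $\dis{\vstar{\opt}{g}}{\opt}$. Summing over $g$ and dividing by $k$ recovers $\avgmax(\opt)$ in the first term, yielding the clean per-voter inequality $\cost(\topp{\voter}) \le \cost(\opt) + \dis{\opt}{\topp{\voter}}$.

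Next I would bound the displacement $\dis{\opt}{\topp{\voter}}$. By the triangle inequality $\dis{\opt}{\topp{\voter}} \le \dis{\opt}{\voter} + \dis{\voter}{\topp{\voter}}$, and since $\topp{\voter}$ is $\voter$'s closest candidate, \Cref{prop:topi} gives $\dis{\voter}{\topp{\voter}} \le \dis{\voter}{\opt}$, so that $\dis{\opt}{\topp{\voter}} \le 2\dis{\voter}{\opt}$. This leaves the per-voter bound $\cost(\topp{\voter}) \le \cost(\opt) + 2\dis{\voter}{\opt}$.

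The step I expect to be the crux is handling the residual term $2\dis{\voter}{\opt}$, which cannot be bounded pointwise by $\cost(\opt)$ as in the \maxmax\ proof, where a single global maximum dominates every voter's distance. Instead I would substitute the per-voter bound back into the expectation and exploit the group averaging: $\tfrac{1}{k}\sum_{g \in \groups} \tfrac{1}{\size_g}\sum_{\voter \in g}\dis{\voter}{\opt} = \tfrac{1}{k}\sum_{g \in \groups} \cost_g(\opt)$, and since the average cost of $\opt$ within a group never exceeds its maximum cost there, $\cost_g(\opt) \le \dis{\vstar{\opt}{g}}{\opt}$, whence $\tfrac{1}{k}\sum_{g \in \groups} \cost_g(\opt) \le \avgmax(\opt) = \cost(\opt)$. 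The constant term contributes $\cost(\opt)$ and the residual contributes $2\cost(\opt)$, giving $\expected{\cost(\w)} \le 3\cost(\opt)$ as desired.
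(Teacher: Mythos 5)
Your proposal is correct and follows essentially the same route as the paper's proof: the same chain of triangle inequalities together with \Cref{prop:topi} and \Cref{prop:Xmax} yields the per-voter bound $\cost(\topp{\voter}) \le \cost(\opt) + 2\dis{\voter}{\opt}$, and the residual term is then absorbed by averaging over voters exactly as the paper does (the paper merely keeps the bound at the per-group level, $\cost_{g'}(\topp{\voter}) \le \cost_{g'}(\opt) + 2\cost_g(\opt)$, before summing, which is a trivial rearrangement of your argument). One small notational caution: under \avgmax\ the paper's $\cost_g(\opt)$ denotes the group \emph{maximum} $\dis{\vstar{\opt}{g}}{\opt}$, not the within-group average you momentarily call $\cost_g(\opt)$, but your accompanying inequality (average $\le$ maximum within each group) is exactly right, so this is cosmetic rather than a gap.
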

\begin{proof}
	Consider an instance $\instance=(\voters,\candidates,\groups,\profile,\textsf{d})$ and \randrand\, mechanism $\mech = (\frd,\fur)$. By definition of the \avgmax\, objective for any voter $\voter \in \voters$, we have 
	$$\cost(\topp{\voter}) = \frac{1}{k} \sum_{g \in \groups} \cost_{g}(\topp{\voter}).$$
	Now, for any groups $g, g' \in \groups$ and any voter $\voter \in g'$, we have
	\begin{align*}
		\cost_{g}(\topp{\voter}) & = \dis{\vstar{\topp{\voter}}{g}}{\topp{\voter}} & (\text{Definition of $\cost_{g}(.)$}) \\
		& \le \dis{\vstar{\topp{\voter}}{g}}{\voter} + \dis{\voter}{\topp{\voter}} & (\text{Triangle Inequality}) \\
		& \le \dis{\vstar{\topp{\voter}}{g}}{\voter} + \dis{\voter}{\opt} & (\text{\Cref{prop:topi}}) \\
		& \le \dis{\vstar{\topp{\voter}}{g}}{\opt} + \dis{\voter}{\opt} + \dis{\voter}{\opt} & (\text{Triangle Inequality}) \\
		& \le \dis{\vstar{\opt}{g}}{\opt} + 2\dis{\voter}{\opt} & (\text{\Cref{prop:Xmax}}) \\
		& = \cost_{g}(\opt) + 2\dis{\voter}{\opt} & (\text{Definition of $\cost_{g}(.)$}) \\
		& \le \cost_{g}(\opt) + 2\dis{\vstar{\opt}{g'}}{\opt} & (\text{\Cref{prop:Xmax}}) \\
		& =\cost_{g}(\opt) + 2\cost_{g'}(\opt) & (\text{Definition of $\cost_{g}(.)$}).
	\end{align*}
	Combining this with \Cref{obs:expcost_randrand}, we obtain
	\begin{align*}
		\expected{\cost(\w)} &= \frac{1}{k} \sum_{g \in \groups} \frac{1}{\size_g} \sum_{\voter \in g} \frac{1}{k} \sum_{g' \in \groups} \cost_{g'}(\topp{\voter}) \\
		&\le \frac{1}{k} \sum_{g \in \groups} \frac{1}{\size_g} \sum_{\voter \in g} \frac{1}{k} \sum_{g' \in \groups} \left( 2 \cost_{g}(\opt) + \cost_{g'}(\opt) \right) \\
		&= \frac{1}{k} \sum_{g \in \groups} \left( 2 \cost_{g}(\opt) + \frac{1}{k} \sum_{g' \in \groups} \cost_{g'}(\opt) \right) \\
		&= \frac{1}{k} \sum_{g \in \groups} 2 \cost_{g}(\opt) + \cost(\opt) & (\text{Definition of $\cost(.)$}) \\
		&= 3 \cost(\opt) & (\text{Definition of $\cost(.)$}).
	\end{align*}
\end{proof}
%\begin{proofsketch}%avgmax
%	Similar to \Cref{th:randrand_maxmax}, a combination of triangle inequalities and properties of top-ranked candidates implies $\cost_g(\topp{v}) \le 3 \cdot \cost_g(\opt_g)$, and ultimately $\cost(\topp{v}) \le 3 \cdot \cost(\opt)$. Taking expectation over all voters yields the desired bound.
%\end{proofsketch}

For the \maxavg\, objective, the key insight is to show that for any voter $\voter \in \voters$,  $\cost_{\gstar\left(\topp{\voter}\right)}(\topp{\voter}) \leq  2\dis{\voter}{\opt} + \cost(\opt)$. This crucial inequality is the foundation for proving the desired upper bound.

\begin{theorem}
	For the \maxavg\, objective in general metric spaces, we have $\distortion((\frd,\fur)) \leq 3$.
	\label{th:randrand_maxavg}
\end{theorem}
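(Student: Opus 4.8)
The plan is to reduce everything to a single pointwise inequality and then sum. By \Cref{obs:expcost_randrand}, the \randrand\ mechanism $(\frd,\fur)$ has expected cost
\[
\expected{\cost(\w)} = \frac{1}{k} \sum_{g \in \groups} \frac{1}{\size_g} \sum_{\voter \in g} \cost(\topp{\voter}),
\]
and under the \maxavg\ objective each summand equals $\cost_{\gstar(\topp{\voter})}(\topp{\voter})$. So the whole statement follows once I establish the bound flagged just before the theorem: for every voter $\voter$,
\[
\cost_{\gstar(\topp{\voter})}(\topp{\voter}) \le 2\dis{\voter}{\opt} + \cost(\opt).
\]

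To prove this, I would fix $\voter$ and let $h = \gstar(\topp{\voter})$ be the group in which $\topp{\voter}$ incurs its largest average cost, so that $\cost_{\gstar(\topp{\voter})}(\topp{\voter}) = \frac{1}{\size_h}\sum_{u \in h}\dis{u}{\topp{\voter}}$. For each $u \in h$, two applications of the triangle inequality give $\dis{u}{\topp{\voter}} \le \dis{u}{\opt} + \dis{\opt}{\voter} + \dis{\voter}{\topp{\voter}}$, and \Cref{prop:topi} lets me replace $\dis{\voter}{\topp{\voter}}$ by $\dis{\voter}{\opt}$, yielding $\dis{u}{\topp{\voter}} \le \dis{u}{\opt} + 2\dis{\voter}{\opt}$. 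Averaging over $u \in h$ turns the first term into $\cost_h(\opt)$, and \Cref{obs:maxavg} bounds $\cost_h(\opt) \le \cost(\opt)$, which completes the pointwise inequality.

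It then remains to substitute this bound into the expected-cost formula. The $\cost(\opt)$ term contributes exactly $\cost(\opt)$, since it is constant across voters, while the remaining $\frac{2}{k}\sum_{g}\frac{1}{\size_g}\sum_{\voter \in g}\dis{\voter}{\opt} = \frac{2}{k}\sum_{g}\cost_g(\opt)$ is bounded by $2\cost(\opt)$, again by \Cref{obs:maxavg}. Summing the two contributions gives $\expected{\cost(\w)} \le 3\cost(\opt)$, as required.

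The step I expect to require the most care is the group mismatch: the group $h = \gstar(\topp{\voter})$ is worst for $\topp{\voter}$, not for $\opt$, so after the triangle inequality I am left with $\cost_h(\opt)$ for this particular group rather than with the \maxavg\ cost of $\opt$ directly. The key is that \maxavg\ is a \emph{maximum} over groups, so $\cost_h(\opt) \le \cost(\opt)$ for any single $h$ (\Cref{obs:maxavg}); the same monotonicity is what controls the averaged $\dis{\voter}{\opt}$ term in the final sum, where an average of per-group optimal costs is dominated by their maximum $\cost(\opt)$.
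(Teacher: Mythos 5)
Your proposal is correct and takes essentially the same approach as the paper: the identical pointwise bound $\cost_{\gstar(\topp{\voter})}(\topp{\voter}) \le 2\dis{\voter}{\opt} + \cost(\opt)$ established via two triangle inequalities, \Cref{prop:topi}, and \Cref{obs:maxavg}, then summed using \Cref{obs:expcost_randrand} with \Cref{obs:maxavg} controlling the averaged $\cost_g(\opt)$ terms. The only cosmetic difference is the order of the triangle-inequality detours (you route $\dis{u}{\topp{\voter}}$ through $\opt$ and then $\voter$, while the paper routes it through $\voter$ and then $\opt$), which produces the same per-voter inequality.
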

\begin{proof}
	Consider an instance $\instance=(\voters,\candidates,\groups,\profile,\textsf{d})$ and \randrand\, mechanism $\mech = (\frd,\fur)$. By the  definition of the \maxavg\, objective for any voter $\voter \in \voters$, we have 
	\begin{align*}
		\cost(\topp{\voter}) &= \max_{g \in \groups} \cost_{g}(\topp{\voter}) \\
		&= \cost_{\gstar\left(\topp{\voter}\right)}(\topp{\voter}) \\
		&= \frac{1}{\size_{\gstar\left(\topp{\voter}\right)}} \sum_{\voter' \in \gstar\left(\topp{v}\right)} \dis{\voter'}{\topp{\voter}} & (\text{Definition of }\cost_g(.)).
	\end{align*}
	Thus, we have 
	\begin{align*}
		\cost_{\gstar\left(\topp{\voter}\right)}(\topp{\voter}) & = \frac{1}{\size_{\gstar\left(\topp{\voter}\right)}} \sum_{\voter' \in \gstar\left(\topp{\voter}\right)} \dis{\voter'}{\topp{\voter}} \\
		& \le \frac{1}{\size_{\gstar\left(\topp{\voter}\right)}} \sum_{\voter' \in \gstar\left(\topp{\voter}\right)} \bigg(\dis{\voter}{\topp{\voter}} + \dis{\voter}{\voter'}\bigg) & (\text{Triangle Inequality}) \\
		& \le \frac{1}{\size_{\gstar\left(\topp{\voter}\right)}} \sum_{\voter' \in \gstar\left(\topp{\voter}\right)} \bigg(\dis{\voter}{\opt} + \dis{\voter}{\voter'}\bigg) & (\text{\Cref{prop:topi}}) \\
		& \le \frac{1}{\size_{\gstar\left(\topp{\voter}\right)}} \sum_{\voter' \in \gstar\left(\topp{\voter}\right)} \bigg(\dis{\voter}{\opt} + \dis{\voter}{\opt} + \dis{\opt}{\voter'} \bigg) & (\text{Triangle Inequality}) \\
		& = 2\dis{\voter}{\opt} + \frac{1}{\size_{\gstar\left(\topp{\voter}\right)}} \sum_{\voter' \in \gstar\left(\topp{\voter}\right)} \dis{\opt}{\voter'} \\
		& = 2\dis{\voter}{\opt} + \cost_{\gstar\left(\topp{\voter}\right)}(\opt)\\
		& \le 2\dis{\voter}{\opt} + \cost(\opt) & (\text{\Cref{obs:maxavg}}).
	\end{align*}
	Combining this with \Cref{obs:expcost_randrand}, we obtain
	\begin{align*}
		\expected{\cost(\w)} &= \frac{1}{k} \sum_{g \in \groups} \frac{1}{\size_g} \sum_{\voter \in g} \cost(\topp{\voter}) \\
		& = \frac{1}{k} \sum_{g \in \groups} \frac{1}{\size_g} \sum_{\voter \in g} \cost_{\gstar\left(\topp{\voter}\right)}(\topp{\voter}) & (\text{Definition of }\cost(.)) \\
		& \le \frac{1}{k} \sum_{g \in \groups} \frac{1}{\size_g} \sum_{\voter \in g} \left( 2\dis{\voter}{\opt} + \cost(\opt) \right)\\
		& = \cost(\opt) + \frac{1}{k} \sum_{g \in \groups} \frac{1}{\size_g} \sum_{\voter \in g} 2\dis{\voter}{\opt}\\
		& = \cost(\opt) + \frac{2}{k} \sum_{g \in \groups} \cost_g(\opt) & (\text{Definition of }\cost_g(.)) \\
		& \le \cost(\opt) + \frac{2}{k} \sum_{g \in \groups} \cost(\opt) & (\text{\Cref{obs:maxavg}})\\
		& = 3\cost(\opt).
	\end{align*}
\end{proof}
%When analyzing the $(\frd,\fur)$ mechanism under the \avgavg\ objective, we encounter the following expression:
%\[
%\cost(\opt) + \frac{1}{k} \sum_{g \in \groups} \frac{1}{\size_g} \sum_{\voter \in g} \frac{1}{k} \sum_{g' \in \groups} \frac{1}{\size_{g'}} \sum_{\voter' \in g'} \dis{\voter'}{\voter}.
%\]
%We observe that it is beneficial to treat the cases \( g = g' \) and \( \voter = \voter' \) separately from the case \( \voter \neq \voter' \). This distinction enables a more refined and tighter analysis.

\begin{theorem}
	For the \avgavg\, objective in general metric spaces, we have $\distortion((\frd,\fur)) \leq 3 - \nicefrac{2}{k\size^*}$ where $\size^*$ represents the maximum value of $\size_g$ over all groups.
	\label{th:randrand_avgavg}
\end{theorem}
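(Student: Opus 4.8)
The plan is to begin from the exact closed form for the expected cost of $(\frd,\fur)$ specialized to \avgavg. Combining \Cref{obs:expcost_randrand} with the definition of the \avgavg objective, I would write the expected cost as a double average over an ordered pair of voters $(\voter,\voter')$:
\[
\expected{\cost(\w)} = \frac{1}{k^2}\sum_{g,g'\in\groups}\frac{1}{\size_g\size_{g'}}\sum_{\voter\in g}\sum_{\voter'\in g'}\dis{\voter'}{\topp{\voter}},
\]
where $\voter$ plays the role of the random dictator (so the representative is $\topp{\voter}$) and $\voter'$ is the averaging voter drawn by \avgavg. Thinking of the whole quantity as an expectation over the pair $(\voter,\voter')$ is what exposes the slack we will exploit.

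Next I would record the ``naive'' route that already gives the weaker bound of $3$. For every ordered pair, the triangle inequality together with \Cref{prop:topi} yields
\[
\dis{\voter'}{\topp{\voter}} \le \dis{\voter'}{\opt} + \dis{\opt}{\voter} + \dis{\voter}{\topp{\voter}} \le \dis{\voter'}{\opt} + 2\dis{\voter}{\opt}.
\]
Substituting this bound into every term and using $\tfrac1k\sum_{g'}\tfrac{1}{\size_{g'}}\sum_{\voter'\in g'}\dis{\voter'}{\opt}=\cost(\opt)$ for the first summand, and symmetrically collapsing the second summand, produces exactly $3\cost(\opt)$.

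The improvement comes entirely from the \emph{diagonal} pairs $\voter'=\voter$ (which force $g'=g$), where the substitution above is wasteful: there the true value $\dis{\voter}{\topp{\voter}}\le\dis{\voter}{\opt}$ is smaller than the substituted $3\dis{\voter}{\opt}$ by at least $2\dis{\voter}{\opt}$. Each diagonal term carries weight $1/(k^2\size_g^2)$ in the double average, so replacing the coarse bound by the tight one on the diagonal subtracts at least
\[
\frac{2}{k^2}\sum_{g\in\groups}\frac{1}{\size_g^2}\sum_{\voter\in g}\dis{\voter}{\opt} = \frac{2}{k^2}\sum_{g\in\groups}\frac{\cost_g(\opt)}{\size_g} \ge \frac{2}{k^2\size^*}\sum_{g\in\groups}\cost_g(\opt) = \frac{2}{k\size^*}\cost(\opt),
\]
where the first equality uses $\sum_{\voter\in g}\dis{\voter}{\opt}=\size_g\cost_g(\opt)$, the inequality uses $\size_g\le\size^*$, and the last equality uses $\tfrac1k\sum_g\cost_g(\opt)=\cost(\opt)$. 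Subtracting this from $3\cost(\opt)$ gives the claimed $\bigl(3-\tfrac{2}{k\size^*}\bigr)\cost(\opt)$.

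The main obstacle—and the reason this is the most delicate of the \randrand bounds—is the bookkeeping around the diagonal. One must recognize that the only slack in the $3$-bound lives on the pairs where the sampled voter coincides with the dictator, attribute to those pairs the correct weight $1/\size_g^2$ (rather than $1/\size_g$, since the representative and the averaging voter are drawn independently from the same group), and apply $\size_g\le\size^*$ in the direction that keeps the saved amount a valid \emph{lower} bound on the reduction. When all groups have equal size $\size^*=n/k$ the diagonal weight is exact, so $k\size^*=n$ and the bound meets the $3-2/n$ lower bound of \Cref{th:randrand-avgavg-lower}.
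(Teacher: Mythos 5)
Your proposal is correct and follows essentially the same route as the paper's proof: both expand the expected cost as a weighted sum over ordered voter pairs, bound every off-diagonal pair by $\dis{\voter'}{\opt} + 2\dis{\voter}{\opt}$ via the triangle inequality and \Cref{prop:topi}, exploit the diagonal pairs (where the paper uses $\dis{\voter}{\voter}=0$ and you use $\dis{\voter}{\topp{\voter}}\le\dis{\voter}{\opt}$, yielding the identical per-pair bound $\dis{\voter}{\opt}$), and finish with $\size_g \le \size^*$ applied in the same direction. The only difference is cosmetic—the paper routes the triangle inequality through the voter-to-voter distance $\dis{\voter'}{\voter}$ while you route it through $\opt$—and your accounting of the diagonal weight $1/(k^2\size_g^2)$ and the resulting saving of $\tfrac{2}{k\size^*}\cost(\opt)$ matches the paper exactly.
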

\begin{proof}
	Consider an instance $\instance=(\voters,\candidates,\groups,\profile,\textsf{d})$ and $\randrand$ mechanism $\mech=(\frd,\fur)$. By the  definition of the \avgavg\, objective for any voter $\voter \in \voters$, we have 
	$$\cost(\topp{\voter}) = \frac{1}{k} \sum_{g' \in \groups} \cost_{g'}(\topp{\voter}).$$
	For any voter $\voter \in \voters$ and group $g'$, we have 
	\begin{align*}
		\cost_{g'}(\topp{\voter}) & = \frac{1}{\size_{g'}} \sum_{\voter' \in g'} \dis{\voter'}{\topp{\voter}} & (\text{Definition of }\cost_g(.)) \\
		&\le \frac{1}{\size_{g'}} \sum_{\voter' \in g'} \bigg(\dis{\voter}{\topp{\voter}} + \dis{\voter'}{\voter} \bigg) & (\text{Triangle Inequality}) \\
		&\le \frac{1}{\size_{g'}} \sum_{\voter' \in g'} \bigg(\dis{\voter}{\opt} + \dis{\voter'}{\voter} \bigg) & (\text{\Cref{prop:topi}}) \\
		&= \dis{\voter}{\opt} + \frac{1}{\size_{g'}} \sum_{\voter' \in g'} \dis{\voter'}{\voter}. 
	\end{align*}
	Combining this with \Cref{obs:expcost_randrand}, we obtain
	\begin{align*}
		\expected{\cost(\w)} &= \frac{1}{k} \sum_{g \in \groups} \frac{1}{\size_g} \sum_{\voter \in g} \frac{1}{k} \sum_{g' \in \groups} \cost_{g'}(\topp{\voter}) & (\text{Definition of }\cost(.))\\
		& \le \frac{1}{k} \sum_{g \in \groups} \frac{1}{\size_g} \sum_{\voter \in g} \frac{1}{k} \sum_{g' \in \groups} \bigg(\dis{\voter}{\opt} + \frac{1}{\size_{g'}} \sum_{\voter' \in g'} \dis{\voter'}{\voter}\bigg) \\
		& = \frac{1}{k} \sum_{g \in \groups} \frac{1}{\size_g} \sum_{\voter \in g} \dis{\voter}{\opt} + \frac{1}{k} \sum_{g \in \groups} \frac{1}{\size_g} \sum_{\voter \in g} \frac{1}{k} \sum_{g' \in \groups} \frac{1}{\size_{g'}} \sum_{\voter' \in g'} \dis{\voter'}{\voter} \\
		& = \cost(\opt) + \frac{1}{k} \sum_{g \in \groups} \frac{1}{\size_g} \sum_{\voter \in g} \frac{1}{k} \sum_{g' \in \groups} \frac{1}{\size_{g'}} \sum_{\voter' \in g'} \dis{\voter'}{\voter} &(\text{Definition of \avgavg})\\
		& \le \cost(\opt) + \frac{1}{k} \sum_{g \in \groups} \frac{1}{\size_g} \sum_{\voter \in g} \frac{1}{k} \sum_{g' \in \groups} \frac{1}{\size_{g'}} \sum_{\voter' \in g',\,\voter' \ne \voter} \bigg( \dis{\voter}{\opt} + \dis{\opt}{\voter'} \bigg) &(\text{Triangle Inequality})\\
		& = \cost(\opt) + \frac{1}{k} \sum_{g \in \groups} \frac{1}{\size_g} \sum_{\voter \in g} \frac{1}{k} \sum_{g' \in \groups} \frac{1}{\size_{g'}} \sum_{\voter' \in g'} \bigg( \dis{\voter}{\opt} + \dis{\opt}{\voter'} \bigg) \\
		& - \frac{1}{k} \sum_{g \in \groups} \frac{1}{\size_g} \sum_{\voter \in g} \frac{1}{k} \sum_{g' \in \groups, g'=g} \frac{1}{\size_{g'}} \sum_{\voter' \in g', \voter'=\voter} \bigg( \dis{\voter}{\opt} + \dis{\opt}{\voter'} \bigg) \\
		& = \cost(\opt) + \frac{1}{k} \sum_{g \in \groups} \frac{1}{\size_g} \sum_{\voter \in g} \frac{1}{k} \sum_{g' \in \groups} \frac{1}{\size_{g'}} \sum_{\voter' \in g'} \bigg( \dis{\voter}{\opt} + \dis{\opt}{\voter'} \bigg) \\
		& - \frac{1}{k} \sum_{g \in \groups} \frac{1}{\size_g} \sum_{\voter \in g} \frac{1}{k} \cdot \frac{1}{\size_{g}} \cdot 2\dis{\voter}{\opt} \\
		& = \cost(\opt) + 2\cost(\opt) - \frac{1}{k} \sum_{g \in \groups} \frac{1}{\size_g} \sum_{\voter \in g} \frac{1}{k} \cdot \frac{1}{\size_{g}} \cdot 2\dis{\voter}{\opt} &(\text{Definition of \avgavg})\\
		& = 3\cost(\opt) - \frac{1}{k} \sum_{g \in \groups} \frac{1}{\size_g}
		\sum_{\voter \in g} \frac{1}{k} \cdot \frac{1}{\size_{g}} \cdot 2\dis{\voter}{\opt}
		\\
		& = 3\cost(\opt) - \frac{2}{k^2} \sum_{g \in \groups} \frac{1}{\size_g} \cdot \cost_{g}(\opt) & (\text{Definition of $\cost_{g}(.)$})\\
		& \le 3\cost(\opt) - \frac{2}{k^2} \sum_{g \in \groups} \frac{1}{\size^*} \cdot \cost_{g}(\opt) \\
		& = (3 - \frac{2}{k\size^*})\cost(\opt) &(\text{Definition of \avgavg}).
	\end{align*}
\end{proof}

As a corollary of \Cref{th:randrand_avgavg}, 
we conclude that $\mech=(\frd,\fur)$ is a \randrand\, mechanism that satisfies the bound stated in \Cref{cor:randrand-avgavg-upper}, particularly in the symmetric case $(kn^* = n)$.
%we obtain \Cref{cor:randrand-avgavg-upper}, which establishes the distortion bound of the $\randrand$ mechanism in the symmetric case.
\begin{corollary}[of \cref{th:randrand_avgavg}]
	For the \avgavg\, objective in general metric spaces, there exists a \randrand\, mechanism with distortion
	%the distortion of the $((\frd,\fur))$ mechanism is
	at most \( 3 - \tfrac{2}{n} \), provided that the groups are symmetric.
	\label{cor:randrand-avgavg-upper}
\end{corollary}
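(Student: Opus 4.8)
The plan is to derive this directly from \Cref{th:randrand_avgavg}, which already establishes that the mechanism $\mech = (\frd, \fur)$ has distortion at most $3 - \tfrac{2}{k\size^*}$ for the \avgavg\ objective, where $\size^*$ denotes the largest group size. The corollary is simply the specialization of this bound to symmetric instances, so no new argument is required beyond a bookkeeping substitution.

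First I would invoke \Cref{th:randrand_avgavg} for the concrete mechanism $(\frd, \fur)$, which applies to every instance in a general metric space and in particular to symmetric ones. The only remaining task is to evaluate the quantity $k\size^*$ under the symmetry assumption. Since the instance is symmetric, all $k$ groups share a common size; call it $\size^*$ (which is then trivially the maximum group size as well). Because the groups partition the $n$ voters, we have $n = \sum_{g \in \groups} \size_g = k\,\size^*$, and hence $k\size^* = n$.

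Substituting $k\size^* = n$ into the bound $3 - \tfrac{2}{k\size^*}$ from \Cref{th:randrand_avgavg} yields $\distortion((\frd, \fur)) \le 3 - \tfrac{2}{n}$, which is exactly the claimed bound and exhibits the desired \randrand\ mechanism. I do not anticipate any genuine obstacle here: the entire content of the corollary is the observation that symmetry forces $k\size^* = n$, collapsing the group-dependent denominator $k\size^*$ to the total voter count $n$. The one point worth stating explicitly, for completeness, is that $\size^*$ really is the common group size under symmetry, so that the hypothesis of \Cref{th:randrand_avgavg} is met with equality rather than a strict inequality, and the bound transfers verbatim.
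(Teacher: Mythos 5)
Your proposal is correct and is exactly the paper's argument: the paper derives the corollary from \cref{th:randrand_avgavg} by noting that symmetry forces $k\size^* = n$, so the bound $3 - \tfrac{2}{k\size^*}$ becomes $3 - \tfrac{2}{n}$. No gap to report.
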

		\subsection{Lower Bounds}
		%Lower bounds for rand-rand

%For any of the standard cost objectives, we show that a lower bound of 3 on the distortion of any ordinal distributed mechanism.

All bounds in this section are derived from the symmetric instances and thus apply to that setting as well.

\begin{theorem}
	For the \maxavg\, and \maxmax\, objectives,
	the distortion of any \randrand\, mechanism is at least $3$, even when the metric space is a line.
	\label{th:randrand-maxX-lower}
\end{theorem}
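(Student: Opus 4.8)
The plan is to exploit single-voter groups, which is what makes both objectives coincide with the centralized $\maxx$ objective and pins down the in-group behaviour. First I would observe that when every group contains exactly one voter, $\maxmax(c) = \max_{\voter} \dis{\voter}{c}$ trivially, and $\maxavg(c) = \max_{g} \dis{\voter_g}{c} = \max_{\voter} \dis{\voter}{c}$ as well, since each group-average is over a single voter; hence both objectives reduce to $\maxx$. Single-voter groups are moreover automatically symmetric (all sizes equal $1$), so the instance lies in the symmetric setting, and by \Cref{obs:singlevoter} any finite-distortion mechanism must return each voter's top-ranked candidate as that group's representative. This removes all in-group randomness: the only freedom left to the mechanism is the distribution that $\fov$ places over the forced representatives.

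Next I would build the line instance so that every forced representative is bad while a hidden candidate is good. Place $\opt$ at $0$, candidate $c_1$ at $-2$, and candidate $c_2$ at $+2$, and split the $k$ groups into a nonempty \emph{left} block, whose voters sit at $-1$, and a nonempty \emph{right} block, whose voters sit at $+1$. Each left voter is equidistant (distance $1$) from $\opt$ and $c_1$, and each right voter is equidistant from $\opt$ and $c_2$; since such ties admit either consistent ordering, I would fix the profiles so that every left voter ranks $c_1$ first and every right voter ranks $c_2$ first (the remaining, strictly farther candidate being third in each case). A one-line distance check then gives $\maxx(\opt) = 1$, whereas $\maxx(c_1) = 3$ and $\maxx(c_2) = 3$, the maxima being realized by a voter of the opposite block (at distance $|{+1}-({-2})| = 3$). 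Thus $\opt$ is the unique optimum, with cost $1$.

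Finally I would conclude: by the forcing from \Cref{obs:singlevoter}, every left group's representative is $c_1$ and every right group's is $c_2$, so $R = \{c_1, c_2\}$ and—more to the point—\emph{every} group's representative has $\maxx$-cost exactly $3$. Hence for whatever distribution $\fov$ induces over the groups, $\expected{\cost(\w)} = \sum_{g \in \groups} \Pr(\w = \rep_g)\cdot 3 = 3$, yielding $\distortion(\mech) \ge 3/\maxx(\opt) = 3$. The one subtle point, and the only real obstacle, is ensuring the mechanism cannot ``escape'' to the good candidate $\opt$: this holds precisely because $\opt$ is never ranked first by any voter (the tie is broken against it), so $\opt \notin R$, while the single-voter structure leaves no in-group randomization that could dilute the cost. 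I would also remark that, although this reuses the high-level idea of \Cref{th:randdet-Xmax-lower}, single-voter groups are essential here rather than a single two-voter group, since a randomized in-group rule on two voters could mix the good and bad candidates and drive the distortion down to $2$.
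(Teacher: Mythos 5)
Your proof is correct and follows essentially the same approach as the paper: the paper's own construction uses two single-voter groups on a line (candidates at $-1,0,1$, voters at $-0.5,0.5$) with ties broken against the middle candidate, invokes \Cref{obs:singlevoter} to force the representatives to be $c_1$ and $c_3$, and obtains the ratio $\frac{3/2}{1/2}=3$, which is your construction up to scaling and the inessential generalization from two groups to $k$ groups in two blocks. Your closing remark about why the \randdet\ two-voter instance of \Cref{th:randdet-Xmax-lower} would fail under a randomized in-group rule is a correct and worthwhile observation, consistent with the paper's choice of single-voter groups here.
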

\begin{proof}
	%We provide a proof for the \maxavg\, objective. A similar argument can be used to prove the result for the \maxmax\, objective as well.
	Consider any \randrand\ mechanism $\mech = (\fin,\fov)$. We construct an instance with candidates $\candidates = \{c_1, c_2, c_3\}$ and voters $\voters = \{v_1, v_2\}$, where each voter belongs to a distinct group: $v_1 \in g_1$ and $v_2 \in g_2$. The instance is constructed on the line metric as follows (refer to \Cref{fig:randrand-maxX}):

	\begin{itemize}
		\item Candidates $c_1$, $c_2$, and $c_3$ are located at positions $-1$, $0$, and $1$, respectively.
		\item Voters $v_1$ and $v_2$ are located at positions $-0.5$ and $0.5$, respectively.
		\item The preference profile of each voter is $\profile_1 = (c_1, c_2, c_3)$ for $v_1$ and $\profile_2 = (c_3, c_2, c_1)$ for $v_2$.
	\end{itemize}

	%One can verify that 
	Trivially, the preference profiles are consistent with the distances in \Cref{fig:randrand-maxX}. According to \Cref{obs:singlevoter}, candidates $c_1$ and $c_3$ are chosen as the representatives of groups $g_1$ and $g_2$, respectively, and then mechanism $\mech$ must select one of them as the final winner.
	
	%By the definitions of the \maxavg\, objective and given that each group consists of a single voter, we have
	We have
	\(
	\cost(c_1) = \cost(c_3) = \frac{3}{2} \text{ and } \cost(c_2) = \frac{1}{2}.
	\)
	Clearly, $c_2$ is the optimal candidate.
	The distortion of $\mech$ is:
	\begin{align*}
	\distortion(\mech) &\ge \frac{\min\bigl(\cost(c_1), \cost(c_3)\bigr)}{\cost(\opt)} \\
	& = \frac{\frac{3}{2}}{\cost(c_2)} \\
	&= 3.
	\end{align*}
	\begin{figure}[t]
		\centering
		\begin{tikzpicture}[x=2cm, y=1cm, font=\small]
	
	% Axis line
	\draw[->, thick] (-1.5, 0) -- (1.5, 0) node[right] {position};
	
	% Coordinates
	\coordinate (V1) at (-0.5,0);
	\coordinate (V2) at (0.5,0);
	\coordinate (C2) at (0,0);
	\coordinate (C1) at (-1,0);
	\coordinate (C3) at (1,0);
	
	% Points
	\filldraw[black] (V1) circle (1pt) node[anchor=south] {\textcolor{red}{$v_1$}};
	\filldraw[black] (V2) circle (1pt) node[anchor=south] {\textcolor{blue}{$v_2$}};
	
	\filldraw[black] (C1) circle (1pt) node[anchor=south] {\textcolor{black}{$c_1$}};
	\filldraw[black] (C2) circle (1pt) node[anchor=south] {\textcolor{black}{$c_2$}};
	\filldraw[black] (C3) circle (1pt) node[anchor=south] {\textcolor{black}{$c_3$}};
	% Position markers
	\foreach \x/\lab in {-1/$-1$, -0.5/{$-0.5$}, 0/0, 0.5/{$0.5$}, 1/{$1$}} {
		\node[below] at (\x, 0) {\lab};
	}
\end{tikzpicture} % This inserts the content of my_diagram.tex
		\caption{An example used in the proof of \Cref{th:randrand-maxX-lower}.}
		\label{fig:randrand-maxX}
	\end{figure}
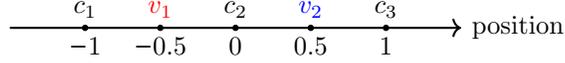
\end{proof}
\begin{theorem}
	For the \avgmax\, objective, the distortion of 	any $\randrand$ mechanism is at least $3 - \tfrac{2}{n}$ (equivalently, \( 3 - \tfrac{2}{m} \)), even when the metric space is a line.
	\label{th:randrand-avgmax-lower}
\end{theorem}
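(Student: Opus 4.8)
The plan is to first collapse the problem to a purely centralized, single-rule question. With a single group ($k=1$), the $\avgmax$ objective reduces to $\cost(c)=\max_{\voter\in\voters}\dis{\voter}{c}$, i.e. the $\maxx$ objective, and a $\randrand$ mechanism degenerates: Stage~2 applies $\fov$ to the singleton representative set $R$, so the final winner is distributed exactly as $\fin$'s output over $\candidates$. Hence it suffices to show that \emph{every} randomized rule (any distribution over the $m=n$ candidates that depends only on the profile) incurs expected $\maxx$-cost at least $\bigl(3-\tfrac2n\bigr)\cost(\opt)$ on some instance. This would match the upper bound of $3$ from \Cref{th:randrand_avgmax} up to the additive $2/n$ term.

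The core of the argument is an indistinguishability/averaging scheme. I would construct a single strict preference profile $\profile$ on $n$ voters and $m=n$ candidates, realized by a whole family of $n$ line metrics $\instance_1,\dots,\instance_n$ that all induce the same $\profile$, but in which the costs differ: in $\instance_j$ candidate $c_j$ is optimal with $\cost_{\instance_j}(c_j)=L$, while $\cost_{\instance_j}(c_i)\ge 3L$ for every $i\neq j$. Because the mechanism sees only $\profile$, it must commit to one fixed distribution $p=(p_1,\dots,p_n)$ across the entire family. Averaging the distortion ratio over the family gives
\[
\frac1n\sum_{j}\frac{\expected{\cost(\w)\mid \instance_j}}{L}=\frac1n\sum_i p_i\cdot\frac1L\sum_j \cost_{\instance_j}(c_i)=\frac1n\sum_i p_i\bigl(1+3(n-1)\bigr)=3-\tfrac2n,
\]
so some $\instance_{j}$ must attain ratio at least $3-\tfrac2n$; equivalently, choosing the index $j$ minimizing $p_j$ (so $p_j\le 1/n$) yields $\expected{\cost(\w)\mid\instance_j}\ge p_jL+(1-p_j)3L\ge(3-\tfrac2n)L$. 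Since $n=m$, this is the claimed $3-\tfrac2n=3-\tfrac2m$, and the same bound then transfers to the centralized $\maxx$ setting, giving $3-\varepsilon$ as $n\to\infty$.

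The construction is the delicate part, and it is where I expect the main obstacle. The difficulty is intrinsic: a candidate with low $\maxx$-cost must be close to \emph{all} voters, which tends to make it everyone's favorite and hence \emph{revealed} by the profile — against which no lower bound above $1$ can hold. The resolution is to hide the optimum using ties: the optimal central candidate is beaten by each voter's preferred ``extreme'' candidate only through tie-breaking of an equidistant voter, while that extreme candidate, being far from the opposite voters, has farthest-voter distance $\approx 3L$. The cleanest instance of this is $n=2$: two reflected metrics sharing one profile $\profile_1=(c_1,c_2),\profile_2=(c_2,c_1)$, with each ``boundary'' voter equidistant between the two candidates, which already gives exactly $2=3-\tfrac2n$. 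The challenge is to lift this to general $n$ on a \emph{one-dimensional} line, where the limited room for simultaneous equidistances means each candidate can only be made the optimum in its own metric through carefully placed ties; managing these ties so that the profile stays fixed across all $n$ metrics, every non-optimal candidate keeps cost essentially $3L$, and the optimum is never identifiable, is exactly the technical heart of the proof. This tension — ties costing a small amount of slack — is precisely what yields the additive $-2/n$ rather than a clean $3$, and explains why the reused centralized bound is $3-\varepsilon$ rather than $3$.
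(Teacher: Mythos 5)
Your overall scheme is exactly the paper's: with a single group the \avgmax\ objective collapses to \maxx; the bound comes from a family of $n$ line metrics $\instance_1,\dots,\instance_n$ that all induce one fixed preference profile, so the mechanism must commit to a single distribution $p$ across the whole family; and picking the index with $p_j \le \nicefrac{1}{n}$ forces expected cost at least $(3-\tfrac{2}{n})\cost(\opt)$. Your averaging arithmetic coincides with the paper's (the paper instantiates $L=\tfrac12$, $3L=\tfrac32$, and computes the ratio $3-2p_i$ per instance).

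The gap is that you never actually produce the family of metrics for general $n$: you exhibit only the $n=2$ case and explicitly defer the general construction as ``the technical heart,'' speculating that a one-dimensional line leaves too little room for the simultaneous equidistances you need. That worry is the one missing idea, and it dissolves once you allow co-location of points. In $\instance_i$ the paper places candidate $c_i$ at position $0$, \emph{all} other candidates at position $1$, voter $v_i$ at $-0.5$, and \emph{all} other voters at $0.5$. Then every voter $v_j$ with $j\ne i$ is at distance $\tfrac12$ from \emph{every} candidate, so any ranking whatsoever is consistent for $v_j$ --- in particular the cyclic profile $\profile_j = (c_j, c_{j+1}, \dots, c_m, c_1, \dots, c_{j-1})$ --- while $v_i$ genuinely ranks $c_i$ first, which the cyclic profile also respects. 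Hence one single profile is consistent with all $n$ instances simultaneously, and in $\instance_i$ one gets $\cost(c_i)=\tfrac12$ and $\cost(c_j)=\tfrac32$ for every $j\ne i$, so your averaging step applies verbatim. As written, your proof establishes the theorem only for $n=2$ (i.e., the bound $2$), not the claimed $3-\tfrac{2}{n}$; with the co-location construction above it becomes the paper's proof of \Cref{th:randrand-avgmax-lower}.
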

%\begin{proofsketch}%\avgmax
%	Consider any $\randrand$ mechanism, with $n$ candidates, and $n$ voters in a single group. Each voter $v_i$ ranks candidates cyclically starting from $c_i$. We construct $n$ instances $\instance_1, \ldots, \instance_n$ on the line metric consistent with the preference profiles, as shown in \Cref{fig:max_lower-maxX}. Let $p_i$ be the winning probability of candidate $c_i$. There exists a candidate $c_i$ with $p_i \leq \tfrac{1}{n}$ because $\sum p_i = 1$. This yields to a distortion lower bound of $3 - \tfrac{2}{n}$ by $\instance_i$.
%\end{proofsketch}
\begin{proof}
We construct an instance with a set of \( m \) candidates, \( \candidates = \{ c_1, c_2, \dots, c_m \} \), and a set of \( n = m \) voters, \( \voters = \{\voter_1, \voter_2, \dots, \voter_n\} \), all belonging to a single group. Each voter \( \voter_i \) has a preference profile $\profile_i$ that ranks candidates cyclically, starting with $c_i$ as their top choice:
	\[
	\profile_i = \left(c_i, c_{i+1}, \dots, c_m, c_1, c_2, \dots, c_{i-1} \right).
	\]
Now, we construct $ m $ instances, denoted \( \instance_1, \instance_2, \dots, \instance_m \), on the line metric. %each containing a single group. Across all instances, the sets \( \voters \), \( \candidates \), and the preference profiles are the same; only the underlying metric space differs.
Across all instances, the voter set $\voters$, candidate set $\candidates$, and the preference profile $\profile$ are identical; they differ only in the arrangement of voters and candidates within the underlying metric space.

For any instance $\instance_{i}, \text{ where } 1 \leq i \leq m$, configuration of the line metric is as follows (see \Cref{fig:max_lower-maxX}):

\begin{itemize}
	\item Candidate \( c_i \) is located at position \( 0 \), while all other candidates are located at \( 1 \).
	\item Voter \( \voter_i \) is located at position \( -0.5 \), and all other voters are located at \( 0.5 \).
\end{itemize}
	It is straightforward to verify that each constructed instance is consistent with the specified preference profile. Since there is a single group, the \avgmax\, objective simplifies to \maxx. For each instance $\instance_{i}$, we have $\cost(c_i) = \frac{1}{2}, \text{ and } \cost(c_{j}) =\tfrac{3}{2}, $ for \(1 \le j \le m\) (where \(i \ne j\)). Clearly, $c_i$ is the optimal candidate in $\instance_{i}$.
	
	Now, consider any \randrand\ mechanism \(\mech \). Let $p_i$ denote the probability that $\mech$ selects $c_i$ as the winner, where $\sum_{i=1}^{m} p_i = 1$.
	For instance $\instance_{i}$, the mechanism's expected cost is
	\[\expected{\cost(\mech(\instance_{i}))} 
    =\sum_{j=1}^{m} p_j \cdot \cost(c_j)
	=\sum_{j \ne i} p_j \cdot \frac{3}{2} + p_i \cdot \frac{1}{2} = \frac{3}{2}(1-p_i)+ \frac{p_i}{2} = \frac{3}{2} - p_i. \]
	%The distortion of $\mech$ \added{for $\instance_{i}$} is
	It follows that
	\begin{align*}
		\frac{\expected{\cost(\mech(\instance_{i}))}}{\cost(\opt)} = \frac{\frac{3}{2} - p_i}{\frac{1}{2}} 
		= 3-2p_i.
	\end{align*}
Since the total probability must sum up to 1, there exists some index $i$ such that $p_i \leq \frac{1}{m}=\frac{1}{n}.$ 
Finally, we obtain the following lower bound on the distortion of $\mech$
%	\begin{align*}
		\[\distortion(\mech) \geq 3-\frac{2}{m} = 3-\frac{2}{n}. \]
%	\end{align*} 
	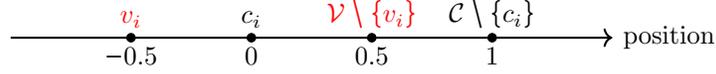
\begin{figure}[t]
		\centering
		\begin{tikzpicture}[x=2cm, y=1cm, font=\small, scale=1.6]
	
	% Axis line
	\draw[->, thick] (-1, 0) -- (1.5, 0) node[right] {position};

	\coordinate (V1) at (-0.5,0);
	\coordinate (C1) at (0,0);
	\coordinate (OTV) at (0.5,0);
	\coordinate (OTC) at (1,0);
	
	\filldraw[black] (V1) circle (1pt) node[anchor=south] {\textcolor{red}{$v_i$}};
	\filldraw[black] (C1) circle (1pt) node[anchor=south] {\textcolor{black}{$c_i$}};
	\filldraw[black] (OTV) circle (1pt) node[anchor=south] {\textcolor{red}{$\voters \setminus \{\voter_i\}$}};
	\filldraw[black] (OTC) circle (1pt) node[anchor=south] {\textcolor{black}{$\candidates \setminus \{\candidate_i\}$}};
	
	  % Position markers
	\foreach \x/\lab in {-0.5/{$-0.5$}, 0/0, 0.5/{$0.5$}, 1/{$1$}} {

		\node[below] at (\x, 0) {\lab};
	}
	
\end{tikzpicture}
		\caption{Configuration of the candidates and voters in instance $\instance_i$ used in the proof of \Cref{th:randrand-avgmax-lower}.}
		\label{fig:max_lower-maxX}
	\end{figure}
\end{proof}

%\begin{theorem}
%	For the $\maxx$ objective in the centralized setting, the distortion of any randomized voting rule is at least $3 - \varepsilon$, for any constant $\varepsilon > 0$, even when the metric is a line.
%	\label{th:max-Xmax-lower}
%\end{theorem}

%\begin{proof}
%	The result directly follows from the proof of \Cref{th:randrand-avgmax-lower}, as the same instance construction and analysis apply in the centralized setting under the $\maxx$ objective. As the number of voters $n$ increases, the distortion approaches $3$. Therefore, for any constant $\varepsilon > 0$, we can construct an instance where the distortion of mechanism $\mech$ exceeds $3 - \varepsilon$, provided that $n > \tfrac{2}{\varepsilon}$.
%\end{proof}

In the centralized setting, where all voters belong to a single group $(k=1)$, the \avgmax\, objective simplifies to the \maxx\, objective. Thus, the instance and analysis from the proof of \Cref{th:randrand-avgmax-lower}, which already considers the single-group case, apply directly.
As the number of voters $n$ increases, the distortion approaches $3$. Therefore, for any constant $\varepsilon > 0$, an instance can be constructed with a sufficiently large number of voters ($n > \tfrac{2}{\varepsilon}$) such that the distortion of any randomized voting rule exceeds $3 - \varepsilon$. This yields the following corollary.

\begin{corollary}[of \Cref{th:randrand-avgmax-lower}]
		For the \maxx\, objective in the centralized setting, the distortion of any randomized voting rule is at least $3 - \varepsilon$, for any constant $\varepsilon > 0$, even when the metric is a line.
		\label{cor:max-Xmax-lower}
\end{corollary}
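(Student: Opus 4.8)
The plan is to obtain this bound essentially for free by reusing the construction and the analysis already carried out in \Cref{th:randrand-avgmax-lower}. The first observation I would make is that in the centralized setting every voter lies in a single group, i.e.\ $k=1$, and in that regime the $\avgmax$ objective degenerates into the $\maxx$ objective: averaging the per-group maxima over a single group returns exactly that group's maximum. Consequently, the family of single-group line instances built in \Cref{th:randrand-avgmax-lower} already measures distortion with respect to $\maxx$, so any lower bound proved there transfers verbatim to the centralized $\maxx$ setting.

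Next I would recall the key feature of that construction: the instances $\instance_1,\dots,\instance_m$ share an identical preference profile and differ only in the placement of voters and candidates on the line, so a randomized voting rule cannot distinguish them and must commit to a single winner distribution $(p_1,\dots,p_m)$ with $\sum_{i} p_i = 1$. On instance $\instance_i$ the optimal candidate $c_i$ has cost $\tfrac{1}{2}$ while every other candidate has cost $\tfrac{3}{2}$, yielding an expected-cost-to-optimum ratio of $3-2p_i$, exactly as computed in the proof of \Cref{th:randrand-avgmax-lower}.

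Then the averaging step is immediate: since the $p_i$ sum to $1$, some index satisfies $p_i \le 1/m = 1/n$, and for the corresponding instance the ratio is at least $3 - 2/n$. To reach an arbitrary target $3-\varepsilon$ I would simply instantiate the construction with enough candidates and voters, choosing $n > 2/\varepsilon$ so that $2/n < \varepsilon$ and hence $3 - 2/n > 3 - \varepsilon$.

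Since the statement is a direct corollary, there is no real technical obstacle; the only point requiring a moment of care is the conceptual bridge — verifying that the quantity quantified over in the earlier proof (the winner distribution $(p_i)$) is precisely what a centralized randomized voting rule produces, and that the $k=1$ reduction of $\avgmax$ to $\maxx$ is legitimate. Once these are noted, the numerical bound follows with no further computation.
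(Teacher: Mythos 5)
Your proposal is correct and follows essentially the same route as the paper: both observe that the single-group ($k=1$) construction of \Cref{th:randrand-avgmax-lower} already measures the \maxx\ objective, that any randomized rule must output one fixed winner distribution across the indistinguishable instances $\instance_1,\dots,\instance_m$, and that taking $n > \nicefrac{2}{\varepsilon}$ pushes the bound $3-\nicefrac{2}{n}$ above $3-\varepsilon$. Nothing is missing; your explicit check that a centralized randomized rule produces exactly the winner distribution $(p_i)$ used in the earlier argument is the same (implicit) bridge the paper relies on.
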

\begin{theorem}
	For general metric spaces and the \avgavg\, objective, the distortion of any \randrand\ mechanism is at least $3 - \tfrac{2}{n}$ (equivalently $3 - \tfrac{2}{k}$).
	\label{th:randrand-avgavg-lower}
\end{theorem}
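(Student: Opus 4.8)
The plan is to exploit two structural features of the \randrand\ model when every group holds a single voter. First, with single-voter groups we have $n=k$, and the \avgavg\ objective collapses to the normalized social cost: $\avgavg(c)=\tfrac1k\sum_i \dis{v_i}{c}$. Second, by \Cref{obs:singlevoter}, any mechanism of finite distortion is forced to return each voter's favourite as that group's representative, so the final winner is necessarily drawn from $R=\{\topp{v_1},\dots,\topp{v_k}\}$; in particular $\fov$ can never output a candidate that is nobody's top choice. The entire idea is to build a \emph{symmetric} instance in which the genuinely optimal candidate $\opt$ is not anyone's favourite (hence unavailable to the over-group rule), while every representative that \emph{is} available is worse than $\opt$ by exactly the factor $3-\tfrac2k$.

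Concretely, I would take a shortest-path metric on a star-shaped tree: place $\opt$ at the center $o$ and attach $k$ two-edge branches, putting voter $v_i$ at the midpoint and candidate $c_i$ at the far leaf of branch $i$, with every edge of length $1$. Then $\dis{v_i}{c_i}=\dis{v_i}{\opt}=1$, while $\dis{v_i}{c_j}=3$ for $j\neq i$. Breaking the tie at distance $1$ in favour of $c_i$ (a profile consistent with the metric, exactly as the paper does elsewhere), each $v_i$ ranks $c_i$ first, so $R=\{c_1,\dots,c_k\}$ and $\opt\notin R$. A direct count gives $\avgavg(\opt)=\tfrac1k\sum_i\dis{v_i}{o}=1$ and, for each representative, $\avgavg(c_i)=\tfrac1k\bigl(\dis{v_i}{c_i}+\sum_{j\neq i}\dis{v_j}{c_i}\bigr)=\tfrac1k\bigl(1+3(k-1)\bigr)=3-\tfrac2k$, confirming that $\opt$ is optimal.

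The payoff of the symmetry is that no averaging over several instances is needed: since \emph{every} candidate in $R$ has the identical cost $(3-\tfrac2k)\cost(\opt)$, whatever distribution $\fov$ places on $R$ produces expected cost exactly $(3-\tfrac2k)\cost(\opt)$. Hence $\distortion(\mech)\ge 3-\tfrac2k=3-\tfrac2n$ for every \randrand\ mechanism of finite distortion; a mechanism that fails to select top candidates already has infinite distortion by \Cref{obs:singlevoter}, so the bound holds unconditionally.

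The hard part is conceptual rather than computational. The crux is guaranteeing that $\opt$ can never surface as a representative, since this is precisely what separates the $3-\tfrac2n$ barrier of \randrand\ from the sub-$3$ bounds achievable by unrestricted centralized randomized rules; this is arranged by forcing $c_i\succ\opt$ for $v_i$ through the tie at distance $1$ (or, if strict preferences are preferred, by pulling $c_i$ inward to distance $2-\varepsilon$, which yields ratio $3-\tfrac2k-\varepsilon$ and the same supremum as $\varepsilon\to0$). A secondary point is verifying that the tree metric is genuinely general and not reducible: a $k$-leaf star with $k\ge 3$ has no isometric embedding into the line, matching the paper's remark that \avgavg\ lower bounds require non-line metrics. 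Everything else reduces to the elementary distance computations above.
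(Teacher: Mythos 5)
Your proposal is correct and follows essentially the same route as the paper's own proof: the paper also builds a spider tree with the optimal candidate $c_m$ at the center, each voter $v_i$ at distance $1$ from the center, and its representative $c_i$ one further edge away, breaks the distance-$1$ tie in favour of $c_i$ so that \Cref{obs:singlevoter} confines the winner to $\{c_1,\dots,c_k\}$, and computes $\cost(c_i)=3-\tfrac{2}{n}$ against $\cost(c_m)=1$. Your added remarks (the uniform cost of all representatives making averaging unnecessary, and the $\varepsilon$-perturbation to avoid ties) are sound but do not change the substance of the argument.
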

%\begin{proofsketch}%avgavg
%	Consider any $\randrand$ mechanism. We construct an instance with $k$ single-voter groups, shown in \Cref{fig:randrand-avg-graph}. Each voter $v_i$ prefers $c_i$ to other candidates, making $c_i$ representative of $g_i$. Since the optimal candidate $c_m$ is never selected, yielding the desired distortion lower bound.
%\end{proofsketch}
\begin{proof}
    Consider a \randrand\ mechanism $\mech$. We construct an instance with a set of $m = k+1$ candidates, $\candidates = \{c_1, c_2, \ldots, c_{m = k+1}\}$, a set of $n=k$ voters $\voters = \{v_1, v_2, \ldots, v_{n = k}\}$, and $k$ single-voter groups, $g_i = \{v_i\}$ for $1 \le i \le k$. Each voter \( \voter_i \) has a preference profile $\profile_i$, with $c_i$ as the top choice, immediately followed by $c_m$, denoted as
    \(
    \profile_i = \movetofirst{\movetofirst{\sigma}{c_m}}{c_i},
    \)
    where $ \sigma $ is an ordering of the candidates.

    We now construct a connected graph $G$ with $n+m$ vertices, denoted $u_1, u_2, \ldots, u_{n+m}$, where the shortest-path distances in $G$ define the underlying metric space $\textsf{d}$. Each voter and candidate is placed on one of the vertices. The construction of $G$ is as follows (see \Cref{fig:randrand-avg-graph} for an illustration):
    
    \begin{itemize}
    	\item Place candidate $c_m$ at vertex $u_1$. 
    	
    	\item For each $1 \le i \le k$, place voter $v_i$ at vertex $u_{2i}$, and candidate $c_i$ at vertex $u_{2i+1}$.
    	
    	\item For each $1 \le i \le k$, add an edge between  $u_1$ and $u_{2i}$, and another edge between $u_{2i}$ and $u_{2i+1}$. This forms $k$ branches extending from the central vertex $u_1$.
    \end{itemize}
	See \Cref{tab:rand-rand-avg} for the corresponding distances. For all $1 \le i \le k$, we have 
	\( \topp{v_i} = c_i \).   
    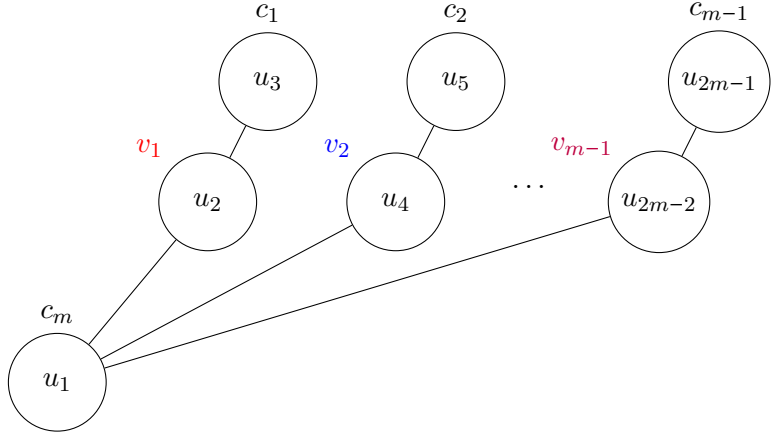
\begin{figure}[t]
    	\centering
    	\begin{tikzpicture}[ every node/.style={}, % Removed the global minimum size
	label distance=0pt
	]
	
	% Central node u_1
	\node[minimum size=1.3cm, circle, draw] (u1) at (0,0) {$u_1$};
	\node[above=0pt of u1, draw=none] {$c_m$};
	
	% First branch (i = 1)
	\node[minimum size=1.3cm, circle, draw] (u2) at (2,2.4) {$u_2$};
	\node[above left=0pt of u2, draw=none] {\textcolor{red}{$v_1$}};
	
	\node[minimum size=1.3cm, circle, draw] (u3) at (2.8,4.0) {$u_3$};
	\node[above=0pt of u3, draw=none] {$c_1$};
	\draw (u1) -- (u2) -- (u3);
	
	% Second branch (i = 2)
	\node[minimum size=1.3cm, circle, draw] (u4) at (4.5,2.4) {$u_4$};
	\node[above left=0pt of u4, draw=none] {\textcolor{blue}{$v_2$}};
	
	\node[minimum size=1.3cm, circle, draw] (u5) at (5.3,4.0) {$u_5$};
	\node[above=0pt of u5, draw=none] {$c_2$};
	\draw (u1) -- (u4) -- (u5);
	
	% Ellipsis for remaining branches
	\node[draw=none] at (6.3, 2.6) {$\dots$};
	
	% Last branch (i = m)
	\node[minimum size=1.3cm, circle, draw] (uN1) at (8,2.4) {$u_{2m-2}$};
	\node[above left=0pt of uN1, draw=none] {\textcolor{purple}{$v_{m-1}$}};
	
	\node[minimum size=1.3cm, circle, draw] (uN2) at (8.8,4.0) {$u_{2m-1}$};
	\node[above=0pt of uN2, draw=none] {$c_{m-1}$};
	\draw (u1) -- (uN1) -- (uN2);

\end{tikzpicture}
    	\caption{Tree graph used in the proof of \Cref{th:randrand-avgavg-lower}. Different voter groups are distinguished by distinct colors.}
    	\label{fig:randrand-avg-graph}
    \end{figure}
 %   The shortest-path distances between candidates and voters, as derived from the graph $G$ in the proof of
    \begin{table}[t]
    	\centering
    	\begin{tabular}{|>{\columncolor{gray!20}}c|*{3}{>{\centering\arraybackslash}p{0.75cm}}|}
    		\hline
    		\rowcolor{gray!20}
    		$\dis{\cdot}{\cdot}$ & \bm{$c_i$} & \bm{$c_j$} & \bm{$c_m$} \\
    		\hline
    		\bm{$v_i$} & 1 & 3 & 1 \\
    		\hline
    	\end{tabular}
    	\vspace{0.25cm}
    	\caption{For any \(1 \leq i, j \leq k\) with \(i \neq j\), the shortest-path distances between candidates and voters, derived from tree graph $G$ in the proof of \Cref{th:randrand-avgavg-lower}.}
    	\label{tab:rand-rand-avg}
    \end{table}
%	One can verify that these preference profiles are consistent with the underlying metric space $\textsf{d}$, derived from distances in $G$.
	Therefore, the representative of group $g_i$ is candidate $c_i$ (by \Cref{obs:singlevoter}). Consequently $c_m$ is not the representative of any group, and thus cannot be the winner of mechanism $\mech$. According to the definition of the \avgavg\ objective, we have
    $\cost(c_i) = \frac{3(n-1) + 1}{n} = 3 - \frac{2}{n}$ for $1 \leq i < m$, and
	$\cost(c_m) = 1$.
	Clearly, $c_m$ is the optimal candidate. We obtain the lower bound on the distortion of $\mech$ as follows:
	\begin{align*}
		 \distortion(\mech) &\ge \min_{1 \le i < m} \bigg({\frac{\cost(c_i)}{\cost(\opt)}} \bigg) \\
		 & = \frac{3 - \frac{2}{n}}{\cost(c_m)} \\
		 & = 3 - \frac{2}{n}.
    \end{align*}
\end{proof}

	\section{Resolving the Distortion Bounds for \detdet}\label{sec:detdet}
	% detdet  

In this section, we study the distortion of deterministic distributed mechanisms, providing both lower and upper bounds in general metric spaces. \citet{anshelevich2022distortion} proposed the $\alpha$-in-$\beta$-over mechanism ($\malphabeta$), which allows any candidate---not just representatives---to be selected as the final winner. We adopt their approach here.
%Following the approach of \cite{anshelevich2022distortion}, we assume that the winner is selected from all candidates, instead of the representatives.

%\begin{definition}[$\alpha$-in-$\beta$-over mechanism \cite{anshelevich2022distortion}]
%	\label{def:alphabeta}
%	The $\alpha$-in-$\beta$-over mechanism is a deterministic distributed mechanism defined as follows:
%	\begin{enumerate}
%		\item For each group $g \in \groups$, choose its representative using an in-group voting rule with distortion at most $\alpha$.
%		\item Choose the final winner using an over-group voting rule with distortion at most $\beta$.
%	\end{enumerate}
%\end{definition}

		\subsection{Upper Bounds}
		%Upper bounds for det
We establish improved upper bounds for deterministic mechanisms with respect to the \avgmax\, and \maxmax\, objectives. Let $\fbeta$ be a deterministic voting rule with distortion at most $\beta$, and let $\fun$ be a deterministic voting rule that satisfies the property of pareto efficiency. We improve the best known upper bound for the \avgmax\, objective from 11, as proved by \citet{anshelevich2022distortion}, to 7 by applying mechanism $(\fun, \fbeta)$. For the \maxmax\, objective, we improve the previous upper bound of 5 to 3 by the $\mad$ mechanism. We begin with the simplest case, \maxmax, and move towards the slightly more intricate \avgmax.

%By applying an argument \added{roughly} analogous to that of \Cref{th:randrand_maxmax}, we prove that the distortion of the \emph{Arbitrary Dictator} mechanism ($\mad$) is at most 3 with respect to the $\maxmax$ objective.
\begin{theorem}
	For the \maxmax\, objective in general metric spaces, we have $\distortion(\mad) \leq 3$.
	\label{th:detdet_maxmax}
\end{theorem}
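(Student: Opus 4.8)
The plan is to observe that under $\mad$ both stages are arbitrary dictatorships, so the final winner is always the top-ranked candidate of a single voter: the second stage picks some representative $\rep_{g_0}$, and that representative is itself $\topp{\voter_0}$ for some voter $\voter_0 \in g_0$ chosen in the first stage. Hence $\w = \topp{\voter_0}$. Since distortion is a worst-case ratio and the choices are arbitrary, it suffices to bound $\cost(\w) = \maxmax(\w)$ for an arbitrary such pivotal voter $\voter_0$, and the whole argument reduces to the centralized $\maxx$ analysis applied to $\voter_0$.

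First I would unfold the objective as $\cost(\w) = \dis{\vstarstar{\w}}{\w}$ by the definition of $\maxmax$, writing $\voter^* = \vstarstar{\w}$ for the voter farthest from $\w$. Routing through $\opt$ via the triangle inequality gives $\dis{\voter^*}{\w} \le \dis{\voter^*}{\opt} + \dis{\opt}{\w}$, and the first term is at most $\dis{\vstarstar{\opt}}{\opt} = \cost(\opt)$ by \Cref{prop:maxmax}.

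For the second term I would exploit that $\w = \topp{\voter_0}$ is the closest candidate to $\voter_0$: by \Cref{prop:topi} we have $\dis{\voter_0}{\w} \le \dis{\voter_0}{\opt}$. A triangle inequality through $\voter_0$ then yields $\dis{\opt}{\w} \le \dis{\opt}{\voter_0} + \dis{\voter_0}{\w} \le 2\,\dis{\voter_0}{\opt}$, and a second application of \Cref{prop:maxmax} bounds $\dis{\voter_0}{\opt} \le \dis{\vstarstar{\opt}}{\opt} = \cost(\opt)$. Combining the two estimates gives $\cost(\w) \le \cost(\opt) + 2\cost(\opt) = 3\cost(\opt)$, whence $\distortion(\mad) \le 3$.

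I do not expect a genuine obstacle here: the only conceptual step is recognizing that the ``arbitrary'' choices in both stages collapse to selecting some voter's favorite candidate, after which the bound is exactly the classical centralized triangle-inequality chain (and it parallels the computation in \Cref{th:randrand_maxmax}, only with a single fixed voter $\voter_0$ rather than an average over all voters). If anything needs care, it is simply making explicit that the bound holds uniformly over every admissible arbitrary choice, so that the worst case is still covered.
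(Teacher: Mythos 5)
Your proof is correct and follows essentially the same argument as the paper: both reduce the winner to $\topp{\voter_0}$ for a single dictator voter and derive the identical intermediate bound $\dis{\vstarstar{\w}}{\opt} + 2\,\dis{\voter_0}{\opt} \le 3\cost(\opt)$ using \Cref{prop:topi}, \Cref{prop:maxmax}, and the triangle inequality. The only (immaterial) difference is the order of the triangle-inequality applications — you route $\dis{\vstarstar{\w}}{\w}$ through $\opt$ first, whereas the paper routes it through the dictator voter first.
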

\begin{proof}
	Consider an instance $\instance = (\voters, \candidates, \groups, \profile, \textsf{d})$ and the \emph{Arbitrary Dictator} mechanism ($\mech = \mad$), which selects the top-ranked candidate of an arbitrary voter $\voter$ as the final winner; that is, $\w=\topp{\voter}$. We follow a strategy roughly analogous to that in \Cref{th:randrand_maxmax}.
	\begin{align*}
		\cost(\topp{\voter})
		&= \dis{\vstarstar{\topp{\voter}}}{\topp{\voter}}& \text{(Definition of $\maxmax$)}\\
		&\leq \dis{\vstarstar{\topp{\voter}}}{\voter} + \dis{\voter}{\topp{\voter}} & \text{(Triangle Inequality)}\\
		&\leq \dis{\vstarstar{\topp{\voter}}}{\voter} + \dis{\voter}{\opt} & \text{(\Cref{prop:topi})}\\
		&\leq \dis{\vstarstar{\topp{\voter}}}{\opt} + \dis{\opt}{\voter} + \dis{\voter}{\opt} & \text{(Triangle Inequality)}\\
		& = \dis{\vstarstar{\topp{\voter}}}{\opt} + 2\dis{\voter}{\opt} \\
		& \le \dis{\vstarstar{\topp{\voter}}}{\opt} + 2\dis{\vstarstar{\opt}}{\opt} & \text{(\Cref{prop:maxmax})} \\
		&\leq 3 \dis{\vstarstar{\opt}}{\opt} & \text{(\Cref{prop:maxmax})} \\
		&= 3\cost(\opt) & \text{(Definition of $\maxmax$)}.
	\end{align*}
\end{proof}

\begin{theorem}
	 For the \avgmax\, objective in general metric spaces, we have $\distortion((\fun, \fbeta)) \leq 2\beta + 3$.
	\label{th:det_avgmax}
\end{theorem}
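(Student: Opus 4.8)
The plan is to bound the final cost in two stages: first relate $\cost(\w)$ to the quantity $\dis{\opt}{\w}$, and then bound $\dis{\opt}{\w}$ on its own. For the first step, write out the \avgmax\ objective and, for each group $g$, let $\voter^*_g = \vstar{\w}{g}$ be the farthest voter from $\w$ in $g$. A single triangle inequality gives $\dis{\voter^*_g}{\w} \le \dis{\voter^*_g}{\opt} + \dis{\opt}{\w}$, and since $\voter^*_g$ is a voter in $g$, \Cref{prop:Xmax} yields $\dis{\voter^*_g}{\opt} \le \dis{\vstar{\opt}{g}}{\opt}$. Averaging over all groups, the first terms reassemble into exactly $\cost(\opt)$ (the \avgmax\ cost of $\opt$), while the $\dis{\opt}{\w}$ term is constant in $g$, so I obtain the clean intermediate inequality $\cost(\w) \le \cost(\opt) + \dis{\opt}{\w}$.

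The crux is then bounding $\dis{\opt}{\w}$, and this is where the two hypotheses enter. For the in-group side, I would invoke Pareto efficiency (\Cref{def:unan}): because $\fun$ selects $\w_g$, it cannot be that every voter in $g$ prefers $\opt$ to $\w_g$, so there is some voter $\voter_g \in g$ with $\dis{\voter_g}{\w_g} \le \dis{\voter_g}{\opt}$. A triangle inequality through $\voter_g$ then gives $\dis{\opt}{\w_g} \le \dis{\opt}{\voter_g} + \dis{\voter_g}{\w_g} \le 2\dis{\voter_g}{\opt}$, and applying \Cref{prop:Xmax} once more bounds this by $2\dis{\vstar{\opt}{g}}{\opt}$. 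Averaging over $g$ shows that $\tfrac{1}{k}\sum_{g}\dis{\opt}{\w_g} \le 2\cost(\opt)$. For the over-group side, I average the triangle inequality $\dis{\opt}{\w} \le \dis{\opt}{\w_g} + \dis{\w_g}{\w}$ over $g$ and substitute \Cref{prop:fbeta}, which bounds $\tfrac{1}{k}\sum_g \dis{\w_g}{\w}$ by $\beta \cdot \tfrac{1}{k}\sum_g \dis{\opt}{\w_g}$. This produces
\[
\dis{\opt}{\w} \;\le\; (1+\beta)\cdot \frac{1}{k}\sum_{g \in \groups}\dis{\opt}{\w_g} \;\le\; 2(1+\beta)\,\cost(\opt).
\]

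Combining the two stages gives $\cost(\w) \le \cost(\opt) + 2(1+\beta)\cost(\opt) = (2\beta+3)\cost(\opt)$, as claimed. I do not anticipate a genuine technical obstacle here—every step is a triangle inequality or a direct appeal to an earlier observation—so the real content is conceptual rather than computational. The key insight, and the reason the bound is \emph{independent} of the in-group distortion $\alpha$, is that Pareto efficiency alone suffices to control $\dis{\opt}{\w_g}$: the representative of each group can never be more than twice the group's worst-case optimal cost away from $\opt$, regardless of how poorly $\fin$ approximates the in-group objective. The one point to state carefully is the degenerate case $\w_g = \opt$, where $\dis{\opt}{\w_g}=0$ and the Pareto argument is vacuously fine, and to note that \Cref{prop:fbeta} is exactly the over-group guarantee for the \avgg\ objective that $\fbeta$ satisfies in the centralized sub-instance on the representatives.
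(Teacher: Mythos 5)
Your proposal is correct and follows essentially the same route as the paper's proof: the definition of \avgmax\ plus the triangle inequality and \Cref{prop:Xmax} to get $\cost(\w) \le \cost(\opt) + \dis{\opt}{\w}$, then \Cref{prop:fbeta} to reduce $\dis{\opt}{\w}$ to the average of $\dis{\opt}{\w_g}$, and finally Pareto efficiency with one more triangle inequality and \Cref{prop:Xmax} to bound that average by $2\cost(\opt)$. The only difference is organizational—you split the argument into two stages where the paper runs it as a single chain of inequalities—so the mathematical content is identical.
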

\begin{proof}
Consider an instance $\instance = (\voters, \candidates, \groups, \profile, \textsf{d})$ and a \detdet\  mechanism $\mech = (\fun, \fbeta)$.   
By the property of pareto efficiency, for each group $g$, there exists a voter $\voter_g \in g$ who prefers the representative $\w_g$ over the optimal candidate $\opt$. We have
\begin{align*}
	\cost(\w) 
	&= \frac{1}{k} \sum_{g \in \groups} \dis{\vstar{\w}{g}}{\w} & \quad\text{(Definition of \avgmax)} \\
	&\leq \frac{1}{k} \sum_{g \in \groups} \dis{\vstar{\w}{g}}{\opt} + \dis{\opt}{\w} & \quad\text{(Triangle Inequality)}\\
	&\leq \frac{1}{k} \sum_{g \in \groups} \dis{\vstar{\opt}{g}}{\opt} + \dis{\opt}{\w} & \quad\text{(\Cref{prop:Xmax})}\\
	&= \cost(\opt) + \frac{1}{k} \sum_{g \in \groups} \dis{\opt}{\w} & \quad\text{(Definition of \avgmax)}\\
	&\leq \cost(\opt) + \frac{1}{k} \sum_{g \in \groups} \dis{\opt}{\w_g} + \dis{\w}{\w_g} & \quad\text{(Triangle Inequality)} \\
	&\leq \cost(\opt) + (\beta + 1) \cdot \frac{1}{k} \sum_{g \in \groups} \dis{\opt}{\w_g} & \quad\text{(\Cref{prop:fbeta})}\\
	&\leq \cost(\opt) + (\beta + 1) \cdot \frac{1}{k} \sum_{g \in \groups} \dis{\opt}{v_g} + \dis{v_g}{\w_g} & \quad\text{(Triangle Inequality)}\\
	&\leq \cost(\opt) + 2(\beta + 1) \cdot \frac{1}{k} \sum_{g \in \groups} \dis{\opt}{v_g} & \quad\text{($\dis{v_g}{\w_g} \leq \dis{v_g}{\opt}$)}\\
	&\leq \cost(\opt) + 2(\beta + 1) \cdot \frac{1}{k} \sum_{g \in \groups} \dis{\opt}{\vstar{\opt}{g}} & \quad\text{(\Cref{prop:Xmax})}\\
	&= (2\beta + 3) \cost(\opt) & \quad\text{(Definition of \avgmax).}
\end{align*}
\end{proof}
%%%%%%%%%%%%%%%%%%%%

We can apply $\fpmpar$ as both the in-group and over-group voting rules. Assuming each voter is at a distance of 0 from her top choice, this yields a distortion of $\beta = 2$, as shown in \citep{gkatzelis2020resolving}.
Combined with \Cref{th:det_avgmax}, we conclude that $\mech = (\fpmpar,\fpmpar)$ is a \detdet\ mechanism that satisfies the bound in \Cref{cor:detdet_avgmax}.
\begin{corollary}[of \Cref{th:det_avgmax}]
	For the \avgmax\, objective in general metric spaces, there exists a \detdet\ mechanism with distortion at most 7.
	\label{cor:detdet_avgmax}	
\end{corollary}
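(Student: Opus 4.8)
The plan is to obtain \Cref{cor:detdet_avgmax} as a direct instantiation of \Cref{th:det_avgmax}, which bounds the distortion of any $(\fun,\fbeta)$ mechanism by $2\beta+3$. It therefore suffices to exhibit a concrete pair of deterministic rules satisfying the two hypotheses: an in-group rule $\fun$ that is Pareto efficient, and an over-group rule $\fbeta$ whose distortion with respect to the $\avgg$ objective is at most $2$. I would take $\fpmpar$ for both stages, so that $\mech=(\fpmpar,\fpmpar)$, and then verify each hypothesis in turn before plugging $\beta=2$ into the theorem.

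The in-group requirement is immediate: $\fpmpar$ is Pareto efficient by construction, since the elimination step in its definition terminates at a candidate that is both Pareto efficient and still associated with a perfect matching. Thus $\fpmpar$ is a legitimate choice of $\fun$. The crux is to show that, \emph{as the over-group rule}, $\fpmpar$ achieves $\beta=2$ rather than the generic guarantee of $3$. The key structural fact is that Stage~2 is run on the instance $(R,R,\profile^R,\textsf{d})$, in which each representative $\rep_g$ serves simultaneously as a voter and as a candidate occupying the very same point of the metric. Consequently every voter in this instance is at distance $0$ from its own top-ranked candidate, i.e.\ $\dis{\rep_g}{\topp{\rep_g}}=0$ for all $g$.

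Under this co-location the standard plurality-matching analysis tightens from $3$ to $2$. Recall that the winner $\w$ returned by plurality matching comes with a perfect matching on the voters such that each voter $\voter$ (matched to $\voter'$) prefers $\w$ to $\topp{\voter'}$, giving $\dis{\voter}{\w}\le \dis{\voter}{\topp{\voter'}}\le \dis{\voter}{\voter'}+\dis{\voter'}{\topp{\voter'}}$. In the generic bound the term $\dis{\voter'}{\topp{\voter'}}$ is controlled by $\dis{\voter'}{\opt}$ and contributes the extra unit to the factor $3$; here it is exactly $0$, so $\dis{\voter}{\w}\le \dis{\voter}{\voter'}\le \dis{\voter}{\opt}+\dis{\voter'}{\opt}$. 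Summing over the perfect matching (each voter appearing once as $\voter$ and once as $\voter'$) yields $\sum_{\voter}\dis{\voter}{\w}\le 2\sum_{\voter}\dis{\voter}{\opt}$, which is the refined distortion-$2$ guarantee attributed to \citet{gkatzelis2020resolving}. Hence $\beta=2$ is valid for $\fov=\fpmpar$ on these Stage~2 instances.

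Finally, substituting $\beta=2$ into \Cref{th:det_avgmax} gives $\distortion(\mech)\le 2\beta+3=7$, which is the claimed bound. The only delicate step is the middle one: confirming that the Stage~2 co-location genuinely forces $\dis{\voter'}{\topp{\voter'}}=0$, and that the perfect matching underlying $\fpmpar$ (inherited through its Pareto-efficiency elimination) remains available in this particular instance, so that the distortion-$2$ refinement of plurality matching applies verbatim.
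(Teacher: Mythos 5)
Your proposal is correct and takes essentially the same route as the paper: instantiate \Cref{th:det_avgmax} with $\mech = (\fpmpar,\fpmpar)$, note that in Stage~2 every voter (a representative) is co-located with a candidate and hence at distance $0$ from her top choice, so Plurality Matching's distortion improves to $\beta=2$ (the paper simply cites \citet{gkatzelis2020resolving} for this fact, whereas you re-derive it via the perfect matching), and conclude $2\beta+3=7$. The only discrepancy is your framing of Stage~2 as the instance $(R,R,\profile^R,\textsf{d})$, whereas in \Cref{sec:detdet} the paper follows \citet{anshelevich2022distortion} and lets $\fov$ choose among all of $\candidates$; this is harmless here, because your matching argument bounds $\dis{\rep_g}{\w}$ using the triangle inequality through $\opt$ as an arbitrary point of the metric (not as a candidate of the Stage~2 instance), so the inequality required by \Cref{prop:fbeta} with $\beta=2$ holds in either formulation.
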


 %just for \avgmax and \maxmax
		\subsection{Lower Bounds}
		%lower bound of detdet ---> max-avg

In this section, we present a lower bound of 5 on the distortion of any \detdet\ mechanisms under the \maxavg\ objective, improving upon the previous bound of $2+\sqrt{5}$, which achieved by \citet{anshelevich2022distortion}. Following the framework in that paper, our analysis applies the over-group voting rule across the set of candidates ($\candidates$), rather than the group representatives ($R$).

\begin{theorem}
	For general metric spaces and the \maxavg\, objective, the distortion of any \detdet\ mechanism is at least $5$.
	\label{th:detdet-maxavg-lower}
\end{theorem}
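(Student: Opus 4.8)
The plan is to follow the blueprint of \Cref{th:randdet-maxavg-lower}, where the Bias Tournament pins down the stage-one representatives, but to account for the one feature that makes the \detdet\ setting genuinely harder: here $\fov$ ranges over \emph{all} of $\candidates$, so exhibiting two expensive representatives no longer suffices — I must additionally guarantee that $\fov$ cannot escape to the cheap optimum. This is exactly the reason I expect to need a shortest-path graph metric rather than the line of \Cref{th:randdet-maxavg-lower}.

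I would fix an arbitrary \detdet\ mechanism $\mech=(\fin,\fov)$, assuming (WLOG) finite distortion. I would build the Bias Tournament $\tour{\fin}{\candidates}{\sigma}$ for an arbitrary order $\sigma$ and, by \Cref{pr:bias-tournamnet-indegree}, isolate a losing candidate $\ell$ of in-degree at least $\lceil (m-1)/2\rceil$ together with a collection of defeaters. Exactly as in \Cref{th:randdet-maxavg-lower}, two-voter groups whose internal profiles are the defining profiles of the edges $d\to\ell$ force the stage-one representative of each such group to be the prescribed defeater $d$, regardless of the internal behaviour of $\fin$. I would then realise the whole instance as shortest-path distances in a graph $G$, engineered so that (i) $\ell$ has small \maxavg\ cost; (ii) every defeater-representative is far from the voters of some \emph{other} group and therefore has \maxavg\ cost at least $5\,\cost(\ell)$ — here I lean on the fact that \maxavg\ reports the worst group, so a representative may be cheap in its own group yet ruinously expensive elsewhere; and (iii) the representatives' induced preferences deny $\fov$ any cheap option.

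The heart of the argument is item (iii). I would try to arrange $G$ so that $\ell$ sits last in every representative's ranking; then the standard fact that a finite-distortion rule never outputs a candidate ranked last by all voters (stretch $\ell$ to infinity while keeping the profile fixed) shows that $\fov\neq\ell$, after which every candidate $\fov$ can return costs at least $5\,\cost(\ell)$ and the computation of the two \maxavg\ values closes the bound, as in \Cref{th:randdet-maxavg-lower}. Should placing $\ell$ last clash with keeping $\cost(\ell)$ low, the fallback is a determinisation of the averaging trick of \Cref{th:randrand-avgmax-lower}: build several instances sharing one representative profile $\profile^R$, so that the deterministic $\fov$ must commit to the same label throughout, and then select the instance in which that label is $5$-suboptimal.

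The main obstacle is precisely the tension inside item (iii): a small \maxavg\ cost forces $\ell$ to be close, on average, to the voters of every group, yet each representative is the top choice of one of its group's voters and hence lies near those same voters, which tends to pull $\ell$ high in the representatives' rankings and would let $\fov$ recover $\ell$. Overcoming this is where the graph metric is essential — I would route the representative-candidates onto long voter-free branches (and, if needed, plant auxiliary candidates beside each representative) so that $\ell$ can be demoted in every representative's preference order \emph{without} inflating its within-group averages. Verifying that a single graph achieves (i)–(iii) simultaneously, and that the resulting \maxavg\ ratio is exactly $5$ rather than something smaller, is the delicate bookkeeping I expect to occupy the bulk of the proof.
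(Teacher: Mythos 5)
Your stage-one setup (Bias Tournament, two-voter gadget groups forcing the defeaters to serve as representatives, a shortest-path graph metric) matches the paper's, and you correctly isolate the real difficulty as your item (iii): preventing $\fov$, which may return \emph{any} candidate in $\candidates$, from returning the cheap candidate $\ell$. But the tool you propose for (iii) fails, and not because of the geometric bookkeeping you flag as the main obstacle. The ``standard fact'' that a finite-distortion rule never outputs a candidate ranked last by all voters is a fact about \emph{centralized} rules, proved by stretching that candidate to infinity while keeping the profile fixed. Here it cannot be invoked: the profile $\fov$ sees is not detachable from the first stage, because the only way to force an arbitrary deterministic $\fin$ to output the prescribed representatives is the tournament gadget, and that gadget places $\ell$ in the top two positions of \emph{every} first-stage voter's ranking. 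This pins $\ell$ down metrically. Concretely, in a gadget group $\{v,v'\}$ with rankings $(\ell,r,\ldots)$ and $(r,\ell,\ldots)$, for every candidate $x\ne r$ we have $\dis{v}{\ell}\le\dis{v}{x}$ and $\dis{v'}{\ell}\le\dis{v'}{x}$, while for $x=r$ the triangle inequality gives $\dis{v}{\ell}+\dis{v'}{\ell}\le 3\,\dis{v}{r}+\dis{v'}{r}$; taking the maximum over groups yields $\cost(\ell)\le 3\,\cost(\opt)$ under \maxavg\ in \emph{every} metric consistent with your profiles. So $\ell$ can never be made $5$-suboptimal, and a perfectly legitimate $\fov$ that simply returns $\ell$ on the second-stage profiles arising in your construction achieves ratio at most $3$ on all of your instances; no routing of representatives onto voter-free branches or auxiliary candidates can change this, since the bound uses only the first-stage ordinal data. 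The same objection kills your fallback: in any family of instances sharing one representative profile and built from $\ell$-gadgets, no member makes $\ell$ $5$-suboptimal, so a deterministic rule that commits to $\ell$ escapes the entire family.

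The paper's proof of \Cref{th:detdet-maxavg-lower} is aimed precisely at the adversary your plan cannot handle, namely an $\fov$ willing to select $c_1=\ell$. It keeps $\cost(c_1)=\tfrac12$ and introduces a decoy $c_4$ with $\cost(c_4)=\tfrac52$, placed so that both representatives $c_2,c_3$ are \emph{equidistant} from $c_1$ and $c_4$ (both at distance $2$). The tie yields two second-stage profiles consistent with the same metric, differing only by transposing $c_1$ and $c_4$ in the middle positions, and the paper argues that the rule cannot single out $c_1$ in both instances; hence in at least one instance the winner lies in $\{c_2,c_3,c_4\}$, all of cost $\tfrac52$, giving the ratio $5$. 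In short, the missing idea in your proposal is a confusion argument between the cheap candidate and an expensive decoy (so that selecting the ``cheap-looking'' candidate backfires in one of several indistinguishable-looking instances), rather than an argument that $\fov$ \emph{cannot} select the cheap candidate --- it can, and your own gadgets guarantee it is always a reasonably good choice.
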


%\begin{proofsketch}
%	Consider any $\detdet$ mechanism $(\fin, \fov)$. We construct a 4-candidate, 4-voter instance with two groups. Using an arbitrary ordering $\sigma$, we build the Bias Tournament. There is a candidate with in-degree at least $2$, w.l.o.g. assume $c_4$ is one such candidate, and $c_1$ and $c_2$ are two candidates with out-degree towards $c_4$. Let the preference profiles of voters in $g_1$ and $g_2$ correspond to the preference profiles correlating to directed edges from $c_2$ and $c_1$ to $c_4$ in the tournament. Furthermore, the metric space in \Cref{fig:det-maxavg-lower-graph} is constructed to align with these preferences. In this setup $c_1$ and $c_2$ become group representatives. In the second stage, $c_1$ and $c_2$ rank $c_3$ above $c_4$, therefore $c_3$ cannot win due to unanimity of $\fov$. Comparing their costs gives the desired lower bound.
%\end{proofsketch}

\begin{proof}
	Consider a \detdet\, mechanism $\mech=(\fin, \fov)$. We construct an instance with a set of candidates $\candidates = \{c_1, c_2, c_3, c_4\}$, a set of voters $\voters = \{v_1, v_2, v_3, v_4\}$, and $2$ groups $g_1 = \{v_1, v_2\}$ and $g_2 = \{v_3, v_4\}$.
	Let \( \sigma \) be an arbitrary ordering of the candidates. Without loss of generality, assume that \( c_1 \) has an in-degree of at least \( \left\lceil \frac{m - 1}{2} \right\rceil = 2 \) in the tournament \( \tour{\fin}{\candidates}{\sigma} \), such a candidate is guaranteed to exist by \Cref{pr:bias-tournamnet-indegree}. Further, let \( c_2 \) and \( c_3 \) be the two candidates with directed edges toward \( c_1 \) in the tournament.

	\begin{figure}[t]
		\centering
		\scalebox{1}{\begin{tikzpicture}[scale=1, every node/.style={}, label distance=0pt]
	% Regular cycle points (u1 -> u2 -> ... -> u9)
	\foreach \name/\angle in {u_1/90, u_2/45, u_3/0, u_8/-45, u_7/-90, u_9/-135, u_5/180, u_6/135}
	\node[circle, draw, minimum size=0.9cm] (\name) at (\angle:2) {$\name$};
	
	% Center node u4 (with circle)
	\node[circle, draw, minimum size=0.9cm] (u_4) at (0,-0.15) {$u_4$};
	
	% Labels around nodes u_1, u_2, ..., u_9
	\node[above=2pt of u_1] {$c_1, \textcolor{red}{v_1}, \textcolor{blue}{v_3}$};
	\node[above right=2pt of u_2] {\textcolor{red}{$v_2$}};
	\node[right=2pt of u_3] {$c_2$};
	\node[below=2pt of u_7] {$c_4$};
	\node[left=2pt of u_5] {$c_3$};
	\node[above left=2pt of u_6] {\textcolor{blue}{$v_4$}};
	
	% Edges (cycle around u1 to u9 for candidates)
	\draw (u_1) -- (u_2) -- (u_3) -- (u_8) -- (u_7) -- (u_9) -- (u_5) -- (u_6) -- (u_1); % cycle
	\draw (u_1) -- (u_4) -- (u_7); % center path
\end{tikzpicture}}
		\caption{An illustration of the graph $G$ used in the proof of \Cref{th:detdet-maxavg-lower}. Different voter groups are distinguished by distinct colors.}
		\label{fig:det-maxavg-lower-graph}
	\end{figure}
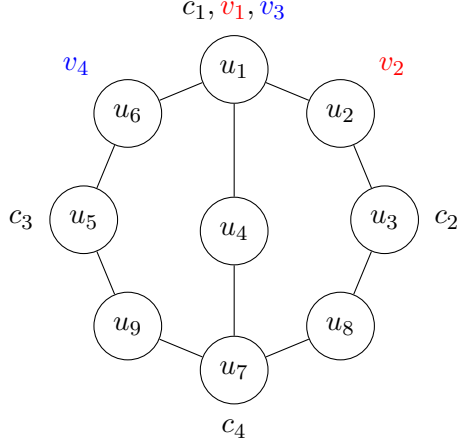
	
	We now construct a connected graph $G$ with $9$ vertices, denoted $u_1, u_2, \ldots u_9$, where the shortest-path distances in $G$ define the underlying metric space $\textsf{d}$. Each voter and candidate is placed at one of the vertices (a single vertex may host multiple entities).
	The graph configuration is as shown in \Cref{fig:det-maxavg-lower-graph}. 
	% Construction of the graph $G$ is as follows:
	%\begin{itemize}
	%	\item Add the following edges to $G$:
	%	\begin{itemize}
	%		\item $(u_1, u_2)$, $(u_2, u_3)$,
	%		\item $(u_1, u_4)$, $(u_4, u_7)$,
	%		\item $(u_1, u_6)$, $(u_6, u_5)$,
	%		\item $(u_3, u_8)$, $(u_8, u_7)$,
	%		\item $(u_5, u_9)$, $(u_9, u_7)$.
	%	\end{itemize}
		
	%	\item Place voters $v_1$ and $v_2$ at vertices $u_1$ and $u_2$ respectively, and voters $v_3$ and $v_4$ at vertices $u_1$ and $u_6$.
		
	%	\item Place candidates $c_i$, $c_j$, $c_k$ and $c_l$ at vertices $u_1$, $u_3$, $u_5$ and $u_7$ respectively.
	%\end{itemize}
	
%	\comment{We define the preference profiles as in \Cref{tab:detdet-maxavg-profiles}. One can verify that these preferences are consistent with the distances in \Cref{tab:detdet-maxavg-table} and with the Bias Tournament's one-to-one competitions.}
	
	Pairwise distances between the candidates and voters are presented in \Cref{tab:detdet-maxavg-table}. Now, we define the preference profiles in \Cref{tab:detdet-maxavg-profiles}, which are consistent with both the shortest-path distances in graph $G$ and the input of tournament $\tour{\fin}{\candidates}{\sigma}$. Note that multiple profiles may be consistent.
	
	\begin{table}[t]
		\centering
		\begin{tabular}{|>{\columncolor{gray!20}}c|*{4}{>{\centering\arraybackslash}p{1cm}}|}
			\hline
			\rowcolor{gray!20}
			$\dis{\cdot}{\cdot}$ & \bm{$c_1$} & \bm{$c_2$} & \bm{$c_3$} & \bm{$c_4$} \\
			\hline
			\bm{$v_1$} & 0 & 2 & 2 & 2 \\
			\bm{$v_2$} & 1 & 1 & 3 & 3 \\
			\bm{$v_3$} & 0 & 2 & 2 & 2 \\
			\bm{$v_4$} & 1 & 3 & 1 & 3 \\
			\hline
		\end{tabular}
		\vspace{0.3cm}
		\caption{The shortest-path distances between candidates and voters, as derived from the graph $G$ in the proof of \cref{th:detdet-maxavg-lower}.}
		\label{tab:detdet-maxavg-table}
	\end{table}
	
		\begin{table}[t]
		\centering
		\begin{tabular}{|>{\columncolor{gray!20}}c|*{1}{>{\centering\arraybackslash}p{3.5cm}}|}
			\hline
			\rowcolor{gray!20}
			Voter & Preference Profile \\
			\hline
			\bm{$v_1$} & $\movetofirst{\movetofirst{\sigma}{c_2}}{c_1}$ \\
			\bm{$v_2$} & $\movetofirst{\movetofirst{\sigma}{c_1}}{c_2}$ \\
			\bm{$v_3$} & $\movetofirst{\movetofirst{\sigma}{c_3}}{c_1}$ \\
			\bm{$v_4$} & $\movetofirst{\movetofirst{\sigma}{c_1}}{c_3}$ \\
			\hline
		\end{tabular}
		\vspace{0.3cm}
		\caption{The voter preference profiles used in the proof of \cref{th:detdet-maxavg-lower}.}
		\label{tab:detdet-maxavg-profiles}
	\end{table}
	
	Since candidates $c_2$ and $c_3$ defeat $c_1$ in $\tour{\fin}{\candidates}{\sigma}$, they must serve as the representative of groups $g_1$ and $g_2$, respectively. Thus, the set of representatives is $R=\{c_2, c_3\}$.
	
	At the second stage of the distributed voting process, we consider two instances:
	\begin{itemize}
		\item \(\instance_1\): The preference profiles of \(c_2\) and \(c_3\) are 
		\((c_2, c_1, c_4, c_3)\) and \((c_3, c_1, c_4, c_2)\), respectively.
		\item \(\instance_2\): The preference profiles of \(c_2\) and \(c_3\) are 
		\((c_2, c_4, c_1, c_3)\) and \((c_3, c_4, c_1, c_2)\), respectively.
	\end{itemize}
	
	It is straightforward to verify that both \(\instance_1\) and \(\instance_2\) are consistent with the metric space defined earlier. 
	In both cases, the preference profiles of \(c_2\) and \(c_3\) follow the pattern
	\[
	(c_2, c_a, c_b, c_3) 
	\quad \text{and} \quad 
	(c_3, c_a, c_b, c_2),
	\]
	where \((c_a, c_b)\) is some ordering of \((c_1, c_4)\). 
	
	If the rule selects the first-ranked or fourth-ranked candidate, then \(c_1\) is not chosen.  
	If the rule selects the second-ranked candidate (\(c_a\)), then in \(\instance_2\) we again exclude \(c_1\).  
	If the rule selects the third-ranked candidate (\(c_b\)), then in \(\instance_1\) we exclude \(c_1\).  
	Therefore, in every case, there exists an instance in which the mechanism selects the winner from the set \(\{c_2, c_3, c_4\}\).

	By the definition of the \maxavg\ objective, we have
	\(\cost(c_1) = \tfrac{1}{2}\)
	and 
	\(\cost(c_2) = \cost(c_3) = \cost(c_4) = \tfrac{5}{2}.\)
	Thus, \(c_1\) is the optimal candidate. 
	The distortion of mechanism \(\mech\) is obtained as follows:
	\begin{align*}
		\distortion(\mech)
		& \ge \min \Bigl( \frac{ \cost(c_2), \cost(c_3), \cost(c_4) }{\cost(\opt)} \Bigr) \\
		& \ge \frac{\frac{5}{2}}{\cost(c_1)} \\
		& = 5.
	\end{align*}
\end{proof} %just for \maxavg
%	%\section{Distortion Lower Bounds in Euclidean Space}
	\section{An Extension of Lower Bounds for \randrand\ and \randdet}\label{sec:euclidean}
	%euclidean

In this section, we focus on the Euclidean metric and establish lower bounds on the $\avgavg$ distortion for both the \randrand\ and \randdet\ mechanisms, as presented in  \Cref{th:euc-avgavg-randrand,th:euc-avgavg-randdet}, respectively.

\begin{theorem}
	For the \avgavg\, objective in Euclidean space, the distortion of any \randrand\ mechanism is at least $\sqrt{5} - \varepsilon$, for every constant $\varepsilon > 0$.
\label{th:euc-avgavg-randrand}
\end{theorem}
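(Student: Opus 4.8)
The plan is to realize the tree-metric construction of \Cref{th:randrand-avgavg-lower} as faithfully as possible inside Euclidean space, and then to show that the Pythagorean relation between distances is exactly what degrades the attainable lower bound from $3$ down to $\sqrt 5$. Fix an integer $\unusedvar$, set $k=\unusedvar+1$, and work in $\mathbb{R}^{k}=\mathbb{R}^{\unusedvar+1}$ with orthonormal basis $e_1,\dots,e_k$. Introduce $\unusedvar+2=k+1$ candidates $c_1,\dots,c_k,c_m$ and $k$ single-voter groups $g_i=\{v_i\}$. I would place $c_m$ at the origin, each $c_i$ at $t\,e_i$, and each voter $v_i$ at $s\,e_i$, where $s>0$ and $t=rs$ for a parameter $r$ slightly below $2$ to be fixed at the end. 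Orthogonality then gives $\dis{v_i}{c_m}=s$, $\dis{v_i}{c_i}=|s-t|=s(r-1)$, and $\dis{v_i}{c_j}=\sqrt{s^2+t^2}=s\sqrt{1+r^2}$ for $j\neq i$. Since the points $\{t e_i\}$ and $\{s e_i\}$ each lie on a scaled copy of the standard simplex, this is a genuine hyper-simplex configuration, and it is by definition an embedding in $\mathbb{R}^{\unusedvar+1}$, so metric validity (including the triangle inequality) is automatic.

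Next I would verify consistency of the preferences. For $1<r<2$ we have $s(r-1)<s<s\sqrt{1+r^2}$, so voter $v_i$ ranks $c_i$ first, then $c_m$, then the remaining candidates; because all other candidates are equidistant from $v_i$, their relative order is free and can be taken to agree with a fixed ordering $\sigma$, matching the intended profile $\profile_i=\movetofirst{\movetofirst{\sigma}{c_m}}{c_i}$. As every group is a singleton, \Cref{obs:singlevoter} forces $\topp{v_i}=c_i$ to be the representative of $g_i$, so $c_m$ is never a representative and can never be output by $\mech$, while the only possible winners are $c_1,\dots,c_k$.

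I would then compute the \avgavg\ costs. By symmetry every $c_i$ has identical cost $\cost(c_i)=\tfrac{1}{k}\bigl(s(r-1)+(k-1)s\sqrt{1+r^2}\bigr)$, whereas $\cost(c_m)=s$ is the optimum. Crucially, since all representatives share exactly the same cost, the expected cost of \emph{any} randomized mechanism on this instance equals $\cost(c_i)$ regardless of how it spreads probability over the representatives. Hence
\[
\distortion(\mech)\ \ge\ \frac{\cost(c_i)}{\cost(c_m)}\ =\ \frac{(r-1)+(k-1)\sqrt{1+r^2}}{k}.
\]
Letting $k\to\infty$ (that is, $\unusedvar\to\infty$) the right-hand side tends to $\sqrt{1+r^2}$, and letting $r\uparrow 2$ this tends to $\sqrt5$; choosing $r$ close enough to $2$ and $k$ large enough yields distortion exceeding $\sqrt5-\varepsilon$ for any prescribed constant $\varepsilon>0$.

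The hard part is conceptual rather than computational: pinning down why $\sqrt5$, and not $3$, is the correct Euclidean target. The preference requirement $\dis{v_i}{c_i}\le\dis{v_i}{c_m}$ forces $t\le 2s$, while the Euclidean law $\dis{v_i}{c_j}=\sqrt{s^2+t^2}$ ties the ``far'' distance to $s$ and $t$ through Pythagoras. Maximizing $\dis{v_i}{c_j}/\dis{v_i}{c_m}=\sqrt{1+(t/s)^2}$ subject to $t/s\le 2$ caps the ratio at $\sqrt5$, attained only at the degenerate point $t=2s$ where $v_i$ is equidistant from $c_i$ and $c_m$; the strict inequality needed for a well-defined strict preference profile is precisely what produces the unavoidable $\varepsilon$ loss, and also explains why the sharp $3$ of the tree metric cannot survive an isometric Euclidean embedding.
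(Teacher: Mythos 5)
Your proof is correct and follows essentially the same route as the paper's: $k$ single-voter groups, one central optimal candidate, and $k$ (asymptotically) orthogonal decoy candidates serving as the voters' top choices, with the mechanism forced onto the decoys and the cost ratio tending to $\sqrt{5}$ as $k \to \infty$. The only differences are cosmetic: the paper places the decoys at simplex vertices with the optimum at the centroid and voters exactly at midpoints (resolving the resulting distance ties adversarially in the preference profiles), whereas you place everything on scaled coordinate axes and avoid ties by taking $t/s = r < 2$ strictly, at the price of a second limit $r \uparrow 2$.
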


%\begin{proofsketch}
%	We construct an instance in Euclidean space on a $(\unusedvar + 1)-$dimensional hyper-simplex with $\unusedvar + 1$ vertices. Let the center of it be denoted by $\mathsf{cent}$. We place one candidate at $\mathsf{cent}$, and for each vertex of the hyper-simplex we place a voter at the midpoint between $\mathsf{cent}$ and that vertex, and her top candidate at that vertex. Importantly, the candidate at $\mathsf{cent}$ is optimal but not the representative of any group, yielding the desired lower bound. An illustration for $\unusedvar = 2$ is provided in \Cref{fig:euc-randrand-sketch}.
%\end{proofsketch}

\begin{proof}
	Consider any \randrand\ mechanism $\mech$.
	Let $\unusedvar$ be a positive integer. Consider an instance in $(\unusedvar + 1)-$dimensional Euclidean space, ${\mathbb{R}}^{\unusedvar+1}$, with $\unusedvar + 2$ candidates,  denoted $c_1, c_2, \dots, c_{\unusedvar+2}$, and $k = \unusedvar + 1$ groups, each with a single voter $v_i$ for $1 \le i \le \unusedvar + 1$.
	We construct the instance as follows:
	\begin{itemize}
		\item Let $q_i$ be the point in $\mathbb{R}^{\unusedvar+1}$ whose $i$-th coordinate is 1 and all other coordinates are 0, for $1 \leq i \leq \unusedvar+1$.
		
		\item Place candidate $c_i$ at point $q_i$ for each $1 \leq i \leq \unusedvar+1$.
		
		%\item Place candidate $c_{\unusedvar+2}$ at .
		%\item The first $\unusedvar + 1$ candidates $c_1, c_2, \dots, c_{\unusedvar+1}$ are placed at the tips of the standard basis vectors. 
		%For example, in 3D space ($\unusedvar=2$), the candidates are located as follows: $c_1$ at $(1,0,0)$, $c_2$ at $(0,1,0)$, and $c_3$ at $(0,0,1)$.
		
		\item The final candidate, 
		$c_{\unusedvar+2}$, is placed at the centroid of the other candidates; $\left(\frac{1}{\unusedvar+1}, \frac{1}{\unusedvar+1}, \dots, \frac{1}{\unusedvar+1}\right)$.
%		In 3D space, $c_4$ is located at $(\nicefrac{1}{3},\nicefrac{1}{3},\nicefrac{1}{3})$, as shown in .
		
		\item In the $i$-th group,  the single voter $v_i$ is positioned at the midpoint between their corresponding candidate, $c_i$ and the centroid candidate, $c_{\unusedvar+2}$. Indeed, 
		each voter $v_i$ is located at a point where the $i$-th coordinate is $\tfrac{\unusedvar+2}{2(\unusedvar+1)}$ and all other coordinates are $\tfrac{1}{2(\unusedvar+1)}$. 
%		For example, in 3D space, $v_1$ is located at $(\nicefrac{2}{3},\nicefrac{1}{6},\nicefrac{1}{6})$.
%In the $i$-th group, there is a single voter, denoted by $v_i$, located at the point $ \left(\frac{1}{2(\unusedvar+1)}, \frac{1}{2(\unusedvar+1)}, ..., \frac{1}{2(\unusedvar+1)}, \frac{\unusedvar+2}{2(\unusedvar+1)}, \frac{1}{2(\unusedvar+1)}, ..., \frac{1}{2(\unusedvar+1)}\right)$.
		Note that each voter’s preference profile is structured so that
		the top-ranked candidate of voter $v_i$ is $c_i$, consistent with the underlying Euclidean space.
	\end{itemize} 
In particular, in 3D space ($\unusedvar = 2$), the instance lies within an equilateral triangle with vertices $(1,0,0)$, $(0,1,0)$, and $(0,0,1)$, as illustrated in \Cref{fig:euc-randrand}.
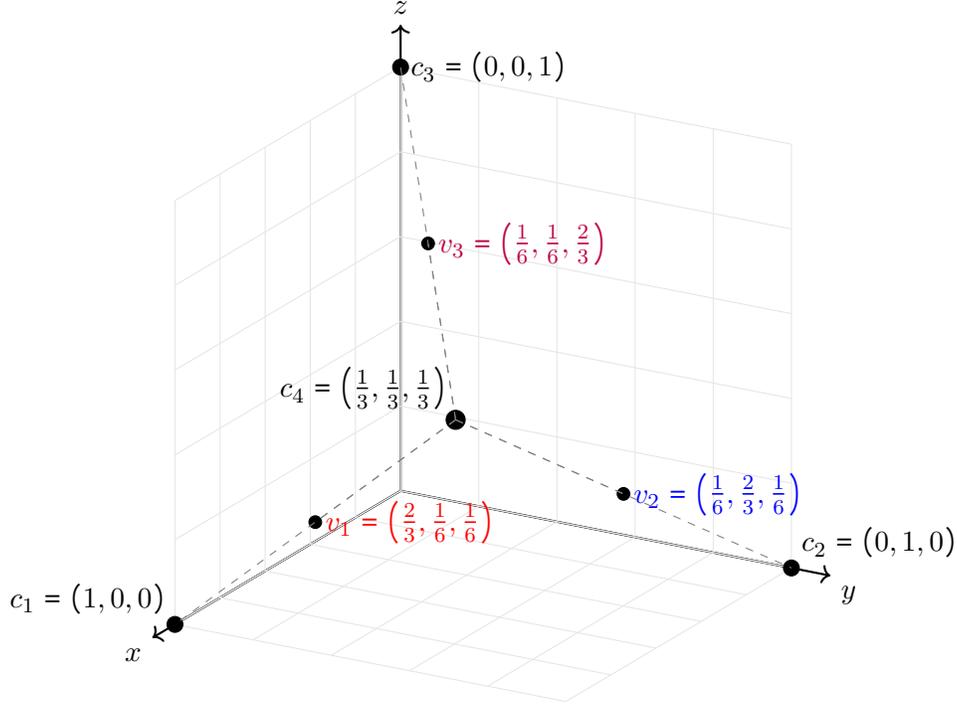
\begin{figure}[t]
	\centering
	\tdplotsetmaincoords{70}{120}
\begin{tikzpicture}[tdplot_main_coords, scale=6]
	
	% Axes
	\draw[->, thick] (0,0,0) -- (1.1,0,0) node[anchor=north east]{$x$};
	\draw[->, thick] (0,0,0) -- (0,1.1,0) node[anchor=north west]{$y$};
	\draw[->, thick] (0,0,0) -- (0,0,1.1) node[anchor=south]{$z$};
	
	% Grid color
	\colorlet{gridcolor}{gray!20}
	
	% xy-plane
	\foreach \x in {0,0.2,...,1}
	\draw[gridcolor] (\x,0,0) -- (\x,1,0);
	\foreach \y in {0,0.2,...,1}
	\draw[gridcolor] (0,\y,0) -- (1,\y,0);
	
	% yz-plane
	\foreach \y in {0,0.2,...,1}
	\draw[gridcolor] (0,\y,0) -- (0,\y,1);
	\foreach \z in {0,0.2,...,1}
	\draw[gridcolor] (0,0,\z) -- (0,1,\z);
	
	% zx-plane
	\foreach \x in {0,0.2,...,1}
	\draw[gridcolor] (\x,0,0) -- (\x,0,1);
	\foreach \z in {0,0.2,...,1}
	\draw[gridcolor] (0,0,\z) -- (1,0,\z);
	
	% Candidates
	\coordinate (C1) at (1,0,0);
	\coordinate (C2) at (0,1,0);
	\coordinate (C3) at (0,0,1);
	\coordinate (C4) at (1/3,1/3,1/3);  % center
	
	\filldraw[black]  (C1) circle (0.5pt) node[anchor=south east] {\textcolor{black}{$c_1 = (1,0,0)$}};
	\filldraw[black]  (C2) circle (0.5pt) node[anchor=south west] {\textcolor{black}{$c_2 = (0,1,0)$}};
	\filldraw[black]  (C3) circle (0.5pt) node[anchor=west] {\textcolor{black}{$c_3 = (0,0,1)$}};
	\filldraw[black] (C4) circle (0.6pt) node[anchor=south east] {\textcolor{black}{$c_4 = \left(\tfrac{1}{3},\tfrac{1}{3},\tfrac{1}{3}\right)$}};
	
	% Voters (midpoints between center and each candidate)
	\coordinate (V1) at (2/3,1/6,1/6);
	\coordinate (V2) at (1/6,2/3,1/6);
	\coordinate (V3) at (1/6,1/6,2/3);
	
	\filldraw[black] (V1) circle (0.4pt) node[anchor=west] {\textcolor{red}{$v_1 = \left(\tfrac{2}{3},\tfrac{1}{6},\tfrac{1}{6}\right)$}};
	\filldraw[black] (V2) circle (0.4pt) node[anchor=west] {\textcolor{blue}{$v_2 = \left(\tfrac{1}{6},\tfrac{2}{3},\tfrac{1}{6}\right)$}};
	\filldraw[black] (V3) circle (0.4pt) node[anchor=west] {\textcolor{purple}{$v_3 = \left(\tfrac{1}{6},\tfrac{1}{6},\tfrac{2}{3}\right)$}};
	
	% Dashed lines from center to candidates and voters
	\foreach \P in {C1,C2,C3,V1,V2,V3}
	\draw[dashed, gray] (C4) -- (\P);
	
\end{tikzpicture}
	\caption{A $3-dimensional$ Euclidean model ($\unusedvar = 2$) illustrating the geometric lower bound construction for the \randrand\ mechanisms under the \avgavg\, objective. Candidates $c_1$, $c_2$, and $c_3$ are located at the unit basis vectors, with  $c_4 = \left(\tfrac{1}{3}, \tfrac{1}{3}, \tfrac{1}{3}\right)$ at the centroid of the triangle they form. Voters $v_1$, $v_2$, and $v_3$ are positioned at the midpoints between the centroid $c_4$ and their top-ranked candidate $c_i$. Different group voters are indicated via distinct colors.}
	\label{fig:euc-randrand}
\end{figure}

For all $1 \leq i \leq \unusedvar + 1$, we have
\begin{align*}
	\dis{c_i}{v_i} &= \dis{c_{\unusedvar+2}}{v_i} \\
	&= \sqrt{\left(\frac{1}{2(\unusedvar + 1)}\right)^2 \unusedvar + \left(\frac{\unusedvar}{2(\unusedvar + 1)}\right)^2} \\
	&= \sqrt{\frac{\unusedvar}{4(\unusedvar + 1)}}.
\end{align*} 
Moreover, for all $1 \leq i,j \leq \unusedvar + 1$ (where $i \neq j$), we have
\begin{align*}
	\dis{c_i}{v_j} &= \sqrt{\left(\frac{1}{2(\unusedvar + 1)}\right)^2 (\unusedvar - 1) + \left(\frac{\unusedvar + 2}{2(\unusedvar + 1)}\right)^2 + \left(\frac{2\unusedvar + 1}{2\unusedvar + 2}\right)^2} \\
	&= \sqrt{\frac{5\unusedvar + 4}{4\unusedvar + 4}}.
\end{align*}
By the definition of the \avgavg\, objective, the cost of each candidate is the average distance to all $\unusedvar+1$ voters. Thus, we conclude that
\begin{align*}
	&\cost(c_{\unusedvar+2}) = \frac{1}{2} \sqrt{\frac{\unusedvar}{\unusedvar+1}}, \\
	&\cost(c_i) = \frac{\unusedvar\sqrt{ \frac{5\unusedvar+4}{4\unusedvar+4}} + \frac{1}{2} \sqrt{\frac{\unusedvar}{\unusedvar+1}}}{\unusedvar+1} \quad (1 \leq i \leq \unusedvar+1).
\end{align*}

Clearly, the optimal candidate is $c_{\unusedvar+2}$. According to \Cref{obs:singlevoter}, the representative of the $i$-th group is candidate $c_i$. Finally, the mechanism selects the winner from among the candidates $c_1, c_2, \dots, c_{\unusedvar+1}$. A lower bound on the distortion of the mechanism $\mech$ is obtained as follows $(1 \leq i \leq \unusedvar+1)$:
\begin{align*}
	\distortion\left( \mech \right) &\geq \frac{\cost(c_i)}{\cost(c_{\unusedvar+2})} \\
	& = \frac{\frac{\unusedvar}{\unusedvar+1}\sqrt{ \frac{5\unusedvar+4}{4\unusedvar+4}} + \frac{1}{2(\unusedvar+1)} \sqrt{\frac{\unusedvar}{\unusedvar+1}}}{\frac{1}{2} \sqrt{\frac{\unusedvar}{\unusedvar+1}}} .
\end{align*}
As $\unusedvar \to \infty$, the ratio approaches $\sqrt{5} \approx 2.236$.
Therefore, for any $\varepsilon > 0$, we can construct an instance with distortion greater than $\sqrt{5} - \varepsilon$.
\end{proof}

\begin{theorem}
	For the \avgavg\, objective in Euclidean space, the distortion of any \randdet\ mechanism is at least $2+\sqrt{5}-\varepsilon$, for every constant $\varepsilon > 0$.
 \label{th:euc-avgavg-randdet}
\end{theorem}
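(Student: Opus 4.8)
The plan is to transplant the general-metric construction of \Cref{th:randdet-avgavg-lower} into Euclidean space, in the same spirit that \Cref{th:euc-avgavg-randrand} realizes the \randrand\ bound on a simplex. First I fix an ordering $\sigma$ of the $2m$ candidates and, via \Cref{pr:bias-tournamnet-indegree}, let $c_{2m}$ be a candidate of in-degree at least $\lceil (2m-1)/2\rceil = m$ in $\tour{\fin}{\candidates}{\sigma}$, with $c_1,\dots,c_m$ among the candidates pointing to it; the remaining $c_{m+1},\dots,c_{2m-1}$ serve as ``filler'' candidates. The $m$ groups are $g_i=\{v_{2i-1},v_{2i}\}$, and I will arrange each group so that its two voters present exactly the two-voter election defining the edge $c_i\to c_{2m}$, which forces $\fin$ to return $c_i$ as the representative of $g_i$.

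For the embedding I work in $\mathbb{R}^{m+1}$: place the loser $c_{2m}$ at the origin and each defeater $c_i$ at $2e_i$ for $1\le i\le m$, so the defeaters lie on mutually orthogonal rays through the optimal candidate. In group $g_i$ I put $v_{2i-1}$ at the origin (coincident with $c_{2m}$) and $v_{2i}$ at the midpoint $e_i$ of $c_i$ and $c_{2m}$, and I place all fillers at a common point far out along the spare coordinate $e_{m+1}$ (this extra dimension is exactly why $\mathbb{R}^{m+1}$ is needed). From $v_{2i-1}$ the unique nearest candidate is $c_{2m}$ with all defeaters tied next at distance $2$ and the fillers strictly farther; from $v_{2i}$ the two nearest are $c_i$ and $c_{2m}$ (tied at distance $1$), with every other defeater strictly farther (at distance $\sqrt5$) and the fillers farther still. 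Breaking the resulting ties to match the promoted orders (which the metric permits among equidistant candidates), the voters of $g_i$ realize $\movetofirst{\movetofirst{\sigma}{c_i}}{c_{2m}}$ and $\movetofirst{\movetofirst{\sigma}{c_{2m}}}{c_i}$, precisely the profiles defining $c_i\to c_{2m}$. Hence $\fin$ selects $c_i$ in $g_i$, and since $\fov$ must return one of the representatives $\{c_1,\dots,c_m\}$, the winner is always some defeater.

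It remains to compute the \avgavg\ costs. Orthogonality gives $\dis{v_{2i-1}}{c_{2m}}=0$, $\dis{v_{2i}}{c_{2m}}=\dis{v_{2i}}{c_i}=1$, $\dis{v_{2i-1}}{c_i}=2$, and, for $j\ne i$, $\dis{v_{2j-1}}{c_i}=2$ and $\dis{v_{2j}}{c_i}=\sqrt5$. Therefore $\cost(c_{2m})=\tfrac1m\sum_i\tfrac12(0+1)=\tfrac12$, while $\cost(c_i)=\tfrac1m\big[\tfrac12(2+1)+(m-1)\tfrac12(2+\sqrt5)\big]$, which is the same value for every $i$ by the coordinate symmetry; consequently the expected cost of any $\fov$ equals $\cost(c_1)$. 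Since the fillers are far (hence costly) and every defeater costs more than $\tfrac12$, the optimum is $c_{2m}$, and $\distortion(\mech)\ge \cost(c_1)/\cost(c_{2m})\to 2+\sqrt5$ as $m\to\infty$; choosing $m$ large enough yields the bound $2+\sqrt5-\varepsilon$.

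The main obstacle is the simultaneous realizability of all $m$ Bias-Tournament profiles inside a single Euclidean instance. Unlike the single-voter \randrand\ construction of \Cref{th:euc-avgavg-randrand}, each group here needs two voters whose rankings agree with one global $\sigma$ below their top two, so I must check that from every voter the pair $\{c_i,c_{2m}\}$ is jointly closest while the remaining candidates split into $\sigma$-consistent tie classes (the other defeaters, then the fillers). The orthogonal, fully symmetric placement is what makes these tie classes identical across all groups and lets a single $\sigma$ work everywhere; the delicate point is tuning the geometry so that $c_{2m}$ is strictly closest from the ``centroid'' voters and $\{c_i,c_{2m}\}$ jointly closest from the ``branch'' voters, without ever accidentally promoting a filler or a rival defeater above them.
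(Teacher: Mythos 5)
Your high-level plan is the same as the paper's: use the Bias Tournament of \Cref{th:randdet-avgavg-lower} and realize its two-voter profiles in a simplex-like orthogonal embedding, in the spirit of \Cref{th:euc-avgavg-randrand}; your distance and cost computations for the loser and the defeaters are correct and give the right limit $2+\sqrt{5}$. The genuine gap is the placement of the $m-1$ filler candidates. You put them at a common point \emph{strictly farther} from every voter than the rival defeaters. But the two profiles defining the edge $c_i \to c_{2m}$ in $\tour{\fin}{\candidates}{\sigma}$ are $\movetofirst{\movetofirst{\sigma}{c_{2m}}}{c_i}$ and $\movetofirst{\movetofirst{\sigma}{c_i}}{c_{2m}}$: below the top two they follow $\sigma$ \emph{exactly}. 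Metric consistency forces every candidate ranked earlier to be no farther than every candidate ranked later, so your geometry realizes these profiles only if $\sigma$ ranks all $m$ defeaters ahead of all $m-1$ fillers, i.e., only if the loser's chosen in-neighbors form a $\sigma$-initial segment of the non-loser candidates. You cannot arrange this: $\sigma$ must be fixed \emph{before} the tournament $\tour{\fin}{\candidates}{\sigma}$ is determined, and which candidates beat which is controlled by the adversarial rule $\fin$. Concretely, $\fin$ can induce a tournament in which the sub-tournament on the $\sigma$-first $m+1$ candidates is (near-)regular, so each of them beats at least one other among the first $m+1$, and every $\sigma$-later candidate beats the $\sigma$-first candidate; then \emph{every} candidate $c$ defeats at least one of the $\sigma$-first $m$ candidates other than itself, so no choice of loser admits $m$ in-neighbors preceding all fillers, and your instance is unrealizable. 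Tie-breaking ``among equidistant candidates'' cannot repair this, precisely because your fillers are not equidistant with the rival defeaters.

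The fix is exactly the design choice the paper makes: place the fillers so that, from every voter, they are at the \emph{same} distance as the rival defeaters. In your coordinates this means putting them at $2e_{m+1}$ rather than far out: they are then at distance $2$ from each origin voter and $\sqrt{1+4}=\sqrt{5}$ from each midpoint voter, identical to any rival defeater, so all candidates below each voter's top two form a single tie class and \emph{every} ordering $\sigma$ is consistent with the metric. (The paper achieves the same effect by putting the loser at the centroid of the simplex, the defeaters at corners $e_1,\dots,e_m$, and the fillers at the extra corner $e_{m+1}$, which is equidistant with the other corners from the centroid and from every midpoint voter.) With this one change your cost computation is untouched---the fillers then cost $(2+\sqrt{5})/2 > 1/2$, so $c_{2m}$ remains optimal---and the argument goes through; at that point your proof coincides with the paper's.
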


%\begin{proofsketch}
%	We construct an instance in $(m+1)-$dimensional Euclidean space using a hyper-simplex with $m+1$ vertices. Let the center of the hyper-simplex be denoted by $\mathsf{cent}$. We consider $2m$ candidates and use the \emph{Bias Tournament} to identify a candidate with in-degree at least $m$ to be placed at $\mathsf{cent}$, and $m$ candidates with out-degree towards her to be located at distinct vertices of the hyper-simplex. The remaining $m - 1$ candidates are placed at the remaining vertex of the hyper-simplex.
%		For each vertex associated with a winning candidate, we form a group of two voters: one located at the midpoint between $\mathsf{cent}$ and the corresponding vertex, and the other located exactly at $\mathsf{cent}$. By construction, the candidate at $\mathsf{cent}$ is not the representative of any group, yields the desired lower bound of $2 + \sqrt{5}$ on the distortion.
%\end{proofsketch}

\begin{proof}
Consider a \randdet\ mechanism $\mech = (\fin, \fov)$, a set of $2m$ candidates $\candidates = \{ c_1, c_2, \ldots, c_{2m} \} $, and an arbitrary ordering $\sigma$ over them. 
By \Cref{pr:bias-tournamnet-indegree}, the tournament \( \tour{\fin}{\candidates}{\sigma} \), must have a candidate with in-degree at least \( \left\lceil \frac{2m - 1}{2} \right\rceil = m \). Without loss of generality, let  \( c_{m+1} \) be such a candidate and let \( c_1, c_2, \ldots, c_m \) be \( m \) candidates that have directed edges toward \( c_{m+1} \) in the tournament.
We now construct the following instance with $k = m$ groups in $(m + 1)-$dimensional Euclidean space:
\begin{itemize}
	\item Let $q_i$ be the point in $\mathbb{R}^{m+1}$ whose $i$-th coordinate is 1 and all other coordinates are 0, for $1 \leq i \leq m+1$.
	
	\item Place candidate $c_i$ at point $q_i$ for each $1 \leq i \leq m$ and candidate $c_{m+1}$ at the centroid $\left( \frac{1}{m+1}, \frac{1}{m+1}, \dots, \frac{1}{m+1} \right)$.
	
	\item Place candidate $c_i$ at point $q_{m+1}$ for each $m+2 \leq i \leq 2m$.
	%\item Place candidate $c_{m+1}$ at the centroid $\left( \frac{1}{m+1}, \frac{1}{m+1}, \dots, \frac{1}{m+1} \right)$.
	\item In the $i$-th group ($1 \leq i \leq m$), there are two voters: 
	\begin{enumerate}
		\item Voter $v_{2i-1}$ is located at the centroid, 
%		$\left( \frac{1}{m+1}, \frac{1}{m+1}, \dots, \frac{1}{m+1} \right)$, 
which is the same position as candidate $c_{m+1}$.
		\item Voter $v_{2i}$ is located exactly at the midpoint between candidates $c_{m+1}$ and $c_i$, with coordinates equal to $\tfrac{m+2}{2(m+1)}$ in the $i$-th dimension and $\tfrac{1}{2(m+1)}$ in all other dimensions.
%		\[
%		\left( \tfrac{1}{2(m+1)}, \tfrac{1}{2(m+1)}, \dots, \tfrac{m+2}{2(m+1)}, \dots, \tfrac{1}{2(m+1)} \right),
%		\]
%		where the $i$-th coordinate is $\tfrac{m+2}{2(m+1)}$ and all other coordinates are $\tfrac{1}{2(m+1)}$. This point lies at the midpoint between candidates $c_{m+1}$ and $c_i$($1 \leq i \leq m$).
	\end{enumerate}
\item The ordinal preferences of the $v_{2i-1}$ and $v_{2i}$ are defined as
$\pi_{2i-1} = \movetofirst{\sigma}{\movetofirst{c_{i}}{c_{m+1}}}$
and 
$\pi_{2i} = \movetofirst{\sigma}{\movetofirst{c_{m+1}}{c_{i}}}$.
These preferences are consistent with the underlying Euclidean space:
\begin{enumerate}
	\item The distance from \( v_{2i-1} \) to all candidates except \( c_{m+2} \) is equal.
	\item The distance from \( v_{2i-1} \) to \( c_{m+2} \) is zero.
	\item Voter \( v_{2i} \) is closer to candidates \( c_{m+1} \) and \( c_i \) than to any other candidates, and is equidistant from all remaining ones.
	\item The distance from \( v_{2i} \) to candidates \( c_{m+1} \) and \( c_i \) is equal.
\end{enumerate}
\end{itemize} 
When $m = 2$ the instance lies within an equilateral triangle with vertices at $(1,0,0)$, $(0,1,0)$, and $(0,0,1)$, as shown in \Cref{fig:euc-randdet}.
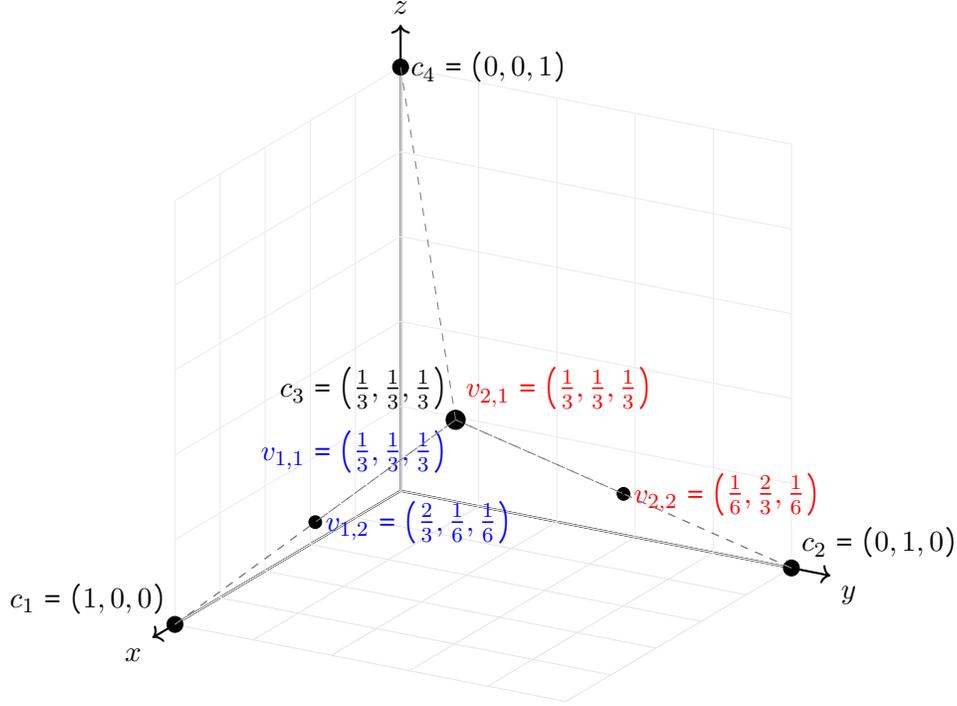
\begin{figure}[t]
	\centering
	\tdplotsetmaincoords{70}{120}

\begin{tikzpicture}[tdplot_main_coords, scale=6]
	
	% Coordinates
	\coordinate (C1) at (1,0,0);
	\coordinate (C2) at (0,1,0);
	\coordinate (C3) at (1/3,1/3,1/3);
	\coordinate (C4) at (0,0,1);
	\coordinate (V11) at (1/3,1/3,1/3); % v_{1,1}
	\coordinate (V21) at (1/3,1/3,1/3); % v_{2,1}
	\coordinate (V12) at ({(1+1/3)/2}, {(0+1/3)/2}, {(0+1/3)/2}); % v_{1,2}
	\coordinate (V22) at ({(0+1/3)/2}, {(1+1/3)/2}, {(0+1/3)/2}); % v_{2,2}
	
	% Axes
	\draw[->, thick] (0,0,0) -- (1.1,0,0) node[anchor=north east]{$x$};
	\draw[->, thick] (0,0,0) -- (0,1.1,0) node[anchor=north west]{$y$};
	\draw[->, thick] (0,0,0) -- (0,0,1.1) node[anchor=south]{$z$};
	
	% Grids
	\foreach \x in {0,0.2,...,1} \draw[gray!20] (\x,0,0) -- (\x,1,0);
	\foreach \y in {0,0.2,...,1} \draw[gray!20] (0,\y,0) -- (1,\y,0);
	\foreach \y in {0,0.2,...,1} \draw[gray!15] (0,\y,0) -- (0,\y,1);
	\foreach \z in {0,0.2,...,1} \draw[gray!15] (0,0,\z) -- (0,1,\z);
	\foreach \x in {0,0.2,...,1} \draw[gray!15] (\x,0,0) -- (\x,0,1);
	\foreach \z in {0,0.2,...,1} \draw[gray!15] (0,0,\z) -- (1,0,\z);

	% Points
	\filldraw[black]   (C1) circle (0.5pt) node[anchor=south east] {\textcolor{black}{$c_1 = (1,0,0)$}};
	\filldraw[black]   (C2) circle (0.5pt) node[anchor=south west] {\textcolor{black}{$c_2 = (0,1,0)$}};
	\filldraw[black]  (C3) circle (0.6pt) node[anchor=south east] {\textcolor{black}{$c_3 = \left(\tfrac{1}{3},\tfrac{1}{3},\tfrac{1}{3}\right)$}};
	\filldraw[black](C4) circle (0.5pt) node[anchor=west] {\textcolor{black}{$c_4 = (0,0,1)$}};
	
	\filldraw[black]  (V11) circle (0.4pt) node[anchor=north east] {\textcolor{blue}{$v_{1,1} = \left(\tfrac{1}{3},\tfrac{1}{3},\tfrac{1}{3}\right)$}};
	\filldraw[black]  (V21) circle (0.4pt) node[anchor=south west] {\textcolor{red}{$v_{2,1} = \left(\tfrac{1}{3},\tfrac{1}{3},\tfrac{1}{3}\right)$}};
	\filldraw[black](V12) circle (0.4pt) node[anchor=west] {\textcolor{blue}{$v_{1,2} = \left(\tfrac{2}{3},\tfrac{1}{6},\tfrac{1}{6}\right)$}};
	\filldraw[black](V22) circle (0.4pt) node[anchor=west] {\textcolor{red}{$v_{2,2} = \left(\tfrac{1}{6},\tfrac{2}{3},\tfrac{1}{6}\right)$}};
	
	% Dashed lines
	\draw[dashed, gray] (C1) -- (C3);
	\draw[dashed, gray] (C2) -- (C3);
	\draw[dashed, gray] (C4) -- (C3);
	\draw[dashed, gray] (C1) -- (V12);
	\draw[dashed, gray] (C3) -- (V12);
	\draw[dashed, gray] (C2) -- (V22);
	\draw[dashed, gray] (C3) -- (V22);
	
\end{tikzpicture}
	\caption{An illustration of the constructed instance when $m = 2$. The candidates are positioned at the corners and centroid of the 3D simplex (i.e., the equilateral triangle embedded in $\mathbb{R}^3$). Candidate $c_3$ is placed at the centroid, representing the candidate with high in-degree in $\tour{\fin}{\candidates}{\sigma}$. Each group contains two voters: $v_{2i-1}$ is located at the centroid, while $v_{2i}$ is placed at the midpoint between $c_3$ and $c_i$ for $i = 1,2$. Different group voters are indicated via distinct colors.}
	\label{fig:euc-randdet}
\end{figure}
For all $1 \leq i,j \leq m$, we have
\begin{align*}
	\dis{c_i}{c_{m+1}} &= \dis{c_i}{v_{2j-1}} 
	\\&= \sqrt{\left( \frac{1}{m+1} \right)^2 m + \left( \frac{m}{m+1} \right)^2} 
	\\&= \sqrt{\frac{m}{m+1}},
\end{align*}	
for all $1 \leq i,j \leq m$ (where $i \neq j$), we have
\begin{align*}
	\dis{c_i}{v_{2j}} &= \sqrt{\left( \frac{1}{2(m+1)} \right)^2 (m-1) + \left( \frac{m+2}{2(m+1)} \right)^2 + \left( \frac{2m+1}{2m+2} \right)^2} \\
	&= \sqrt{\frac{5m+4}{4m+4}},
\end{align*}	
and for all $1 \leq i \leq m$, we have
\begin{align*}	
	\dis{c_i}{v_{2i}} &= \sqrt{\left( \frac{1}{2(m+1)} \right)^2 m + \left( \frac{m}{2(m+1)} \right)^2} \\&= \sqrt{\frac{m}{4(m+1)}}.
\end{align*}
By the definition of the \avgavg\, objective, we conclude that
\begin{align*}
	&\cost(c_{m+1}) = \tfrac{m\left(\tfrac{0 + \sqrt{\tfrac{m}{4(m+1)}}}{2}\right)}{m} = \frac{1}{4} \sqrt{\frac{m}{m+1}}, \\
	&\cost(c_i) = \frac{ \tfrac{ \sqrt{\frac{m}{m+1}} + \sqrt{\frac{5m+4}{4m+4}}}{2} (m-1) + \tfrac{ \sqrt{\frac{m}{m+1}} + \sqrt{\frac{m}{4(m+1)}} }{2}}{m} &(1 \leq i \leq m).
\end{align*}
Clearly, the optimal candidate is $c_{m+1}$. By the definition of $\tour{\fin}{\candidates}{\sigma}$, the representative of group $i$ ($1 \leq i \leq m$) is $c_i$. Therefore, the mechanism selects the final winner from among the candidates $c_1, c_2, \dots, c_m$. 
A lower bound on the distortion of the mechanism $\mech$ is obtained as follows ($1 \leq i \leq m$):
\begin{align*}
	\distortion\left(\mech\right) &\geq \frac{\cost(c_i)}{\cost(c_{m+1})} \\
	&= 2 + \frac{m-1}{m} \sqrt{\frac{5m+4}{m}} + \frac{1}{m} &(1 \leq i \leq m),
\end{align*}
%and we have
%\begin{align*}
%	\lim_{m \to \infty} 2 + \frac{m-1}{m} \sqrt{\frac{5m+4}{m}} + \frac{1}{m} &= 2 + \sqrt{5}.
%\end{align*}
As $m \to \infty$, the ratio approaches $2+\sqrt{5} \approx 4.236$. Therefore, for any $\varepsilon > 0$, we can construct an instance with distortion greater than $2 + \sqrt{5} - \varepsilon$.
\end{proof}

	\section{Discussion and Open Problems}
	% discussion

In this paper, we have initiated the study of metric distortion in single-winner distributed voting under randomized mechanisms (\randrand\ and \randdet) for many different objectives. We also have improved upon previous results for deterministic mechanisms (\detdet). 

Although our work presents an almost complete picture in the distortion of distributed voting problem, it reveals several promising directions for future research. A significant challenge about our work leaves open lies in analyzing the \detrand\ mechanisms, where random decisions in the first stage are followed by deterministic ones in the second. Our primary tool, the Bias Tournament technique, is incompatible with the randomized first stage of \detrand.
Currently, our understanding is  confined to basic results inherited from \randrand\ (for the lower bounds) and \detdet\ (for the upper bounds). Therefore, developing an analytical approach to precisely resolve the distortion of the \detrand\ mechanisms represents a compelling direction for future work.

Within the scope of our work, another possible direction could be to close the remaining narrow gaps between the lower and upper bounds presented in \Cref{tab:ourresults}, particularly for the \avgavg\ and \avgmax\ objectives in \detdet, as well as the \avgavg\ and \avgmax\ objective in \randrand.
Another direction is to investigate more structured spaces, such as line metrics and Euclidean spaces. While many of our results also hold on the line (i.e., one-dimensional Euclidean space), some of the bounds we obtain are not tight—or even close to tight—when considered in the context of line metrics or Euclidean spaces. Specifically, we can study the avg-avg cost function for rand-rand, rand-det, and det-det, and analyze them in Euclidean and line metrics. Additionally, we can study the max-avg cost function in Euclidean and line metrics for det-det. The bounds in these structured spaces may differ from those in general metric spaces.
Another natural extension is to investigate distributed mechanisms in a cardinal setting, where agents have access to exact distances, instead of solely the ordinal rankings induced by those distances.

Going beyond the single-winner setting, one could study the distortion of distributed mechanisms that select committees comprising a specified number of alternatives. 
Another intriguing direction for future research is to investigate how agents' strategic behavior impacts distributed distortion. The goal could be to understand whether it is possible to design distributed mechanisms that are both strategyproof and capable of achieving low distortion.

	\newpage
	\bibliographystyle{apalike}
	\bibliography{ref}

@article{filos2020distortion,
	title={The distortion of distributed voting},
	author={Filos-Ratsikas, Aris and Micha, Evi and Voudouris, Alexandros A},
	journal={Artificial Intelligence},
	volume={286},
	pages={103343},
	year={2020}
}

@article{filos2024revisiting,
	title={Revisiting the distortion of distributed voting},
	author={Filos-Ratsikas, Aris and Voudouris, Alexandros A},
	journal={Theory of Computing Systems},
	volume={68},
	number={5},
	pages={1138--1159},
	year={2024}
}

@article{anshelevich2022distortion,
	title={The distortion of distributed metric social choice},
	author={Anshelevich, Elliot and Filos-Ratsikas, Aris and Voudouris, Alexandros A},
	journal={Artificial Intelligence},
	volume={308},
	pages={103713},
	year={2022}
}

@article{voudouris2023tight,
	title={Tight distortion bounds for distributed metric voting on a line},
	author={Voudouris, Alexandros A},
	journal={Operations Research Letters},
	volume={51},
	number={3},
	pages={266--269},
	year={2023}
}

@inproceedings{amanatidis2025metric,
	title={Metric Distortion Under Group-Fair Objectives},
	author={Amanatidis, Georgios and Anshelevich, Elliot and Jerrett, Christopher and Voudouris, Alexandros A},
	booktitle={Proceedings of International Symposium on Algorithmic Game Theory (SAGT)},
	pages={465--482},
	year={2025}
}

@inproceedings{chan2021mechanism,
	title={Mechanism Design for Facility Location Problems: A Survey},
	author={Chan, Hau and Filos-Ratsikas, Aris and Li, Bo and Li, Minming and Wang, Chenhao},
	booktitle={Proceedings of International Joint Conference on Artificial Intelligence (IJCAI)},
	pages={4356--4365},
	year={2021}
}

@article{voudouris2025metric,
	title={Metric distortion of obnoxious distributed voting},
	author={Voudouris, Alexandros A},
	journal={Information Processing Letters},
	volume={189},
	pages={106559},
	year={2025}
}

@inproceedings{gkatzelis2020resolving,
	title={Resolving the optimal metric distortion conjecture},
	author={Gkatzelis, Vasilis and Halpern, Daniel and Shah, Nisarg},
	booktitle={Proceedings of Symposium on Foundations of Computer Science (FOCS)},
	pages={1427--1438},
	year={2020}
}

@inproceedings{procaccia2006distortion,
	title={The distortion of cardinal preferences in voting},
	author={Procaccia, Ariel D and Rosenschein, Jeffrey S},
	booktitle={Proceedings of International Workshop on Cooperative Information Agents (CIA)},
	pages={317--331},
	year={2006}
}

@article{caragiannis2011voting,
	title={Voting almost maximizes social welfare despite limited communication},
	author={Caragiannis, Ioannis and Procaccia, Ariel D},
	journal={Artificial Intelligence},
	volume={175},
	number={9-10},
	pages={1655--1671},
	year={2011}
}

@article{anshelevich2018approximating,
	title={Approximating optimal social choice under metric preferences},
	author={Anshelevich, Elliot and Bhardwaj, Onkar and Elkind, Edith and Postl, John and Skowron, Piotr},
	journal={Artificial Intelligence},
	volume={264},
	pages={27--51},
	year={2018}
}

@inproceedings{gkatzelis2023best,
	title={Best of both distortion worlds},
	author={Gkatzelis, Vasilis and Latifian, Mohamad and Shah, Nisarg},
	booktitle={Proceedings of ACM Conference on Economics and Computation (EC)},
	pages={738--758},
	year={2023}
}

@inproceedings{kizilkaya2022plurality,
	title={Plurality Veto: A Simple Voting Rule Achieving Optimal Metric Distortion},
	author={Kizilkaya, Fatih Erdem and Kempe, David},
	booktitle={Proceedings of International Joint Conference on Artificial Intelligence (IJCAI)},
	pages={349--355},
	year={2022}
}

@inproceedings{charikar2022metric,
	title={Metric distortion bounds for randomized social choice},
	author={Charikar, Moses and Ramakrishnan, Prasanna},
	booktitle={Proceedings of Annual ACM-SIAM Symposium on Discrete Algorithms (SODA)},
	pages={2986--3004},
	year={2022}
}

@article{charikar2024breaking,
	title={Breaking the metric voting distortion barrier},
	author={Charikar, Moses and Ramakrishnan, Prasanna and Wang, Kangning and Wu, Hongxun},
	journal={Journal of the ACM},
	volume={71},
	number={6},
	pages={1--33},
	year={2024}
}

@article{caragiannis2022metric,
	title={The metric distortion of multiwinner voting},
	author={Caragiannis, Ioannis and Shah, Nisarg and Voudouris, Alexandros A},
	journal={Artificial Intelligence},
	volume={313},
	pages={103802},
	year={2022}
}

@inproceedings{jaworski2020evaluating,
	title={Evaluating Committees for Representative Democracies: the Distortion and Beyond.},
	author={Jaworski, Michal and Skowron, Piotr},
	booktitle={Proceedings of International Joint Conference on Artificial Intelligence (IJCAI)},
	pages={196--202},
	year={2020}
}

@article{boutilier2015optimal,
	title={Optimal social choice functions: A utilitarian view},
	author={Boutilier, Craig and Caragiannis, Ioannis and Haber, Simi and Lu, Tyler and Procaccia, Ariel D and Sheffet, Or},
	journal={Artificial Intelligence},
	volume={227},
	pages={190--213},
	year={2015}
}

@inproceedings{ebadian2023explainable,
	title={Explainable and Efficient Randomized Voting Rules},
	author={Ebadian, Soroush and Filos-Ratsikas, Aris and Latifian, Mohamad and Shah, Nisarg},
	booktitle={Proceedings of Conference on Neural Information Processing Systems (NeurIPS)},
	volume={36},
	pages={23034--23046},
	year={2023}
}

@inproceedings{anshelevich2021distortion,
	title={Distortion in Social Choice Problems: The First 15 Years and Beyond},
	author={Anshelevich, Elliot and Filos-Ratsikas, Aris and Shah, Nisarg and Voudouris, Alexandros A},
	booktitle={Proceedings of International Joint Conference on Artificial Intelligence (IJCAI)},
	pages={4294--4301},
	year={2021}
}

@article{filos2024distortion,
	title={The distortion of distributed facility location},
	author={Filos-Ratsikas, Aris and Kanellopoulos, Panagiotis and Voudouris, Alexandros A and Zhang, Rongsen},
	journal={Artificial Intelligence},
	volume={328},
	pages={104066},
	year={2024}
}

@inproceedings{kizilkaya2023generalized,
	title={Generalized veto core and a practical voting rule with optimal metric distortion},
	author={Kizilkaya, Fatih Erdem and Kempe, David},
	booktitle={Proceedings of ACM Conference on Economics and Computation (EC)},
	pages={913--936},
	year={2023}
}

@article{anagnostides2022metric,
	title={Metric-distortion bounds under limited information},
	author={Anagnostides, Ioannis and Fotakis, Dimitris and Patsilinakos, Panagiotis},
	journal={Journal of Artificial Intelligence Research},
	volume={74},
	pages={1449--1483},
	year={2022}
}

@article{anshelevich2017randomized,
	title={Randomized social choice functions under metric preferences},
	author={Anshelevich, Elliot and Postl, John},
	journal={Journal of Artificial Intelligence Research},
	volume={58},
	pages={797--827},
	year={2017}
}

@article{anshelevich2021ordinal,
	title={Ordinal approximation for social choice, matching, and facility location problems given candidate positions},
	author={Anshelevich, Elliot and Zhu, Wennan},
	journal={ACM Transactions on Economics and Computation},
	volume={9},
	number={2},
	pages={1--24},
	year={2021}
}

@inproceedings{munagala2019improved,
	title={Improved metric distortion for deterministic social choice rules},
	author={Munagala, Kamesh and Wang, Kangning},
	booktitle={Proceedings of the 2019 ACM Conference on Economics and Computation (EC)},
	pages={245--262},
	year={2019}
}

@inproceedings{ghodsi2019distortion,
	title={On the distortion value of the elections with abstention},
	author={Ghodsi, Mohammad and Latifian, Mohamad and Seddighin, Masoud},
	booktitle={Proceedings of the AAAI Conference on Artificial Intelligence (AAAI)},
	volume={33},
	number={01},
	pages={1981--1988},
	year={2019}
}

@article{ebadian2024optimized,
	title={Optimized distortion and proportional fairness in voting},
	author={Ebadian, Soroush and Kahng, Anson and Peters, Dominik and Shah, Nisarg},
	journal={ACM Transactions on Economics and Computation},
	volume={12},
	number={1},
	pages={1--39},
	year={2024}
}

@inproceedings{bagheridelouee2024metric,
	title={Metric Distortion Under Public-Spirited Voting},
	author={Bagheridelouee, Amirreza and Nilipour, Marzie and Seddighin, Masoud and Shamsipour, Maziar},
	booktitle={Proceedings of International Conference on Autonomous Agents and Multiagent Systems (AAMAS)},
	pages={2144--2146},
	year={2024}
}

@inproceedings{bedaywi2025distortion,
	title={The distortion of public-spirited participatory budgeting},
	author={Bedaywi, Mark and Flanigan, Bailey and Latifian, Mohamad and Shah, Nisarg},
	booktitle={Proceedings of the AAAI Conference on Artificial Intelligence (AAAI)},
	volume={39},
	number={13},
	pages={13605--13613},
	year={2025}
}

@inproceedings{kanellopoulos2023truthful,
	title={Truthful Two-Facility Location with Candidate Locations},
	author={Kanellopoulos, Panagiotis and Voudouris, Alexandros A and Zhang, Rongsen},
	booktitle={Proceedings of International Symposium on Algorithmic Game Theory (SAGT)},
	pages={365--382},
	year={2023}
}

@inproceedings{feldman2016voting,
	title={On voting and facility location},
	author={Feldman, Michal and Fiat, Amos and Golomb, Iddan},
    booktitle={Proceedings of ACM Conference on Economics and Computation (EC)},
	pages={269--286},
	year={2016}
}

@inproceedings{anari2023distortion,
	title={Distortion in metric matching with ordinal preferences},
	author={Anari, Nima and Charikar, Moses and Ramakrishnan, Prasanna},
	booktitle={Proceedings of ACM Conference on Economics and Computation (EC)},
	pages={90--110},
	year={2023}
}

@inproceedings{latifian2024distortion,
	title={The distortion of threshold approval matching},
	author={Latifian, Mohamad and Voudouris, Alexandros A},
	booktitle={Proceedings of International Joint Conference on Artificial Intelligence (IJCAI)},
	pages={2851--2859},
	year={2024}
}

@article{amanatidis2022few,
	title={A few queries go a long way: Information-distortion tradeoffs in matching},
	author={Amanatidis, Georgios and Birmpas, Georgios and Filos-Ratsikas, Aris and Voudouris, Alexandros A},
	journal={Journal of Artificial Intelligence Research},
	volume={74},
	pages={227--261},
	year={2022}
}

@book{brandt2016handbook,
	title={Handbook of computational social choice},
	author={Brandt, Felix and Conitzer, Vincent and Endriss, Ulle and Lang, J{\'e}r{\^o}me and Procaccia, Ariel D},
	year={2016}
}

@book{arrow2010handbook,
	title={Handbook of social choice and welfare},
	author={Arrow, Kenneth J and Sen, Amartya and Suzumura, Kotaro},
	year={2010}
}

@inproceedings{kempe2020communication,
	title={Communication, distortion, and randomness in metric voting},
	author={Kempe, David},
    booktitle={Proceedings of the AAAI Conference on Artificial Intelligence (AAAI)},
	volume={34},
	number={02},
	pages={2087--2094},
	year={2020}
}
	\appendix
\end{document}